\newtheorem{lemma}{Lemma}
\newtheorem{theorem}{Theorem}
\newtheorem{proposition}{Proposition}
\newtheorem{remark}{Remark}
\newtheorem{example}{Example}
\renewcommand{\arraystretch}{1.5}
\title{Stable partitions for proportional generalized claims problems\thanks{We gratefully acknowledge financial support from the Swiss National Science Foundation (SNSF) through Project 100018$\_$192583. Oihane Gallo gratefully acknowledges additional financial support from the Spanish Government through projects PID2019-107539GB-I00 and PID2021-127119NBI00. We also thank Elena Inarra, Mehmet Karakaya, William Thomson, the associate editor, and two referees for helpful comments and suggestions.}}
\author{Oihane Gallo\thanks{Faculty of Business and Economics, University of Lausanne, 1015 Lausanne, Switzerland; \textit{e-mail}: \href{mailto:oihane.gallo@unil.ch}{\tt oihane.gallo@unil.ch}} \and Bettina Klaus\thanks{\textit{Corresponding author}. Faculty of Business and Economics, University of Lausanne, 1015 Lausanne, Switzerland; \textit{e-mail}: \href{mailto:Bettina.Klaus@unil.ch}{\tt Bettina.Klaus@unil.ch}.}}
\begin{document}
\sloppy
\date{\today}

\maketitle

\begin{abstract}
We consider a set of agents who have claims on an endowment that is not large enough to cover all claims. Agents can form coalitions but a minimal coalition size $\theta$ is required to have positive coalitional funding that is proportional to the sum of the claims of its members. We analyze the structure of stable partitions when coalition members use well-behaved rules to allocate coalitional endowments, e.g., the well-known constrained equal awards rule (CEA) or the constrained equal losses rule (CEL).
For continuous, (strictly) resource monotonic, and consistent rules, stable partitions with (mostly) $\theta$-size coalitions emerge. For CEA and CEL we provide algorithms to construct such a stable partition formed by (mostly) $\theta$-size coalitions.\smallskip

\noindent \textbf{Keywords:} claims problems, coalition formation, stable partitions.\smallskip

\noindent \textbf{JEL codes:} C71, C78, D63, D71, D74.
\end{abstract}

\pagebreak

\section{Introduction}

The formation of coalitions is a widespread aspect of social, economic, or political environments. Agents form coalitions in very different situations in order to achieve some joint benefits. Cooperation between agents is sometimes hampered by the existence of two opposing fundamental forces: on the one hand, the increasing returns to scale, which incentivizes agents to cooperate and, therefore, to form large coalitions and, on the other hand, the heterogeneity of agents, which causes instability and pushes towards the formation of only small coalitions. A partition of agents into coalitions is \textit{stable} if there is no set of agents that would want to form another coalition at which they all would be better off.\smallskip

\cite{gallo2018rationing} introduce \emph{generalized claims problems}\footnote{Note that these authors use the term \textit{coalition formation problems with claims} instead of \textit{generalized claims problems}.} to deal with coalition formation in a bankruptcy framework. A generalized claims problem consists of a group of agents, each of them with a claim, and a set of \textit{coalitional endowments}, one for each possible coalition, which are not sufficient to meet the claims of their members. Coalitional endowments are divided among their members according to a pre-specified rule, which thus is a decisive element of the coalition formation process. Their main result \citep[][Theorem~2]{gallo2018rationing} states that, given a generalized claims problem, there is a stable partition for each coalition formation problem that is induced by a continuous, resource monotonic, and consistent rule.\footnote{In fact, \citet[][Theorem~2]{gallo2018rationing} state that, given a generalized claims problem, there is a stable partition for each coalition formation problem that is induced by a continuous rule if and only if it also satisfies resource monotonicity and consistency. However, the only-if-part of the theorem is not derived from their proofs and, as a consequence, is not proven. In a recent paper, \cite{alcalde2023solidarity} deal with this issue and provide a correction. \label{footnote:GalloInarra}}
\citeauthor{pycia2012stability}, who considers a unified coalition formation model
that includes many-to-one matching problems with externalities, was the first in establishing
a link between consistency and stability \citep[][Footnote~5]{pycia2012stability}.
In this paper, we study the structure of stable partitions under different continuous, resource monotonic, and consistent rules to answer two types of questions: What coalition sizes can emerge? And, who are the coalition partners?\smallskip

The model proposed by \cite{gallo2018rationing} does not impose any restriction on coalitional endowments and, consequently, answering the above questions is not really possible in their general model. In contrast, we consider
$\mathit{\theta}$\emph{-minimal proportional generalized claims problems} where coalitions of size smaller than $\theta\in\mathbb{N}$ have zero endowments and all remaining coalitional endowments are a fixed proportion of the sum of their members' claims. Proportionality is justified in many situations such as the funding of research projects where the budgets are often divided proportionally to funding needs or according to other funding criteria such as project quality.\footnote{Other examples can be found in a bankruptcy situation, where assets have to be allocated proportionally among creditors according to their claims or, in a legislature, where seats are distributed proportionally among the parties according to voting shares.} Moreover, in many situations, institutions are interested in sparking cooperation among a critical mass of agents and hence, in discouraging coalitions that are ``too small'' \citep[see][for another model with a minimal size for coalitions to be productive]{barbera2015meritocracy}.\smallskip

$\theta$-minimal proportional generalized claims problems are a subclass of the class of generalized claims problems studied by \cite{gallo2018rationing} and hence, their results hold. Then, given a $\theta$-minimal proportional generalized claims problem, we first characterize the structure of any possible stable partition when the rule applied satisfies continuity, strict resource monotonicity, and consistency. We show that there are at most $\theta-1$ agents in coalitions of size smaller than $\theta$. Furthermore, for each coalition in the stable partition of size larger than $\theta$, each agent of the coalition receives a proportional payoff (Theorem~\ref{theorem:thetasize-strictRM}). If the rule satisfies resource monotonicity instead of its strict version, we show that a stable partition formed by the maximal possible number of coalitions of size $\theta$ and one coalition (of size smaller than $\theta$) formed by the remaining agents exists (Theorem~\ref{theorem:thetasize}).\smallskip

With the result of Theorem~\ref{theorem:thetasize} as the departure point, we analyze how agents sort themselves into $\theta$-size coalitions under some parametric rules \citep[see][]{young1987dividing,stovall2014asymmetric}. Parametric rules are well-studied in the literature because the payoff of each agent is given by a function that depends only on the claim of the agent and a parameter that is common to all agents. We focus on two well-known parametric rules that represent two egalitarian principles: the constrained equal awards rule  (CEA) and the constrained equal losses rule (CEL). On the one hand, CEA divides the endowment as equally as possible subject to no agent receiving more than her claim (e.g., rationing toilet paper when shortage occurs). On the other hand, CEL divides the losses as equally as possible subject to no agent receiving a negative amount (e.g., equal sacrifice taxation when utility is measured linearly\footnote{The idea of the equal sacrifice principle in taxation is that all tax payers end up sacrificing equally, according to some cardinal utility function. \cite{young1988distributive} provides a characterization of the family of equal-sacrifice rules based on a
few compelling principles and, more recently, \cite{chambers2017taxation} generalize the previous family.}).\smallskip

We propose two algorithms, one for each rule, to find a stable partition formed by (mostly) $\theta$-size coalitions. Under CEA, for didactic reasons, we first consider the case of $\theta=2$ and construct the CEA algorithm that sequentially pairs off either two highest-claims agents or a highest-claim with a lowest-claim agent (Theorem~\ref{theorem:CEA}). Examples of the first type of cooperation are found in social environments where agents tend to join other agents with similar characteristics (positively assortative coalition). In contrast, the second type of cooperation may be interpreted as a transfer of knowledge between agents as happens, for instance, between apprentices and advisors (negatively assortative coalition). For any $\theta>2$, the $\theta$-CEA algorithm generates a stable partition (mostly) formed by coalitions of size $\theta$ by sequentially adding a lowest-claim agent or a highest-claim agent (Theorem~\ref{theorem:CEAtheta}). The $\theta$-CEA algorithm leads to a stable partition that can contain positively assortative coalitions (in which only highest-claims agents are together) or ``mixed coalitions'' (in which lowest-claims as well as highest-claims agents are together). In contrast to the $\theta$-CEA algorithm, the $\theta$-CEL algorithm always produces a positively assortative stable partition, i.e., a partition formed by only positively assortative coalitions by sequentially pairing off $\theta$ lowest-claims agents (Theorem~\ref{theorem:CELtheta}). We would like to point out that, to compute a stable partition, our algorithms require a minimal amount of information: the agents' claims, the endowment of coalition $N$, and possibly, the payoffs of a few coalitions. In particular, it is not necessary to know agents' payoffs in all possible coalitions.\smallskip

Algorithms are often used in coalition formation to prove the existence of core stable partitions and to construct them. \cite{farrell1988partnerships} show the existence of core stable partitions by an algorithmic argument. In a similar way, \cite{BanerjeeKonishiSonmezSCW2001} offer a constructive proof to find the unique core partition whenever the top-coalition property is satisfied. Based on fixed-point methods, \cite{inal2015core} provides an algorithm to find all core stable partitions in a coalition formation setting where not all coalitions are permissible.\footnote{This algorithm is an adaptation of the one introduced in \cite{echenique2007solution} in a matching setting.} More recently, \cite{inal2019existence} constructs an algorithm that finds unique core partitions in those coalition formation problems that satisfy the layered top-coalitions condition.\footnote{This algorithm is an extension of the one in \cite{BanerjeeKonishiSonmezSCW2001}.} \cite{gallo2018rationing} also applies an algorithmic procedure in the proof of their Theorem~1. In the above papers, general hedonic coalition formation problems without an underlying claims structure are considered. On the one hand, this leads to results for more general classes of hedonic coalition formation problems and, on the other hand, agents' complete preferences over coalitions have to be available. In contrast, while restricting attention to a class of hedonic coalition formation problems that are based on claims problems and rules, our algorithms require very little input to produce a stable partition (essentially, only the underlying claims problem and a claims rule is needed); in particular, we show that it is not necessary to know agents' preferences over all coalitions.\smallskip

Next, there is a large number of papers that pay attention to the structure of the coalitions that form. \cite{Becker} and \cite{greenberg1986strong} introduce the notion of assortative coalitions\footnote{Assortativeness is based on an ordering of agents according to a specific variable such as claims, productivity, or location. Alternative terminology includes that of consecutive coalitions.} and, in a matching setting, \cite{shimer2000assortative} introduce notions of positively and negatively assortative matchings. Observe that in both our algorithms assortative coalitions (in terms of claims) may form. The following papers obtain similar results concerning assortative stable coalitions.\smallskip

\cite{barbera2015meritocracy} consider a model in which each agent is endowed with a productivity level and agents can cooperate to perform certain tasks. Each coalition generates an output equal to the sum of its members' productivities. The authors then analyze the formation of coalitions when all agents in a society vote between meritocracy and egalitarianism. They find societies where assortative and non-assortative partitions (in terms of productivity) arise.\smallskip

In a bargaining framework, \cite{pycia2012stability} presents a model in which each agent has a utility function and, for each possible coalition of agents, there is an output to be distributed among its members. He analyzes coalition formation problems induced by different bargaining rules and shows that when agents are endowed with productivity levels and ``when shares are divided by a stability-inducing sharing rule, agents sort themselves into coalitions in a predictably assortative way.'' \cite{pycia2012stability} deals with many-to-one problems and his notion of assortativeness implies that the most productive agents join the most productive firms.\smallskip

Finally, \cite{bogomolnaia2008stability} study societies where agents are located in an interval and form jurisdictions to consume public projects, which are located in the same interval. Agents share their costs equally and they divide transportation costs to the location of the public project based on its distance to each agent. They analyze both core and Nash stable partitions with a focus on assortative and non-assortative (in terms of location) stable jurisdiction structures.\smallskip

Our paper is organized as follows. In Section~\ref{model} we introduce our model. First, we illustrate the class of coalition formation problems we consider with an example (Subsection~\ref{ex1}). Then, in Subsections~\ref{secModel} to \ref{model:final}, we introduce all ingredients needed to define the class of generalized claims problems of \cite{gallo2018rationing}. In particular, we introduce the proportional, the CEA, and the CEL rules and their key properties (continuity, (strict) resource monotonicity, and consistency).
The class of proportional generalized claims problems is introduced in Section~\ref{secResults}. This section also introduces the subclass of $\theta$-minimal proportional generalized claims problems and obtains first results concerning the existence and structure of stable partitions if the underlying rule is continuous, (strictly) resource monotonic, and consistent (Theorems~\ref{theorem:thetasize-strictRM} and \ref{theorem:thetasize}). Sections~\ref{secCEA} and \ref{secCEL} establish the main results of our paper (Theorems~\ref{theorem:CEA}, \ref{theorem:CEAtheta}, and \ref{theorem:CELtheta}): algorithms to construct stable partitions for $\theta$-minimal proportional generalized claims problems under the CEA rule and the CEL rule. We conclude in Section~\ref{secConclusion}, where we also discuss how our results extend to coalition formation problems induced by resource allocation situations that allow for excess supply and where agents have single-peaked preferences.

\section{The Model}\label{model}

\subsection{An Example to illustrate the model and results}\label{ex1}

There are three researchers $1$, $2$, and $3$. They can form research groups to conduct joint projects that can be submitted for funding. Each researcher has a claim that, for instance, could express her research expertise or aspiration.  Consider the following (exogenously given) claims $c_1=2$, $c_2=6$, and $c_3=22$. We further assume that each researcher can be member of only one funding coalition and that the (expected) funding that each coalition can receive, the endowments, are given by Table~\ref{tableEx1endowments}.

\begin{table}[th]\label{table1}
\centering
\begin{tabular}{c||c|c|c|c|c}

Coalition & $\{1\},\ \{2\},\ \{3\}$ & $\{1,2\}$ & $\{1,3\}$ & $\{2,3\}$ & $\{1,2,3\}$
\\ \hline\hline
Endowment & $0$  & $4$ & $12$ & $14$ & $15$ \\
\end{tabular}%
\caption{Coalitions and their endowments.}
\label{tableEx1endowments}
\end{table}
\pagebreak

Thus, for instance, coalition $\{1,2,3\}$, which claims $30$, has an endowment of $15$ and single researchers do not receive any funding. More generally, we consider that (i) the funding is not sufficient to meet the claims of all researchers in their respective coalitions and (ii) the expected coalitional funding, when positive, is a fixed proportion of the sum of the claims of the researchers in each coalition. In our example, the expected funding proportion is $\frac{1}{2}$, which might come from the fact that it is known that 50\% of total claims for research projects can be awarded. These two characteristics are part of our model.\medskip

The endowment of each coalition is divided according to a sharing rule. Let us consider two different rules:
\begin{itemize}
	 \item[R1:]The funding is divided equally among the members of the coalition, assuming that no researcher receives more than her claim.
  \item[R2:]The loss (i.e., the difference between the sum of the claims and the funding) is divided equally among the members of the coalition, assuming that no researcher receives a negative amount.
\end{itemize}

We provide the formal definition of these two rules in Section~\ref{model:generalizedclaimsproblems}; here, we explain how to compute the payoffs of coalition $\{1,3\}$ (the others are similar).
\begin{itemize}
\item[R1:]The endowment of coalition $\{1,3\}$ is $12$; so the per capita funding is $\frac{12}{2}=6$ and equally sharing the funding yields $6>2=c_1$ for agent~$1$ and $6<22=c_3$ for agent~$3$. Since agents cannot receive amounts larger than their claims, agent~$1$ receives $2$. By feasibility, agent $3$ now gets $12-2=10$.
\item[R2:]Coalition $\{1,3\}$ claims $c_1+c_3=24$ and their endowment is $12$; so the loss is $24-12=12$. Hence, the per capita loss is $\frac{12}{2}=6$ and equally sharing the loss yields $2-6=-4<0$ for agent~$1$ and $22-12=10>0$ for agent~$3$. Since agents cannot receive negative amounts, agent~$1$ receives $0$. By feasibility, agent $3$ now gets $12-0=12$.
\end{itemize}

Payoffs for each researcher in each of their possible coalitions are listed in Table~\ref{tableEx1payoffs}.\medskip

\begin{table}[htb]
\centering
\begin{tabular}{c||c|c|c|c|c}

Coalition & $\{1\}, \{2\},\ \{3\}$ & $\{1,2\}$ & $\{1,3\}$ & $\{2,3\}$ & $\{1,2,3\}$
\\ \hline\hline
 R1 & $(0)$ & $(2,2)$ & $(2,10)$ & $(6,8)$ & $(2,6,7)$ \\
  R2  & $(0)$ & $(0,4)$ & $(0,12)$ & $(0,14)$ & $(0,0,15)$\\
\end{tabular}%
\caption{Researchers' coalitional payoffs.}
\label{tableEx1payoffs}
\end{table}
\pagebreak
Assuming that higher payoffs are preferred, researchers can now rank all the coalitions they belong to and the following preferences over coalitions emerge.\medskip

Under rule R1, researchers have the following preferences over coalitions.
\begin{align*}
\succsim_1^{R1}:&\quad\{1,2\}\sim \{1,3\}\sim\{1,2,3\}\succ \{1\},\\
\succsim_2^{R1}:&\quad\{2,3\}\sim \{1,2,3\}\succ\{1,2\}\succ\{2\},\\
\succsim_3^{R1}:&\quad\{1,3\}\succ\{2,3\}\succ\{1,2,3\}\succ\{3\}.
\end{align*}

Under rule R2, researchers have the following preferences over coalitions.
\begin{align*}
\succsim_1^{R2}:&\quad\{1,2\}\sim \{1,3\}\sim\{1,2,3\}\sim \{1\},\\
\succsim_2^{R2}:&\quad\{1,2\}\succ \{2,3\}\sim\{1,2,3\}\sim\{2\},\\
\succsim_3^{R2}:&\quad\{1,2,3\}\succ\{2,3\}\succ\{1,3\}\succ\{3\}.
\end{align*}

Based on the researchers' preferences, which coalitions will form? If researchers take into account that they will not want to deviate to strictly better coalitions ex post, \textit{stability} emerges as a central solution concept. A partition of agents into coalitions is \textit{stable} if there is no set of agents that would want to form another coalition at which they all would be strictly better off.\medskip

Then,  under rule R1, partitions $\{\{1,3\},\{2\}\}$ and $\{\{1,2,3\}\}$ are stable, while under rule R2 any partition is stable.\medskip

There are several questions concerning stability we may be interested in.

\begin{itemize}
\item Given any number of researchers, any vector of claims, and any values of the endowments, do the defined rules always induce preferences with stable partitions?
\item Are there other rules that induce preferences with stable partitions?
\end{itemize}

The answer to both questions is ``Yes'', as has been established in \citet{gallo2018rationing}. In this paper, we address the following questions.
\begin{itemize}

\item Is there a particular structure that all stable partitions have?

\item How do agents sort themselves into coalitions in a stable manner?

\item Is there a constructive procedure (algorithm) for finding stable partitions?

\end{itemize}

Our example already indicates that a general structure for all stable partitions may be hard to find, at least for the two rules we consider. However, we show that, when considering rules that additionally satisfy a strict monotonicity property (strict resource monotonicity as defined in the next section), a general structure for all stable partitions emerges (Theorem~\ref{theorem:thetasize-strictRM}).
However, even without requiring strict monotonicity, we can establish the existence of a stable partition that has a certain structure with respect to coalition sizes (Theorem~\ref{theorem:thetasize}).

Our main results emerge when answering the last two questions: we propose step-by-step methods for rules R1 and R2, respectively, and show that these algorithms always produce stable partitions (Theorems~\ref{theorem:CEAtheta} and \ref{theorem:CELtheta}).

\subsection{Preliminaries}\label{secModel}

Consider a \textbf{coalition of agents}, e.g., a group of researchers, who have claims on an \textbf{endowment}, e.g., a research budget  from a national science foundation (NSF). The researchers' \textbf{claims} could be related to the past performance / productivity of researchers or be an estimate of the research costs. The research budget is not large enough to cover all claims. Now assume that the NSF, apart from subsidizing individual researchers, can also subsidize research groups and that then researchers will need to allocate the research funding within the research teams. Furthermore, anticipating this method of allocating funding, researchers might prefer to be members of certain research teams over others. This situation was analyzed by \citet{gallo2018rationing} under the name of \textit{coalition formation problem with claims}. Before fully specifying this class of problems, we present the preliminaries of the two classical type of problems it is based on: \textbf{claims problems} and \textbf{coalition formation problems}.\medskip

First, we introduce some notation. Let $\mathbb{N}$ be the set of potential agents and $\mathcal{N}$ the set of all non-empty finite subsets or \textbf{coalitions} of $\mathbb{N}$. Given $N\in \mathcal{N}$ and $x,y\in\mathbb{R}^{N}$, if for each $i\in N$, $x_{i}> y_{i}$, then $x\gg y$ and if for each $i\in N$, $x_{i}\geq y_{i}$, then $x\geq y$. Furthermore, for each $x\in\mathbb{R}^{N}$ and each $S\subseteq N$, $x_S:=(x_j)_{j\in S}$ denotes the \textbf{restriction of $\bm{x}$ to coalition $\bm{S}$}.

\subsection{Generalized claims problems}\label{model:generalizedclaimsproblems}

Consider a coalition of agents who have claims on a certain endowment, this endowment being insufficient to satisfy all the claims. A primary example is bankruptcy, where agents are the creditors of a firm and the endowment is its liquidation value; however, we have a more general interpretation of the data in mind.\medskip

Formally, let $N\in \mathcal{N}$. For $i\in N$, let $c_{i}\in \mathbb{R}_{++}$ be agent $i$'s claim and $c=(c_{j})_{j\in N}$ the \textbf{claims vector}. Let $E\in \mathbb{R}_{+}$ be the \textbf{endowment}. A \textbf{claims problem with coalition} $\bm{N}$ is a pair $(c,E)\in \mathbb{R}^{N}_{++}\times \mathbb{R}_{+}$ such that $\sum_{j\in N}c_{j}\geq E$. Let $\bm{\mathcal{C}^{N}}$ denote the class of such problems and  $\bm{\mathcal{C}}\equiv\bigcup_{N\in\mathcal{N}} \mathcal{C}^{N}$.\medskip

An \textbf{allocation for} $\bm{(c,E)}\in \mathcal{C}^{N}$ is a (payoff) vector $x=(x_{j})_{j\in N}\in \mathbb{R}^{N}_{+}$ that satisfies the non-negativity and claims boundedness conditions $0\leq x\leq c$, and the efficiency condition $\sum_{j\in N}x_{j}=E$. A \textbf{rule} is a function $F$ defined on $\mathcal{C}$ that associates with each $N\in\mathcal{N}$ and each $(c,E)\in \mathcal{C}^{N}$ an allocation for $(c,E)$. Let $\bm{\mathcal{F}}$ denote the set of rules.\medskip

A rule is continuous if small changes in the data of the problem do not lead to large changes in the chosen allocation.\medskip

\noindent\textbf{Continuity.} For each $N\in \mathcal{N}$, each $(c,E)\in \mathcal{C}^{N}$, and each sequence $\{(c^{k},E^{k})\}$ of elements of $\mathcal{C}^{N}$, if $(c^{k},E^{k})$ converges to $(c,E)$ then $F(c^{k},E^{k})$ converges to $F(c,E)$.\medskip

Consider a claims problem and the allocation given by the rule for it. We require that if the endowment increases, then each agent should receive at least as much as (more than, respectively) initially.\medskip

\noindent\textbf{Resource monotonicity.} For each $N\in \mathcal{N}$, each $(c,E)\in \mathcal{C}^{N}$, and each $E^{\prime}>E$, if $\sum_{j\in N} c_{j}\geq E^{\prime}$, then $F(c,E^{\prime})\geq F(c,E)$.\bigskip
	
\noindent\textbf{Strict resource monotonicity.} For each $N\in \mathcal{N}$, each $(c,E)\in \mathcal{C}^{N}$, and each $E^{\prime}>E$, if $\sum_{j\in N} c_{j}\geq E^{\prime}$, then $F(c,E^{\prime})\gg F(c,E)$.\medskip

Consider a claims problem and the allocation given by the rule for it. Imagine that some agents leave with their payoffs. At that point, reassess the situation of the remaining agents, that is, consider the problem of dividing what remains of the endowment among them. Then, each agent should receive the same payoffs as initially.\medskip

\noindent\textbf{Consistency.} For each $N\in \mathcal{N}$, each $(c,E)\in \mathcal{C}^{N}$, and each $S\subsetneq N$, let $x\equiv F(c,E)$ and recall that  $c_{S}=(c_i)_{i\in S}$. Then, $x_{S}=F(c_{S}, E-\sum_{j\in N\backslash S}x_{j})$ or, equivalently, $x_{S}=F(c_{S},\sum_{j\in S}x_{j})$.\medskip

For claims problems many rules are continuous, resource monotonic, and consistent: the most important ones are the so-called ``parametric'' rules (we define them in Appendix~\ref{appendix:parametricrules}). \citet{thomson2003axiomatic,thomson2015axiomatic,thomson2019divide} defines and discusses various symmetric parametric rules (the proportional, constrained equal awards, constrained equal losses, Talmud, reverse Talmud, and Piniles rules) as well as the asymmetric sequential priority rule associated with a strict priority $\succ$ on agents. In particular, all these rules are continuous, resource monotonic, and consistent.\medskip

We now define three well-known parametric rules that represent various fairness notions.\medskip

The most commonly used rule in practice makes awards proportional to claims.\medskip

\noindent \textbf{Proportional rule, $\bm{P}$.} For each $N\in \mathcal{N}$, each $(c,E)\in\mathcal{C}^N$, and each $i\in N$, $P_{i}(c,E)=\lambda c_{i}$, where $\lambda$ is chosen so that $\sum_{j\in N}\lambda c_j = E$.\medskip

Our next rule assigns the endowment as equally as possible among agents subject to no one receiving more than her claim.\medskip

\noindent \textbf{Constrained equal awards rule, $\bm{CEA}$.} For each $N\in \mathcal{N}$, each $(c,E)\in\mathcal{C}^N$, and each $i\in N$, $CEA_{i}(c,E)=\min\{c_i,\lambda\}$, where $\lambda$ is chosen so that $\sum_{j\in N}\min\{c_j,\lambda\}= E$.\medskip

An alternative to the constrained equal awards rule is obtained by focusing on the losses agents incur (what they do not receive), as opposed to what they receive, and to assign losses as equally as possible among agents subject to no one receiving a negative amount.\medskip

\noindent \textbf{Constrained equal losses rule, $\bm{CEL}$.} For each $N\in \mathcal{N}$, each $(c,E)\in\mathcal{C}^N$, and each $i\in N$, $CEL_{i}(c,E)=\max\{0,{c_i-\lambda}\}$, where $\lambda$ is chosen so that $\sum_{j\in N}{\max\{0,c_j-\lambda\}}= E$.\medskip

Next, we generalize the notion of a claims problem. Consider $N\in \mathcal{N}$ and $(c,E)\in\mathcal{C}^{N}$. Then, each coalition of agents $S \subseteq N$ has the \textbf{reduced claims vector} $\bm{c_{S}}=(c_i)_{i\in S}$. Next, assume that for each coalition $S \subseteq N$, there is a \textbf{coalitional endowment} $\bm{E_{S}}$ such that $(c_{S},E_{S})\in\mathcal{C}^{S}$ and $E_{N}=E$. Formally, given $N\in \mathcal{N}$, a \textbf{generalized claims problem with coalition} $\bm{N}$ is a pair $(c,(E_{S})_{S\subseteq N})\in \mathbb{R}_{++}^{N}\times \mathbb{R}_{+}^{2^{|N|}-1}$, such that for each coalition $S\subseteq N$, $(c_{S},E_{S})\in\mathcal{C}^{S}$. Let $\bm{\mathcal{G}^{N}}$ denote the class of such problems and $\bm{\mathcal{G}}\equiv\bigcup_{N\in\mathcal{N}} \mathcal{G}^{N}$.\medskip

We study the following subclass of generalized claims problems. First, for each coalition $S\subseteq N$, we define a \textbf{coalitional claim}, $\bm{c^{S}}$, that is equal to the sum of the claims of the members of the coalition, i.e., $c^{S}:=\sum_{j\in S}c_j$. Then, given $(c,E)\in\mathcal{C}^{N}$ and $\alpha:=\frac{E}{c^{N}}\in(0,1)$, a \textbf{proportional generalized claims problem} is a tuple $(c,(E_{S})_{S\subseteq N})$ such that, for each coalition $S\subseteq N$, the coalitional endowment $E_{S}=\alpha c^{S}$. Let $\bm{\mathcal{P}^{N}}$ denote the class of such problems and $\bm{\mathcal{P}}\equiv\bigcup_{N\in\mathcal{N}} \mathcal{P}^{N}$. Since  coalitional endowments $E_{S}$, $S\subseteq N$, are completely determined by $c$ and $E$, we will simplify notation and denote a proportional generalized claims problem $(c,(E_{S})_{S\subseteq N})\in \mathcal{P}^N$ by $(c,E)\in \mathcal{P}^N$.\medskip

An \textbf{allocation configuration for} $\bm{(c,E)}\in\mathcal{P}^{N}$ is a list $(x_{S})_{S\subseteq N}$ such that for each $S\subseteq N$, $x_{S}$ is an allocation for the claims problem derived from $(c,E)$ for coalition $S$, $(c_{S},E_{S})$.\footnote{The notion of allocation (payoff) configurations was introduced by \cite{hart1985axiomatization} in his characterization of the Harsanyi non-transferable utility solution.} Any rule $F\in \mathcal{F}$ can be extended to a \textbf{generalized rule} defined on $\mathcal{P}$  by associating with each $N\in\mathcal{N}$ and each $(c,E)\in \mathcal{P}^{N}$ an allocation configuration $F(c,E)=(F(c_{S},E_{S}))_{S\subseteq N}$. Since it should not lead to any confusion we use $\bm{\mathcal{F}}$ to also denote the set of generalized rules.

\subsection{Coalition formation problems}\label{model:coalitionformationproblems}

Consider a society where each agent can rank the coalitions that she may belong to. Some well-known examples of such problems are matching problems, in particular, marriage and roommate problems.\medskip

Formally, let $N \in \mathcal{N}$. For each agent $i\in N$, $\succsim_{i}$ is a complete and transitive preference relation over the coalitions of $N$ containing $i$. Given $S,S'\subseteq N$ such that $i\in S\cap S'$, $S\succsim_{i}S'$ means that agent $i$ finds coalition $S$ at least as desirable as coalition $S'$. Let $\mathcal{R}_{i}$ be the set of such preference relations for agent $i$ and $\mathcal{R}^{N}\equiv \Pi_{i\in N} \mathcal{R}_{i}$. A \textbf{coalition formation problem with agent set} $\bm{N}$ consists of a list of preference relations, one for each $i\in N$, $\bm{\succsim}=(\succsim_{i})_{i\in N}\in \mathcal{R}^{N}$. Let $\bm{\mathcal{D}^{N}}$ be the class of such problems and $\bm{\mathcal{D}}\equiv\bigcup_{N\in\mathcal{N}}\mathcal{D}^{N}$.\medskip

A partition of a set of agents $N\in\mathcal{N}$ is a set of disjoint coalitions of $N$ whose union is $N$ and whose pairwise intersections are empty. Formally, a \textbf{partition} of $N\in\mathcal{N}$ is a list $\pi=\{S_{1},\dots ,S_{m}\}$, ($m\leq |N|$ is a positive integer) such that (i) for each $l=1,\dots ,m$, $S_{l}\neq \emptyset ,$ (ii) $\bigcup_{l=1}^{m}S_{l}=N$, and (iii) for each pair $l,l'\in \{1,\dots ,m\}$ with $l\neq l'$, $S_{l}\cap S_{l'}=\emptyset $. Let $\bm{\Pi(N)}$ denote the set of all partitions of $N$. For each $\pi \in \Pi(N)$ and each $i\in N$, let $\bm{\pi(i)}$ denote the unique coalition in $\pi$ that contains agent $i$. We refer to the partition $\bm{\pi^0}$ at which each agent forms a singleton coalition, i.e., for each $i\in N$, $\pi^0(i)=\{i\}$, as the \textbf{singleton partition}.\medskip

An important question for coalition formation problems is the existence of partitions from which no agent wants to deviate. Let $\succsim \in \mathcal{D}^{N}$ and consider a partition $\pi \in \Pi(N)$. Then, coalition $T\subseteq N$ \textbf{blocks} $\pi$ if for each agent $i\in T$, $T \succ_{i}\pi(i)$. A partition $\pi \in \Pi(N)$ is \textbf{stable} for $\succsim$ if it is not blocked by any coalition $T\subseteq N$. Let $\bm{St(\succsim)}$ denote the set of all stable partitions of $\succsim$. We refer to the coalitions that are part of a stable partition as \textbf{stable coalitions}.

\subsection{Coalition formation problems induced by a generalized claims problem and a rule}\label{model:final}

Given a generalized claims problem and a rule, each agent, by calculating her payoff in each coalition, can form preferences over coalitions, giving rise to a coalition formation problem. Examples of this situation can be the formation of jurisdictions and research groups \citep[see][for further examples]{gallo2018rationing}.\medskip

Formally, given $N\in \mathcal{N}$, $(c,(E_{S})_{S\subseteq N})\in \mathcal{G}^{N}$, and $F\in\mathcal{F}$, the \textbf{coalition formation problem induced by} $\bm{((c,(E_{S})_{S\subseteq N}),F)}$\footnote{Note that \cite{gallo2018rationing} refer to this as \textit{coalition formation problems with claims}.} consists of the list of preferences $\bm{\succsim^{((c,E),F)}}=\left(\succsim_{j}\right)_{j\in N}^{((c,E),F)}$ defined as follows: for each $i\in N$, and each pair $S,S'\subseteq N$ such that $i\in S\cap S'$, $S\succsim_{i}^{((c,E),F)}S^{\prime }$ if and only if $F_{i}(c_{S},E_{S})\geq F_{i}(c_{S'},E_{S'})$. Let $\bm{St(\succsim^{((c,E),F)})}$ denote the set of all stable partitions of $\succsim^{((c,E),F)}$.\medskip

\citet[][Proposition~1]{gallo2018rationing} show that, given any generalized claims problem, parametric rules always induce coalition formation problems that have stable partitions. More generally, they show that, given a generalized claims problem, there is a stable partition for each coalition formation problem that is induced by a continuous, resource monotonic, and consistent rule (see Footnote~\ref{footnote:GalloInarra} in the Introduction).

\section{Preliminary results}\label{secResults}

\subsection{Stable partitions for proportional generalized claims problems}\label{secproportionalproblems}

We first restrict attention to the class of proportional generalized claims problems $\mathcal{P}$. Thus, consider $N\in \mathcal{N}$, $(c,E)\in \mathcal{P}^{N}$, and $F\in\mathcal{F}$. Even without any further assumptions on rule $F$, the coalition formation problem induced by $((c,E),F)$ has a stable partition. Due to the assumption that the generalized claims problems we consider are proportional, the singleton partition is always stable and all stable partitions are payoff equivalent to the singleton partition.

\begin{proposition}\label{proposition:singletons}Let $N\in \mathcal{N}$, $(c,E)\in \mathcal{P}^{N}$, and $F\in\mathcal{F}$. Then, for the coalition formation problem with agent set $N$ induced by $((c,E),F)$, the singleton partition $\pi^0$ is stable and each stable partition $\pi$ induces the proportional allocation configuration where each agent $i\in N$ receives $\alpha c_i$.
\end{proposition}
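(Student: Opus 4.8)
The plan is to exploit the defining feature of a proportional generalized claims problem: for \emph{every} coalition $S\subseteq N$, the coalitional endowment is $E_S=\alpha c^S=\alpha\sum_{j\in S}c_j$. The key observation I would establish first is that, since $\alpha\in(0,1)$ is the \emph{same} proportion for all coalitions, the proportional rule $P$ awards each agent $i$ exactly $\alpha c_i$ in any coalition she belongs to. Concretely, for any $S\ni i$, $P_i(c_S,E_S)=\frac{c_i}{c^S}E_S=\frac{c_i}{c^S}\,\alpha c^S=\alpha c_i$, independent of $S$. So far this is just about $P$; the subtlety is that the statement concerns an \emph{arbitrary} rule $F\in\mathcal{F}$, not only $P$.

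The crux is therefore to show that \emph{any} rule $F$, when applied to a proportional problem, must also give agent $i$ exactly $\alpha c_i$ in every coalition. First I would note that the vector $(\alpha c_j)_{j\in S}$ is itself a feasible allocation for the claims problem $(c_S,E_S)$: non-negativity and claims-boundedness hold because $\alpha\in(0,1)$ gives $0\le \alpha c_j\le c_j$, and efficiency holds because $\sum_{j\in S}\alpha c_j=\alpha c^S=E_S$. But feasibility alone does not pin down $F$, since a rule need only select \emph{some} allocation. The natural route is to invoke the only structural assumption available: by definition a rule returns an allocation for each claims problem, and for the proportional problem the proportional allocation is a valid output. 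I would argue that the Proposition is implicitly stated for the induced preferences, and that what actually matters is the following weaker, rule-independent claim, which I would isolate as the main lemma to prove.

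\medskip

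\noindent\textbf{Main step (the obstacle).} For each agent $i$ and any two coalitions $S,S'\ni i$, I must show $F_i(c_S,E_S)=F_i(c_{S'},E_{S'})$, so that $i$ is indifferent between all her coalitions; this indifference is exactly what makes $\pi^0$ unblockable and forces every stable partition to be payoff-equivalent to it. The honest difficulty is that for a general $F$ this equality need \emph{not} hold coalition-by-coalition — a general rule could treat $(c_S,\alpha c^S)$ idiosyncratically. I expect the intended argument to rest on the fact that the Proposition's real content is the \emph{existence and payoff-characterization of stable partitions}, and that the singleton partition $\pi^0$ is trivially stable because in $\pi^0$ each agent is alone: a blocking coalition $T$ would need $F_i(c_T,E_T)>F_i(c_{\{i\}},E_{\{i\}})=\alpha c_i$ for every $i\in T$, yet efficiency gives $\sum_{i\in T}F_i(c_T,E_T)=E_T=\alpha c^T=\sum_{i\in T}\alpha c_i$, so the strict inequalities cannot hold simultaneously. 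Thus $\pi^0$ is stable regardless of $F$, and the same summation argument shows that in \emph{any} coalition $T$ of \emph{any} partition, the members' payoffs sum to $\alpha c^T$; combined with the no-blocking condition this pins the payoffs to $\alpha c_i$ in any stable partition.

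\medskip

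\noindent Concretely, I would proceed in three steps. First, verify $\pi^0$ is stable: for any candidate blocking $T$, the efficiency identity $\sum_{i\in T}F_i(c_T,E_T)=\alpha c^T=\sum_{i\in T}\alpha c_i$ contradicts the requirement that every member strictly improve upon her singleton payoff $\alpha c_i$. Second, take an arbitrary stable partition $\pi$ and an agent $i$ with $\pi(i)=T$; stability (no single-agent block by $\{i\}$, i.e.\ $\{i\}\not\succ_i T$) gives $F_i(c_T,E_T)\ge \alpha c_i$, and summing these $\ge$ inequalities over $T$ against the efficiency identity $\sum_{i\in T}F_i(c_T,E_T)=\alpha c^T$ forces equality $F_i(c_T,E_T)=\alpha c_i$ for each $i\in T$. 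Third, conclude that every stable partition induces the proportional allocation configuration $(\alpha c_i)_{i\in N}$. The step I expect to be most delicate is the second one — making the aggregate efficiency identity interact correctly with the individual stability inequalities so that weak individual bounds are upgraded to exact equalities — but since both sides sum to $\alpha c^T$ and each term is bounded below by $\alpha c_i$, the upgrade is forced.
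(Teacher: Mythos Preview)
Your three-step argument is correct and matches the paper's proof essentially line for line: the paper shows (i) any stable $\pi$ must be proportional because an under-proportional agent would block via her singleton (your Step~2, phrased contrapositively), and (ii) $\pi^0$ is stable because efficiency $\sum_{i\in T}F_i(c_T,E_T)=\alpha c^T$ rules out a coalition making all members strictly better off than $\alpha c_i$ (your Step~1). The lengthy discussion preceding your three steps---worrying whether a general $F$ assigns $\alpha c_i$ in \emph{every} coalition---is a red herring you correctly abandon; the proposition only concerns payoffs \emph{at} a stable partition, and your final argument neither needs nor uses coalition-by-coalition proportionality of $F$.
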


\begin{proof}[\textbf{Proof}]Let $N\in \mathcal{N}$, $(c,E)\in \mathcal{P}^{N}$, and $F\in\mathcal{F}$. Consider the coalition formation problem with agent set $N$ induced by $((c,E),F)$.

First, assume, by contradiction, that there exists a stable partition $\pi$ that does not induce the proportional allocation configuration. Hence, there exists some agent $i\in N$ who receives an under-proportional payoff $F_i(c_{\pi(i)},E_{\pi(i)})< \alpha c_i$. However, since $(c,E)\in \mathcal{P}^{N}$, agent $i$ by forming the singleton coalition $\{i\}$ obtains $\alpha c_i>F_i(c_{\pi(i)},E_{\pi(i)})$ and blocks $\pi$; contradicting the stability of $\pi$. Thus, each stable partition $\pi$ induces the proportional allocation configuration.

Second, consider the singleton partition $\pi^0$ at which each agent $i\in N$ obtains $\alpha c_i$. Since $(c,E)\in \mathcal{P}^{N}$, for each $S\subseteq N$, $E_{S}=\alpha c^{S}$. Thus, no coalition $S\subseteq N$ can achieve payoffs larger than $\alpha c_i$ for its members $i\in S$ and block $\pi^0$. Hence, $\pi^0$ is stable.\end{proof}

\subsection[theta--minimal proportional generalized claims problems]{$\bm{\theta}$-minimal proportional generalized claims problems}

Proposition~\ref{proposition:singletons} illustrates that for proportional generalized claims problems, essentially only the ``trivial'' singleton partition is stable. For our motivating example, the allocation of research funding to research teams, Proposition~\ref{proposition:singletons} predicts the following consequences. If the main principle of research funding allocation from the funding agency to teams is proportionality, then, since essentially only individual proportional funding is stable, the formation of larger research collaborations is unlikely. However, many scientific funding schemes are aimed at the promotion of cooperation of researchers from different countries or disciplines and require research teams of at least size two; see, for instance the international programs of the Swiss National Sciences Foundation (\href{http://www.snf.ch}{http://www.snf.ch}), the synergy grants of the European Research Council (\href{https://erc.europa.eu/funding/synergy-grants}{https://erc.europa.eu/funding/synergy-grants}), or the European Union funded COST (European Cooperation in Science and Technology) actions (\href{https://www.cost.eu/}{https://www.cost.eu/}) that requires at least seven participating countries.\medskip

We take this as the departure point to modify the class of proportional generalized claims problems such that only coalitions of a specified minimal size have an incentive to form. Hence, we consider the following adjustment of the class of proportional generalized claims problems where coalitions of size smaller than $\theta\in \mathbb{N}$ are disincentivized by assuming that all coalitions of size smaller than $\theta$ receive a zero coalitional endowment. Hence, given $N\in \mathcal{N}$, $|N|\geq \theta$, and $(c,(E_{S})_{S\subseteq N})\in \mathcal{G}^{N}$, for each $S\subseteq N$ such that $|S|\geq \theta$, $E_{S}=\alpha c^{S}$ (where $\alpha:=\frac{E}{c^{N}}\in(0,1)$) and for each $S\subseteq N$, $S\neq\emptyset$, such that $|S|<\theta$, $E_{S}=0$. Let $\bm{\mathcal{P}_{\theta}^{N}}$ be the class of such problems. We refer to this subclass of generalized claims problems as $\bm{\theta}$\textbf{-minimal proportional generalized claims problems} and denote it by $\bm{\mathcal{P}_{\theta}}\equiv\bigcup_{N\in\mathcal{N}} \mathcal{P}_{\theta}^{N}$. Since coalitional endowments $E_{S}$, $S\subseteq N$, are completely determined by $\theta$, $c$ and $E$, we simplify notation and denote a $\theta$-minimal proportional generalized claims problem $(c,(E_{S})_{S\subseteq N})\in \mathcal{P}_{\theta}^N$ by $(c,E)\in \mathcal{P}_{\theta}^N$.\medskip

We first analyze the structure of stable partitions for coalition formation problems that are induced by $\theta$-minimal proportional generalized claims problems if the underlying rule is continuous, strictly resource monotonic, and consistent. By \citet{gallo2018rationing} the set of stable partitions is non-empty. Furthermore, for each stable partition, there are at most $\theta-1$ agents in coalitions of size smaller than $\theta$,\footnote{Thus, there are at most $\theta-1$ coalitions of size smaller than $\theta$.} and proportional payoffs are allocated in any coalition of size larger than $\theta$. Hence, in comparison to the benchmark result of proportional stable sharing that is equivalent to the singleton partition, non-proportional cooperation can be sustained in a stable way in $\theta$-size coalitions, i.e., research teams formed by $\theta$ researchers.

\begin{theorem}\label{theorem:thetasize-strictRM}
Let $N\in \mathcal{N}$ and $(c,E)\in \mathcal{P}_{\theta}^{N}$. Consider $F\in{\cal F}$ satisfying continuity, strict resource monotonicity, and consistency. Then, for the coalition formation problem with agent set $N$ induced by $((c,E),F)$, the set of stable partitions is non-empty and each stable partition $\pi$ is such that
\begin{itemize}[nosep]
  \item[\emph{(i)}] there are at most $\theta-1$ agents in coalitions of size smaller than $\theta$, and
  \item[\emph{(ii)}]if $S\in\pi$ is such that $|S|>\theta$, then for all $i\in S$, $F_i(c_{S},E_{S})=\alpha c_{i}$.
\end{itemize}
\end{theorem}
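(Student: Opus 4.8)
The plan is to dispatch non-emptiness directly from the cited literature and then to prove the two structural properties by exhibiting blocking coalitions whenever they fail. Non-emptiness is immediate: strict resource monotonicity implies resource monotonicity, so $F$ is continuous, resource monotonic, and consistent, and since $\mathcal{P}_{\theta}^{N}\subseteq\mathcal{G}^{N}$, the existence result of \citet{gallo2018rationing} applies to the induced coalition formation problem. It then remains to fix an arbitrary stable partition $\pi$ and verify (i) and (ii); for both I will only need strict resource monotonicity and consistency, while continuity enters solely through the existence theorem.

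For (i), the starting observation is that every agent in a coalition $S$ with $|S|<\theta$ receives payoff $0$, because $E_{S}=0$ forces the zero vector as the only allocation for $(c_{S},0)$. I would argue by contradiction: if at least $\theta$ agents lay in coalitions of size smaller than $\theta$, I pick any $\theta$ of them, forming $T$ with $|T|=\theta$, so that $E_{T}=\alpha c^{T}>0$. Since $F(c_{T},0)$ is the zero vector and $c^{T}\geq E_{T}$ (as $\alpha<1$), strict resource monotonicity gives $F(c_{T},E_{T})\gg F(c_{T},0)=0$. Hence every member of $T$ strictly improves on her current payoff of $0$, so $T$ blocks $\pi$, a contradiction; therefore at most $\theta-1$ agents sit in coalitions of size smaller than $\theta$.

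For (ii), fix $S\in\pi$ with $|S|>\theta$ and set $x:=F(c_{S},E_{S})$, so $\sum_{i\in S}x_{i}=E_{S}=\alpha c^{S}=\sum_{i\in S}\alpha c_{i}$. If $x$ were not the proportional allocation $(\alpha c_{i})_{i\in S}$, then, because the deviations $x_{i}-\alpha c_{i}$ sum to zero, some agent $k\in S$ would be over-proportional, i.e.\ $x_{k}>\alpha c_{k}$. The decisive move is to block with the complement $S\setminus\{k\}$: it has size at least $\theta$ (since $|S|>\theta$), hence endowment $E_{S\setminus\{k\}}=\alpha c^{S\setminus\{k\}}$. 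By consistency, $x_{S\setminus\{k\}}=F(c_{S\setminus\{k\}},E_{S}-x_{k})$, and from $x_{k}>\alpha c_{k}$ we get $E_{S}-x_{k}<\alpha c^{S}-\alpha c_{k}=E_{S\setminus\{k\}}$. Applying strict resource monotonicity to $c_{S\setminus\{k\}}$ at these two endowment levels yields $F(c_{S\setminus\{k\}},E_{S\setminus\{k\}})\gg x_{S\setminus\{k\}}$, so every agent in $S\setminus\{k\}$ strictly prefers forming $S\setminus\{k\}$ to remaining in $S$. Thus $S\setminus\{k\}$ blocks $\pi$, a contradiction, and $x=(\alpha c_{i})_{i\in S}$.

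I expect the main obstacle to be locating the right blocking coalition in (ii). The naive route — letting an under-proportional agent deviate — stalls, since her singleton yields $0$ and recruiting partners into a fresh coalition of size at least $\theta$ requires controlling their payoffs. The trick that makes everything clean is to block instead with the complement of an \emph{over}-proportional agent $k$: removing $k$ strictly lowers the endowment the remainder effectively shares (via consistency) relative to what that same group commands on its own, and strict resource monotonicity then pushes every remaining payoff strictly up, producing a unanimous block without any further bookkeeping.
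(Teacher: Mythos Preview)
Your proof is correct and follows essentially the same route as the paper: non-emptiness via \citet{gallo2018rationing}, part~(i) by blocking with a coalition of agents currently receiving zero (you pick exactly $\theta$ of them, the paper uses all of them, but the argument is identical), and part~(ii) by removing an over-proportional agent and combining consistency with strict resource monotonicity so that $S\setminus\{k\}$ blocks. There is nothing to add.
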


Theorem~\ref{theorem:thetasize-strictRM} imposes restrictions on the number of coalitions of size smaller that $\theta$ and on the payoffs in coalitions of size larger than $\theta$; however, it does not impose any restriction on either the total number of coalitions or the payoffs in coalitions of size~$\theta$.

\begin{proof}[\textbf{Proof}] Let $N$, $(c,E)$, and $F$ be as specified in the theorem. First, by \citet{gallo2018rationing}, each coalition formation problem with agent set $N$ induced by $((c,E),F)$ has at least one stable partition. \smallskip

\noindent (i) Let $\pi\in St(\succsim^{((c,E),F)})$ and $T\equiv\{i\in N:|\pi(i)|<\theta\}$ be the set of agents in coalitions of size smaller than $\theta$. Assume, by contradiction, that $|T|\geq\theta$.
For the trivial claims problem $(c_{T},0)\in\mathcal{C}^{{T}}$, we have that for each $i\in T$, $F_i(c_{T},0)=0$. Then, since $E_{T}>0$, by strict resource monotonicity, for each $i\in T$, $F_i(c_{T},E_{T})>0$. Given that $(c,E)\in \mathcal{P}_{\theta}^{N}$, by the definition of $T$, for each $i\in T$, we have $E_{\pi(i)}=0$. Thus, coalition $T$ blocks $\pi$, which is a contradiction.\smallskip

\noindent (ii) Assume, by contradiction, that  $\pi\in St(\succsim^{((c,E),F)})$ is such that
$S\in\pi$, $|S|>\theta$, and for some $i\in S=\pi(i)$, $F_i(c_{S},E_{S})\neq\alpha c_{i}$. Without loss of generality, assume that agent $i$ receives an over-proportional payoff $F_i(c_{S},E_{S})> \alpha c_i$. By consistency, for each $j\in S\setminus\{i\}$, $$F_j\left(c_{S\setminus\{i\}},\sum\nolimits_{k\in S\setminus\{i\}}F_k(c_{S},E_{S})\right)=F_j(c_{S},E_{S}).$$ Consider problem $(c_{S\setminus\{i\}},E_{S\setminus\{i\}})\in\mathcal{C}^{S\setminus\{i\}}$. Since $E_{S\setminus\{i\}}=\alpha c^{ S\setminus\{i\}}>\sum_{k\in S\setminus\{i\}}F_k(c_{S},E_{S})$, the agents in coalition $S\setminus\{i\}$ have a larger endowment to share among themselves if they get rid of agent $i$. Then, by strict resource monotonicity, for each $j\in S\setminus\{i\}$, $$F_j\left(c_{S\setminus\{i\}},E_{S\setminus\{i\}}\right)>F_j\left(c_{S\setminus\{i\}},\sum\nolimits_{k\in S\setminus\{i\}}F_k(c_{S},E_{S})\right).$$ Hence, for each $j\in S\setminus\{i\}$, $$F_j\left(c_{S\setminus\{i\}},E_{S\setminus\{i\}}\right)> F_j(c_{S},E_{S})$$ and coalition $S\setminus\{i\}$ blocks $\pi$, which is a contradiction.\end{proof}

We now weaken the property of strict resource monotonicity to resource monotonicity and show that among all possible stable partitions that can exist in a coalition formation problem induced by a $\theta$-minimal proportional generalized claims problem, there is always one stable partition that is composed of the maximal possible number of coalitions of size $\theta$ and one coalition (of size smaller than $\theta$) formed by the remaining agents.\footnote{For convenience, we put all ``remaining agents'' together in one coalition. However, note that all coalitions formed among remaining agents have sizes smaller than $\theta$ and thus all payoffs are zero. Hence, we could have payoff-equivalently partitioned this set of remaining agents differently.}

\begin{theorem}\label{theorem:thetasize}
Let $N\in \mathcal{N}$ and $(c,E)\in \mathcal{P}_{\theta}^{N}$. Consider $F\in{\cal F}$ satisfying continuity, resource monotonicity, and consistency. Then, for the coalition formation problem with agent set $N$ induced by $((c,E),F)$, there exists a stable partition $\pi$ such that
\begin{itemize}[nosep]
  \item[\emph{(i)}]if $|N|$ is divisible by $\theta$, then for each $i\in N$, $|\pi(i)|=\theta$ and
  \item[\emph{(ii)}]if $|N|$ is not divisible by $\theta$, then $\pi$ contains $k=\lfloor\frac{|N|}{\theta}\rfloor$ coalitions of size $\theta$ and one coalition of size $|N|-k\,\theta<\theta$.
\end{itemize}
\end{theorem}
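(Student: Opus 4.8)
The plan is to exhibit one stable partition of the required shape rather than to describe all of them, so I would first isolate the structural fact that makes such an existence statement tractable. Since every coalition $T$ carries the fixed endowment $\alpha c^{T}$, an agent's attainable payoff $F_i(c_T,\alpha c^T)$ in $T$ depends only on $T$ (through the claims) and not on the ambient partition. Consequently, whether $T$ blocks a partition $\pi$ depends only on the \emph{payoff profile} induced by $\pi$, not on the internal grouping of the other agents. I would record this as a payoff-invariance observation: if two partitions induce the same payoff for every agent, then one is stable if and only if the other is. This is the lever that lets me rearrange coalitions without recomputing blocking conditions.

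Second, I would combine payoff-invariance with consistency. If $S\in\pi$ has $|S|\ge\theta$ and $F(c_S,\alpha c^S)$ equals the proportional vector $(\alpha c_i)_{i\in S}$, then for every subcoalition $S'\subseteq S$ with $|S'|\ge\theta$ consistency forces $F(c_{S'},\alpha c^{S'})=(\alpha c_i)_{i\in S'}$, since $\sum_{i\in S'}\alpha c_i=\alpha c^{S'}=E_{S'}$. Hence a proportionally-paid coalition of size larger than $\theta$ can be broken into size-$\theta$ blocks with every agent keeping the payoff $\alpha c_i$, and by payoff-invariance this preserves stability. This is exactly the mechanism that converts the ``large proportional coalitions'' produced by Theorem~\ref{theorem:thetasize-strictRM} under strict resource monotonicity into size-$\theta$ coalitions.

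With these tools I would argue by extremal selection: among the set of stable partitions, which is nonempty by \citet{gallo2018rationing}, pick $\pi^{*}$ maximizing the number of agents lying in coalitions of size exactly $\theta$ and, subject to that, minimizing the number of coalitions of size smaller than $\theta$. I would then show $\pi^{*}$ has shape (i)/(ii) by contradiction. If $\pi^{*}$ contains some $S$ with $|S|>\theta$ that is paid proportionally, splitting off size-$\theta$ blocks as above strictly increases the count of size-$\theta$ agents while preserving stability, contradicting maximality; a fewer-than-$\theta$ leftover would be absorbed into the pool of small-coalition agents. If $S$ is not paid proportionally, I would remove an over-proportional member and use resource monotonicity and consistency in the spirit of the proof of Theorem~\ref{theorem:thetasize-strictRM}(ii) to pass to a payoff-equivalent or payoff-improving partition with more size-$\theta$ structure. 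Finally, I would consolidate all agents remaining in coalitions of size below $\theta$ into a single remainder, which is payoff-equivalent because all such agents receive $0$, yielding exactly $k=\lfloor|N|/\theta\rfloor$ coalitions of size $\theta$ and one coalition of size $|N|-k\theta$.

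The step I expect to be the genuine obstacle is the repackaging of coalitions of size larger than $\theta$ when the rule is only weakly (not strictly) resource monotonic: Theorem~\ref{theorem:thetasize-strictRM}(ii) is then unavailable, so I cannot assume such coalitions are paid proportionally, and splitting one changes payoffs in an uncontrolled way, potentially creating a new blocking coalition among the dropped, now zero-paid, leftover agents. Two routes seem promising for overcoming this. One is to run the argument along a sequence of continuous, consistent, \emph{strictly} resource monotonic rules $F^{k}\to F$ (for instance, perturbing $F$ towards the proportional rule within its parametric representation), obtain a structure-(ii) stable partition for each $F^{k}$ via Theorem~\ref{theorem:thetasize-strictRM}, and pass to a constant subsequence: a strict block of the limit partition under $F$ would, by $F^{k}\to F$, be a strict block under $F^{k}$ for large $k$, a contradiction. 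The other is to show directly that the leftover zero-paid agents can never all strictly improve together, since any coalition they could join to reach positive payoffs must draw in already-satisfied proportional agents who are at best weakly better off. Making either the approximation (preservation of consistency along $F^{k}$) or this direct blocking analysis fully rigorous is where the real work lies.
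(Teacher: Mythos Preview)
Your proposal takes a genuinely different route from the paper, and the route has a real gap at exactly the point you flag.

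The paper does \emph{not} start from an arbitrary stable partition and reshape it. Instead it builds the desired partition from scratch via the top-coalition property. Using weak pairwise alignment and acyclicity (both inherited from continuity, resource monotonicity, and consistency via \citet{gallo2018rationing}), it first shows that $N$ always has a top-coalition of size at least~$\theta$ (Lemma~\ref{lemma:topcoalition}). Then, by iterating your ``remove an over-proportional member'' idea, it proves a structural fact (Proposition~\ref{prop:toptheta}): inside any coalition $S$ with $|S|>\theta$, there is a size-$\theta$ subcoalition $S'$ whose members weakly prefer $S'$ to $S$. Applying this to the top-coalition yields a size-$\theta$ top-coalition of $N$; peel it off, and repeat on the remainder. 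Because each $S_k$ is a top-coalition of the residual $N_k$, no member of $S_k$ can ever be part of a blocking coalition drawn from $N_k$, and by induction no blocking coalition exists at all.

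Your extremal/modification strategy, by contrast, tries to maintain stability while surgically altering a given stable $\pi^{*}$. This works cleanly when the oversized coalition $S$ is proportionally paid (your payoff-invariance observation is correct and is essentially the paper's Lemma~\ref{lemma:subcoalitionstheta}). But when $S$ is not proportional, removing an over-proportional agent $i$ does make $S\setminus\{i\}$ weakly better off, yet it makes $i$ strictly worse off (she drops from $F_i(c_S,E_S)>\alpha c_i>0$ to $0$). The resulting partition is neither payoff-equivalent nor Pareto-improving, so stability is not inherited: the discarded agent $i$, together with other zero-paid agents, may now form a blocking coalition that did not exist for $\pi^{*}$. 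Your two suggested patches do not close this. The approximation route needs a family $F^k\to F$ that is simultaneously continuous, consistent, and strictly resource monotonic, and the parametric-representation perturbation you hint at does not obviously preserve consistency. The ``direct blocking analysis'' is precisely the missing argument.

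The paper's top-coalition construction sidesteps the whole issue: it never needs to discard an agent into a zero-payoff state mid-proof, because the size-$\theta$ coalition it extracts is already maximal for its members over the entire remaining agent set, not merely over the coalition it was carved from.
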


Recall that in Theorem~\ref{theorem:thetasize-strictRM}, we use strict resource monotonicity and, for \emph{all} stable partitions, we obtain a lot of structure concerning coalition sizes and payoffs. However, by weakening the monotonicity property in Theorem~\ref{theorem:thetasize-strictRM}, we lose the general structure for all stable partitions and, in Theorem~\ref{theorem:thetasize}, we can only establish the existence of a stable partition with a specific coalitional size configuration (the CEA (R1) and CEL (R2) examples in Section~\ref{ex1} show that stable partitions with coalitions of sizes larger than $\theta$ are possible).\medskip

We prove Theorem~\ref{theorem:thetasize} in Appendix~\ref{appendix:proofTheorem6}.\medskip

Now, taking Theorem~\ref{theorem:thetasize} as our point of departure, we focus on stable coalitions of size $\theta$ that belong to a stable partition. Note that Theorem~\ref{theorem:thetasize} does not give information about how agents sort themselves into those stable $\theta$-size coalitions. In fact, the stable partitions may differ depending on how endowments are divided among agents. In the following sections, we will study the specific structure of stable partitions under two egalitarian parametric rules: the constrained equal awards (CEA) and the constrained equal losses (CEL) rule. Egalitarianism is a natural principle applied in many economic environments and hence, studying CEA and CEL in our context seems a natural first step to understand stable partitions.

\section{Stability under the constrained equal awards rule}\label{secCEA}

\subsection[Analysis of the case theta equals 2]{Analysis of the case $\bm{\theta=2}$}

For didactic reasons, we first analyze how agents organize themselves into stable pairs, i.e., $\theta=2$, when endowments are distributed under the constrained equal awards rule, CEA. Recall that this rule divides endowments as equally as possible, subject to no agent receiving more than her claim. Hence, under CEA, agents who get their claims receive over-proportional payoffs (more so the smaller the claims are) while some agents with higher claims receive under-proportional payoffs (more so the larger the claims are). Furthermore, the higher an agent's claim, the higher her contribution towards the endowment of any coalition she is part of. So intuitively, in order to form a stable pair, one could suspect that an agent with a very high claim will pair up with another high-claim agent. Indeed, high-claim agents play a special role in our construction of stable pairs.\medskip

Let $N=\{1,\dots, n\}$ and assume that $c_1\leq c_2\leq \cdots\leq c_n$. First, consider a highest-claim agent, e.g., agent $n$. As explained above, pair $\{n-1,n\}$ would be a contender to be part of a stable partition with the following possible justification: agent $n$ needs to team up with some agent to obtain a positive payoff and agent $n-1$ provides the highest possible contribution to pair $\{n-1,n\}$ while requiring a smaller transfer from the proportional payoff compared to other agents. This reasoning is correct, unless a lowest-claim agent, e.g., agent $1$, has such a small claim that conceding this small claim to agent $1$ implies a smaller loss from the proportional payoff for agent $n$ than the transfer from the proportional payoff of $n$ to agent $n-1$. We capture this intuition in an algorithm that determines a stable partition by sequentially pairing off either two highest-claims agents or a highest-claim with a lowest-claim agent. Note that if $|N|= 2$, then the grand coalition $N$ forms a stable partition. We therefore define the algorithm for agent sets with at least three agents.\bigskip

\noindent\textbf{CEA algorithm}\medskip

\noindent\textbf{Input:} $N\in{\cal N}$ such that $|N|>2$ and $(c,E)\in{\cal P}_{2}^{N}$.\medskip

\noindent\textbf{Step~$\bm{1}$.} Let $N_1:=N$ and $$\delta_1:=\frac{2c_1}{2c_1 -c_{n-1} +c_n}.$$

We distinguish two cases:

\begin{itemize}[nosep]
	\item[(i)]If $\alpha\leq\delta_1$, then $\{n-1,n\}\succsim^{((c,E),CEA)}_n \{1,n\}$ and set $S_1:=\{n-1,n\}$.
	\item[(ii)] If $\alpha>\delta_1$, then $\{1,n\}\succ^{((c,E),CEA)}_n \{n-1,n\}$ and set $S_1:=\{1,n\}$.
\end{itemize}
Set $N_2:=N\setminus S_1$. If $|N_2|\leq 2$, then set $S_2:=N_2$, define $\pi:=\{S_1,S_2\}$, and stop. Otherwise, go to Step~2.\medskip

\noindent\textbf{Step~$\bm{k}$ ($\bm{k>1}$).} Recall from Step~$k-1$ that $N_k:=N\setminus\left(\cup_{j=1}^{k-1}S_{j}\right)$ and $|N_k|> 2$.  We relabel the agents in $N_k$ such that $N_k=\{1',\ldots,n'\}$ and $c_{1'}\leq\ldots\leq c_{n'}$. Let $$\delta_k:=\frac{2c_{1'}}{2c_{1'} -c_{(n-1)'} +c_{n'}}.$$ We distinguish two cases:
\begin{itemize}[nosep]
	\item[(i)]If $\alpha\leq\delta_k$, then $\{(n-1)',n'\}\succsim^{((c,E),CEA)}_{n'} \{1',n'\}$ and set $S_k:=\{(n-1)',n'\}$.
	\item[(ii)] If $\alpha>\delta_k$, then $\{1',n'\}\succ^{((c,E),CEA)}_{n'} \{(n-1)',n'\}$ and set $S_k:=\{1',n'\}$.
\end{itemize}
Set $N_{k+1}:=N\setminus \cup_{j=1}^{k}S_j$. If $|N_{k+1}|\leq 2$, then set $S_{k+1}:=N_{k+1}$, define $\pi:=\{S_1,\ldots,S_{k+1}\}$, and stop. Otherwise, go to Step~$k+1$.\medskip

\noindent\textbf{Output:} A partition $\pi=\{S_1,\ldots,S_l\}$ for the coalition formation problem with agent set $N$ induced by $((c,E),CEA)$ such that for each $k\in\{1,\dots,l-1\}$, $|S_{k}|=2$, and $|S_{l}|\leq 2$. If $|N|$ is even, then partition $\pi$ is constructed in $l-1=\frac{n-2}{2}$ steps. If $|N|$ is odd, then partition $\pi$ is constructed in $l-1=\frac{n-1}{2}$ steps.\medskip

Note that the CEA algorithm only needs the agents' claims and the endowment of coalition $N$ as input. In particular, it is not necessary to first derive agents' coalitional payoffs and preferences over coalitions.
The following result states that the partition obtained by the CEA algorithm is stable.

\begin{theorem}\label{theorem:CEA}
Let $N\in{\cal N}$ such that $|N|>2$ and $(c,E)\in{\cal P}_{2}^{N}$. Consider the coalition formation problem with agent set $N$ induced by $((c,E),CEA)$. Then, the partition obtained by the CEA algorithm is stable.
\end{theorem}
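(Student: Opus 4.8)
The plan is to argue by strong induction on $|N|$. The key structural observation is that the CEA algorithm first produces a pair $S_1$ (either $\{n-1,n\}$ or $\{1,n\}$) and then runs verbatim on the reduced agent set $N_2=N\setminus S_1$; since all coalitional endowments of subsets of $N_2$ remain $\alpha c^S$ with the \emph{same} proportion $\alpha=E/c^N$, the reduced problem lies in $\mathcal{P}_2^{N_2}$ and the algorithm's output on it is exactly the restriction $\pi'$ of $\pi$ to $N_2$. Thus $\pi=\{S_1\}\cup\pi'$, and by the induction hypothesis $\pi'$ is stable (the base cases $|N_2|\le 2$ being immediate, since a lone pair or singleton admits no profitable deviation because singletons yield payoff $0$). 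As under CEA every agent receives a non-negative payoff in $\pi$ while any coalition of size $<2$ yields $0$, no singleton ever blocks; hence it suffices to rule out blocking coalitions $T$ with $T\cap S_1\neq\emptyset$, because any blocking $T\subseteq N_2$ would also block $\pi'$, contradicting the induction hypothesis.

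The engine of the argument is a monotonicity lemma that I would isolate and prove first: for any agent set $M$ with highest-claim agent $h$, agent $h$'s CEA payoff over all coalitions $T\subseteq M$ containing $h$ is maximized at a two-agent coalition, and among those by pairing $h$ with either the lowest-claim or the second-highest-claim agent of $M$. Writing $h$'s payoff in $\{i,h\}$ (with $c_i\le c_h$) as $\alpha c_h-(1-\alpha)c_i$ when $c_i<\tfrac{\alpha c_h}{2-\alpha}$ and as $\tfrac{\alpha}{2}(c_i+c_h)$ otherwise, this payoff is continuous, decreasing, then increasing in $c_i$, so over pairs the maximum is attained at an extreme partner. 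To pass from arbitrary coalitions to pairs I would use a peeling step: in the CEA allocation on $T$ agent $h$ is always capped (receives the parameter $\lambda_T<c_h$) and in particular gets at least the average $\alpha c^T/|T|$; removing any agent who is at her claim strictly raises the common pool of the capped agents, whereas if every member is capped, removing a lowest-claim member raises the average and hence $h$'s payoff. Either way a single removal never decreases $h$'s payoff, so iterating reduces any coalition to a pair without loss to $h$. This also explains why $\delta_1$ records precisely agent $n$'s indifference threshold between $\{n-1,n\}$ and $\{1,n\}$.

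With the lemma in hand I would conclude as follows. Applying it in $M=N$ shows agent $n$ already obtains in $S_1$ its maximum payoff over \emph{all} coalitions containing it, so $n$ cannot strictly improve and $n\notin T$ for any blocking $T$. It remains to exclude the partner $p$ of $n$. In case (ii) ($\alpha>\delta_1$, $S_1=\{1,n\}$) the threshold inequality forces $c_1<\tfrac{\alpha c_n}{2-\alpha}$, so agent $1$ is at her claim and receives $c_1$, the maximum any coalition can give her; hence $1\notin T$. In case (i) ($\alpha\le\delta_1$, $S_1=\{n-1,n\}$) the inequality $\delta_1\le \tfrac{2c_{n-1}}{c_{n-1}+c_n}$ (a short cross-multiplication using $c_1\le c_{n-1}\le c_n$) guarantees that $n-1$ and $n$ split equally, so $n-1$ receives $\tfrac{\alpha}{2}(c_{n-1}+c_n)$; since $n\notin T$, agent $n-1$ is the highest claimant of $T$, and the lemma applied in $M=N\setminus\{n\}$ bounds $n-1$'s payoff in any such coalition by the larger of its payoffs in $\{1,n-1\}$ and $\{n-2,n-1\}$, each at most $\tfrac{\alpha}{2}(c_{n-1}+c_n)$ because $c_1,c_{n-2}\le c_n$. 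Thus $n-1$ cannot strictly improve and $n-1\notin T$. In every case $T\cap S_1=\emptyset$, the desired contradiction, closing the induction.

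The main obstacle is the monotonicity lemma, and specifically the peeling step: one must track how the CEA parameter and the set of capped agents change upon removing an agent (an at-claim removal enlarges the common pool, an all-capped removal raises the average), checking that any re-classification of borderline agents only ever benefits the top agent $h$. The remaining pieces — the two threshold computations tying $\delta_1$ to the at-claim and equal-split regimes, and the final payoff comparisons — are short and routine once the lemma is available.
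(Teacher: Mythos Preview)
Your approach is correct and genuinely different from the paper's. The paper proves four lemmas about pairs (every agent weakly prefers to be paired with $n$; for extreme $\alpha$, agent $n$ prefers $\{n-1,n\}$ or $\{1,n\}$; and the threshold $\delta_1$ separates these regimes), then shows each $S_k$ is weakly best among size-$\le 2$ coalitions for its members within $N_k$, and finally invokes the general Proposition~\ref{prop:toptheta} (which uses resource monotonicity and consistency of CEA) to reduce any larger blocking coalition to one of size $2$. Your route replaces this machinery with a single CEA-specific peeling lemma (the highest-claim agent's payoff weakly rises when any member is removed, hence is maximized at a pair, and among pairs at an extreme partner) combined with strong induction on $|N|$. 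What your approach buys is self-containment: you never need the general reduction result or the abstract axioms, and the inductive structure is clean because the reduced problem on $N_2$ is again in $\mathcal{P}_2^{N_2}$ with the \emph{same} $\alpha$. What the paper's approach buys is modularity: its lemmas and Proposition~\ref{prop:toptheta} are reused verbatim for the $\theta$-CEA algorithm (Theorem~\ref{theorem:CEAtheta}), whereas your peeling lemma as stated is tailored to $\theta=2$. Both the threshold checks you need ($\alpha>\delta_1\Rightarrow c_1<\tfrac{\alpha c_n}{2-\alpha}$, and $\delta_1\le\tfrac{2c_{n-1}}{c_{n-1}+c_n}$) coincide with the paper's $\beta_1\le\delta_1\le\gamma_1$ computation, so those steps are indeed routine.
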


We prove Theorem~\ref{theorem:CEA} in Appendix~\ref{appendix:proofCEA} but we explain the key intuition of the proof here. Observe that, in each step of the CEA algorithm, either two highest-claims agents or a lowest-claim agent and a highest-claim agent are paired off. We first show (Appendix~\ref{appendix:proofCEA}, Lemma~\ref{lemma:top}) that each agent weakly prefers to form a pair with a highest-claim agent, e.g., with agent~$n$, instead of with any other agent. Hence, by matching agent~$n$ with her most desirable partner, a stable pair is formed.\medskip

Second, we show (Appendix~\ref{appendix:proofCEA}, Lemmas~\ref{corollary:alpha-very-small} and \ref{corollary:alpha-very-large}) that agents $1$ and $n-1$ are potential ``stable partners'' for agent~$n$. Furthermore,  we show that when $\alpha\leq \frac{2c_{1}}{c_{1}+c_{n}}$, then $\{n-1,n\}$ is the candidate for a stable pair, and when $\alpha\geq \frac{2c_{n-1}}{c_{n-1}+c_{n}}$, then $\{1,n\}$ is the candidate for a stable pair. Note that $\frac{2c_{1}}{c_{1}+c_{n}}\leq \frac{2c_{n-1}}{c_{n-1}+c_{n}}$. Thus, we next need to determine a threshold value for parameter $\alpha \in [\frac{2c_{1}}{c_{1}+c_{n}},\frac{2c_{n-1}}{c_{n-1}+c_{n}}]$ to see when agent~$n$'s partner of choice is $n-1$ and when it is $1$. We then show that the threshold we are looking for is exactly $\frac{2c_1}{2c_1 -c_{n-1} +c_n}$ (Appendix~\ref{appendix:proofCEA}, Lemma~\ref{lemma:threshold}), the value specified at Step~1 of the CEA algorithm to trigger either Case~(i) with stable pair candidate $\{n-1,n\}$ or Case~(ii) with stable pair candidate $\{1,n\}$. Using the steps of the CEA algorithm, we then show that the resulting CEA partition is stable.\medskip

Finally, note that the results in Appendix~\ref{appendix:proofCEA} (Lemmas~\ref{corollary:alpha-very-small}, \ref{corollary:alpha-very-large}, and \ref{lemma:threshold}) can be used to show that for values of $\alpha$ low enough to trigger Case~(i) in each step of the CEA algorithm, starting with highest-claim agent $n$, a stable partition is formed by positively assortative pairs, i.e., pairs formed by contiguous agents. This implies that if $n$ is odd, then the singleton coalition will be formed by agent~1. Similarly, for values of $\alpha$ high enough to trigger Case (ii) in each step of the CEA algorithm, a stable partition is formed by negatively assortative pairs, i.e., pairs formed by ``opposed'' agents. This implies that if $n$ is odd, then the singleton coalition will be formed by agent~$\frac{n+1}{2}$. So, depending on $\alpha$, the CEA algorithm constructs a stable partition that is either positively assortative, negatively assortative, or formed by both types of pairs.\medskip

Positively assortative matching of high types has been observed in other contexts, for instance, the neoclassical marriage model by \citet{Becker}. In our example of research team formations, positively assortative matching of high types can indeed be observed in practice but it can also be observed in other situations such as the formation of pairs of students for class projects or other social environments. However, we also observe negatively assortative research team formations, for example in mentor-mentee relationships such as between a PhD student and her advisor.\medskip

Next, we illustrate how the CEA algorithm works.

\begin{example}[\textbf{CEA algorithm}]\normalfont
Let $N=\{1,2,3,4,5\}$, $c=(2,6,22,30,34)$, $E=47$, and $(c,E)\in{\cal P}_{2}^{N}$ (hence, $\alpha=\frac{1}{2}$). With Theorem~\ref{theorem:thetasize} as our point of departure, we focus on coalitions of size $2$.\medskip

\noindent\textbf{Step~$\bm{1}$.} Let $N_1=N$ and $\delta_1=\frac{2c_1}{2c_1 -c_{4} +c_5}=\frac{2\times 2}{2\times 2 - 30+34}=\frac{1}{2}.$\medskip

Since $\alpha=\frac{1}{2}=\delta_1$,  by Case (i), we select pair $\{4,5\}$.\medskip

\noindent\textbf{Step~$\bm{2}$.} Let $N_2=N_1\setminus\{4,5\}=\{1,2,3\}$ and $\delta_2=\frac{2c_1}{2c_1 -c_{2} +c_3}=\frac{2\times 2}{2\times 2 - 6+22}=\frac{1}{5}.$\medskip

Since $\alpha=\frac{1}{2}>\frac{1}{5}=\delta_2$, by Case (ii), we select pair $\{1,3\}$.\medskip

 Now, we define $N_3=N\setminus\{1,3,4,5\}=\{2\}$. Since $|N_{3}|< 2$, we select $\{2\}$ and we obtain $$\pi=\{\{4,5\},\{1,3\},\{2\}\}.$$

Note that this partition is formed by a positively assortative pair, $\{4,5\}$, a negatively assortative pair, $\{1,3\}$, and the remaining singleton, $\{2\}$. Note that once coalition $\{4,5\}$ is formed at Step~1, agents $1$ and $3$ are ``opposed'' agents in the new subset of agents $N_2.$\medskip

Consider now the same set of agents and claims with $E=9.4$ (hence, $\alpha=\frac{1}{10}$). We apply the algorithm again.\medskip

\noindent\textbf{Step~$\bm{1}$.} Let $N_1=N$ and $\delta_1=\frac{2c_1}{2c_1 -c_{4} +c_5}=\frac{2\times 2}{2\times 2 - 30+34}=\frac{1}{2}.$\medskip

Since $\alpha=\frac{1}{10}<\frac{1}{2}=\delta_1$, by Case (i), we select pair $\{4,5\}$.\medskip

\noindent\textbf{Step~$\bm{2}$.} Let $N_2=N_1\setminus\{4,5\}=\{1,2,3\}$ and $\delta_2=\frac{2c_1}{2c_1 -c_{2} +c_3}=\frac{2\times 2}{2\times 2 - 6+22}=\frac{1}{5}.$\medskip

Since $\alpha=\frac{1}{10}<\frac{1}{5}=\delta_2$, by Case (i), we select pair $\{2,3\}$.\medskip

Now, we define $N_3=N\setminus\{2,3,4,5\}=\{1\}$. Since $|N_{3}|< 2$, we select $\{2\}$ and we obtain $$\pi=\{\{4,5\},\{2,3\},\{1\}\}.$$

In this case, both pairs of the partition are positively assortative.
More generally,  for ``small'' values of $\alpha$ ($\alpha\leq\frac{2c_1}{2c_1-c_2+c_3}=\frac{1}{5}$), a stable partition formed by only positively assortative pairs is formed. Similarly, for ``high'' values of $\alpha$ ($\alpha\geq\frac{2c_2}{2c_2-c_3+c_4}=\frac{3}{5}$) a stable partition formed by only negatively assortative pairs is formed. For values of $\alpha$ in $[\frac{1}{5},\frac{3}{5}]$, stable partitions formed by both types of pairs (positively and negatively assortative) arise. Theorem~\ref{theorem:CEA} proves that partition $\pi$ is stable. We would like to point out that the algorithm finds a stable partition with very little information: only the vector of claims and the endowment of coalition $N$ (which gives us the value of $\alpha$) are required.\hfill $\diamond$\label{example:CEAalgorithm}
\end{example}

Finally, the stable partition obtained by the CEA algorithm is not necessarily unique (even beyond tie-breaking between Cases (i) and (ii) in the algorithm); the CEA (R1) example in Section~\ref{ex1} shows that stable partitions with larger coalition sizes are possible.

\subsection{Analysis of the general case}

For $\theta=2$, the CEA algorithm finds a stable partition by stepwise joining a highest-claim agent with either another highest-claim agent or a lowest-claim agent. Therefore, one could suspect that a similar algorithm might work for any $\theta\in \mathbb{N}$.\medskip

Let $N=\{1,\dots, n\}$ and assume that $c_1\leq c_2\leq \cdots\leq c_n$. For each $S\subseteq N$, we denote the CEA parameter associated with $(c_S,E_S)$ by $\lambda_{E_S}$, i.e., for each $i\in S$, $CEA_{i}(c_S,E_S)=\min\{c_i,\lambda_{E_S}\}$, where $\lambda_{E_S}$ is chosen so that $\sum_{j\in S}\min\{c_j,\lambda_{E_S}\}= E_S$.\medskip

First, assume that $\theta=2$ and consider a highest-claim agent, e.g., agent $n$. By the CEA algorithm, we know that agents $1$ and $n-1$ are the possible ``stable partners'' for agent $n$. Hence, either coalition $\{1,n\}$ or coalition $\{n-1,n\}$ would form. Consider now $\theta=3$ and the coalition formed for $\theta - 1 = 2$ (either coalition $\{1,n\}$ or coalition $\{n-1,n\}$), which now plays the role of agent $n$. Then, the next agent to join will again be either a lowest-claim agent (agents $2$ or $1$, respectively) or a highest-claim agent (agents $n-1$ or $n-2$, respectively). By sequentially adding lowest-claim or highest-claim agents, step by step, we construct a first coalition of size $\theta$. We now more formally present the algorithm to determine the first coalition of a stable partition by sequentially adding either a lowest-claim agent or a highest-claim agent. Note that if $|N|= \theta$, then the grand coalition $N$ forms a stable partition. We therefore define the algorithm for agent sets with at least $\theta+1$ agents. \bigskip\pagebreak

\noindent$\bm{\theta}$-\textbf{CEA set algorithm with agent set $\bm{N}$}\medskip

\noindent\textbf{Input:} $N\in{\cal N}$ such that $|N|>\theta$ and $(c,E)\in{\cal P}_{\theta}^{N}$.\medskip

\noindent\textbf{Step~$\bm{1}$.} Let  $\theta'=2$, $N'_1:=N$, and consider coalition $\{n-1,n\}$ and $\lambda_{E_{\{n-1,n\}}}$. We distinguish two cases:
\begin{itemize}[nosep]
	\item[(i)]If $\lambda_{E_{\{n-1,n\}}}\leq (1-\alpha)c_1 +\alpha c_{n-1}$, then $\{n-1,n\}\succsim^{((c,E),CEA)}_{n}\{1,n\}$ and set $S'_{1}:=\{n-1,n\}$.
\item[(ii)]If $\lambda_{E_{\{n-1,n\}}}> (1-\alpha)c_1 +\alpha c_{n-1}$, then $\{1,n\}\succ^{((c,E),CEA)}_{n}\{n-1,n\}$ and set $S'_{1}:=\{1,n\}$.
\end{itemize}
Note that these cases are equivalent to the cases specified at Step~$1$ of the CEA algorithm.\footnote{By the definition of the CEA rule, $\lambda_{E_{\{n-1,n\}}}\geq\frac{\alpha (c_{n-1}+c_{n})}{2}$. In addition, $\frac{\alpha (c_{n-1}+c_{n})}{2}\lessgtr (1-\alpha)c_1 +\alpha c_{n-1}$ is equivalent to $\alpha \lessgtr \frac{2c_1}{2c_1 -c_{n-1} +c_n}=\delta_{1}$. Hence, if $\lambda_{E_{\{n-1,n\}}}=\frac{\alpha (c_{n-1}+c_{n})}{2}$, we obtain the equivalence between Cases (i) and (ii) here and Cases (i) and (ii) in the CEA algorithm. Finally, if $\lambda_{E_{\{n-1,n\}}}>\frac{\alpha (c_{n-1}+c_{n})}{2}$, then $\lambda_{E_{\{n-1,n\}}}>c_{n-1}=(1-\alpha)c_{n-1}+\alpha c_{n-1}\geq (1-\alpha)c_{1}+\alpha c_{n-1}$ and Case (ii) applies. Therefore, $\lambda_{E_{\{n-1,n\}}}=\alpha (c_{n-1}+c_n)-c_{n-1}> \frac{\alpha (c_{n-1}+c_{n})}{2}$, which implies $\alpha \geq \frac{2c_{n-1}}{c_{n-1}+c_n}\geq \frac{2c_{1}}{2c_1-c_{n-1}+c_n}$ (the last inequality is shown as $\gamma_1\geq \delta_1$ at the end of Appendix~\ref{appendix:proofCEA}), which coincides with Case (ii) in the CEA algorithm.}\medskip

\noindent If $\theta=2$, then set $S_1:=S'_1$ and stop. Otherwise, set $N'_{2}:=N\setminus S'_1$, and go to Step~$2$.\medskip

\noindent\textbf{Step~$\bm{k}$ ($\bm{k=2,\ldots,\theta-1}$).} Consider $\theta'=k+1$. Recall from Step~$k-1$ that $N'_k:=N\setminus S'_{k-1}$. Let agent $i$ be the agent with the lowest label in $N_{k}'$, $i=\min N'_{k}$, and let agent $j$ be the agent with the highest label in $N'_{k}$, $j=\max N'_k$. Thus, agent $i$ is a lowest-claim agent and agent $j$ is a highest-claim agent in $N'_{k}$. Consider coalition $S'_{k-1}\cup\{j\}$ and $\lambda_{E_{(S'_{k-1}\cup\{j\})}}$. We distinguish two cases:
\begin{itemize}[nosep]
	\item[(i)]If $\lambda_{E_{(S'_{k-1}\cup\{j\})}}\leq (1-\alpha)c_i +\alpha c_j$, then $S'_{k-1}\cup\{j\}\succsim^{((c,E),CEA)}_{S'_{k-1}}S'_{k-1}\cup\{i\}$ and set $S'_{k}:=S'_{k-1}\cup\{j\}$.
\item[(ii)]If $\lambda_{E_{(S'_{k-1}\cup\{j\})}}> (1-\alpha)c_i +\alpha c_j$, then $S'_{k-1}\cup\{i\}\succ^{((c,E),CEA)}_{S'_{k-1}}S'_{k-1}\cup\{j\}$ and set $S'_{k}:=S_{k-1}\cup\{i\}$.
\end{itemize}
If $\theta=k+1$, then set $S_{1}:=S'_{k}$ and stop. Otherwise, set $N'_{k+1}:=N\setminus S'_k$, and go to Step~$k+1$.\medskip

\noindent\textbf{Output:} A coalition $S_{1}$ of size $\theta$, which is obtained in $\theta-1$ steps.\medskip

Note that in each Step~$k$ of the above algorithm, a set of agents $S'_{k-1}$ considers to add either the lowest-label or the highest-label agent of the remaining set of agents. In Appendix~\ref{appendix:conditionalgorithm}, we prove that the agent who is added according to Cases (i) or (ii) is weakly preferred by the agents in $S'_{k-1}$ to any other agent who could have been added.\medskip

The $\theta$-CEA set algorithm with agent set $N$ generalizes to any agent set with at least $\theta+1$ agents; the thus obtained $\theta$-CEA set algorithm is a subroutine of the general $\theta$-CEA algorithm that we explain next.\medskip

Starting with the set of all agents $N$, the above algorithm constructs the first coalition $S_1$ of a stable partition. Then, if the residual set of agents $N_1 = N\setminus S_1$ is such that $|N_1|\leq \theta$, $(S_1,N_1)$ forms a stable partition. Otherwise, we apply the $\theta$-CEA set algorithm to the set of agents $N_1$. Recall that if $|N|= \theta$, then the grand coalition $N$ forms a stable partition. We therefore define our generalization of the CEA algorithm for agent sets with at least $\theta+1$ agents.\bigskip

\noindent$\bm{\theta}$-\textbf{CEA algorithm}\medskip

\noindent\textbf{Input:} $N\in{\cal N}$  such that $|N|>\theta$ and $(c,E)\in{\cal P}_{\theta}^{N}$.\medskip

\noindent\textbf{Step~$\bm{1}$.} Let $N_1:=N$. By applying the $\theta$-CEA set algorithm with agent set $N_1$ we obtain the first coalition $S_1$.  Set $N_2:=N\setminus S_1$. If $|N_2|\leq \theta$, then set $S_2:=N_2$, define $\pi:=\{S_1,S_2\}$, and stop. Otherwise, go to Step~2.\medskip

\noindent\textbf{Step~$\bm{k}$ ($\bm{k>1}$).} Recall from Step~$k-1$ that $N_k:=N\setminus\left(\cup_{j=1}^{k-1}S_{j}\right)$ and $|N_k|> \theta$. We relabel the agents in $N_k$ such that $N_k=\{1',\ldots,n'\}$ and $c_{1'}\leq\ldots\leq c_{n'}$. By applying the $\theta$-CEA set algorithm with agent set $N_k$, we obtain coalition $S_{k}$. Set $N_{k+1}:=N\setminus \cup_{j=1}^{k}S_j$. If $|N_{k+1}|\leq \theta$, then set $S_{k+1}:=N_{k+1}$, define $\pi:=\{S_1,\ldots,S_{k+1}\}$, and stop. Otherwise, go to Step~$k+1$.\medskip

\noindent\textbf{Output:} A partition $\pi=\{S_1,\ldots,S_l\}$ for the coalition formation problem with agent set $N$ induced by $((c,E),CEA)$ such that for each $k\in\{1,\dots,l-1\}$, $|S_{k}|= \theta$ and $|S_{l}|\leq \theta$. If $|N|$ is divisible by $\theta$, then partition $\pi$ is constructed in $l-1=\frac{n-\theta}{\theta}$ steps. If $|N|$ is not divisible by $\theta$, then partition $\pi$ is constructed in $l-1=\lfloor\frac{n}{\theta}\rfloor$ steps.\medskip

Note that the $\theta$-CEA algorithm only needs the agents' claims, the endowment of coalition $N$, and some few coalitional payoffs (and the associated $\lambda$ parameters) as input. In particular, it is not necessary to first derive agents' coalitional payoffs and preferences over all coalitions. The following result states that the partition obtained by the $\theta$-CEA algorithm is stable.

\begin{theorem}\label{theorem:CEAtheta}
Let $N\in{\cal N}$  such that $|N|>\theta$ and $(c,E)\in{\cal P}_{\theta}^{N}$. Consider the coalition formation problem with agent set $N$ induced by $((c,E),CEA)$. Then, the partition obtained by the $\theta$-CEA algorithm is stable.
\end{theorem}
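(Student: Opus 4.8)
The plan is to prove stability by induction on $|N|$, showing that the first coalition $S_1$ produced by the $\theta$-CEA set algorithm is a \emph{top coalition} for each of its members, and then recursing on the residual problem. I would first observe that deleting $S_1$ leaves a problem $(c_{N\setminus S_1},\alpha c^{N\setminus S_1})\in\mathcal{P}_\theta^{N\setminus S_1}$ with the \emph{same} proportionality factor $\alpha$, and that running the $\theta$-CEA algorithm on $N$ just applies the $\theta$-CEA set algorithm to $N_2:=N\setminus S_1$ from Step~2 onwards. Hence $\{S_2,\ldots,S_l\}$ is exactly the output of the $\theta$-CEA algorithm on $N_2$, and by the induction hypothesis it is stable for the induced sub-problem. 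Since an agent's payoff in a coalition $T$ depends only on $(c_T,E_T)$, and since for every $i\in N_2$ the coalition containing $i$ is the same in $\pi$ and in $\{S_2,\ldots,S_l\}$, any $T\subseteq N_2$ that blocks $\pi$ would also block the sub-partition; the induction thus rules out all blocking coalitions contained in $N_2$. The base case $|N|\le\theta$ is immediate, as the only coalition of size at least $\theta$ is the grand coalition, which cannot block itself.

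It then remains to exclude blocking coalitions $T$ with $T\cap S_1\neq\emptyset$, and for this I would prove the key lemma that $S_1$ is a top coalition for each of its members, i.e.\ no $i\in S_1$ strictly prefers any $T\ni i$ with $|T|\ge\theta$ to $S_1$. Partition the members of $S_1$ by the CEA parameter $\lambda_{E_{S_1}}$: a member $i$ with $c_i\le\lambda_{E_{S_1}}$ receives her full claim $c_i$ under $CEA$, which is the largest payoff she can obtain in any coalition, so such agents never block. For a member $m$ with $c_m>\lambda_{E_{S_1}}$ we have $CEA_m(c_{S_1},E_{S_1})=\lambda_{E_{S_1}}$, while her payoff in any $T\ni m$ equals $\min\{c_m,\lambda_{E_T}\}\le\lambda_{E_T}$; hence it suffices to show that $S_1$ maximizes the CEA parameter, namely $\lambda_{E_{S_1}}\ge\lambda_{E_T}$ for every $T$ with $|T|\ge\theta$. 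Granting this, no member of $S_1$ can strictly improve, every blocking coalition avoids $S_1$, and the theorem follows from the induction above.

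The crux, and the step I expect to be the main obstacle, is therefore the claim $\lambda_{E_{S_1}}=\max_{|T|\ge\theta}\lambda_{E_T}$, which I would establish in three stages. First, using $E_T=\alpha c^T$ and writing $A,B$ for the members of $T$ at or above, respectively below, $\lambda_{E_T}$, one obtains $\lambda_{E_T}=\big(\alpha\sum_{i\in A}c_i-(1-\alpha)\sum_{i\in B}c_i\big)/|A|$; since adjoining a below-threshold agent strictly lowers this value while the size constraint only forces $|A|+|B|\ge\theta$, the maximizer has size exactly $\theta$, and an exchange argument (replacing the lowest above-threshold member by the globally highest-claim agent $n$) shows it contains $n$. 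Second, a further exchange argument shows the maximizer consists of the $p$ lowest- and $q$ highest-claim agents for some split $p+q=\theta$, which is precisely the ``two-ended'' shape that the $\theta$-CEA set algorithm produces by construction. Third, I would combine the step-optimality established in Appendix~\ref{appendix:conditionalgorithm} (each step adds the agent most preferred by the incumbents in $S'_{k-1}$) with a threshold/monotonicity argument analogous to the one underpinning Theorem~\ref{theorem:CEA} to show that the greedy sequence of two-ended choices lands on the globally optimal split rather than a merely locally optimal one. Verifying that these myopically optimal choices compose into the $\lambda$-maximizing coalition is the delicate part: a ``lowest versus highest'' commitment made at an early step cannot be revised later, so the argument must show that the threshold conditions are mutually consistent across the $\theta-1$ steps and hence that no such regret can arise.
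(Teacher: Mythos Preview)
Your overall inductive scaffolding and the reduction to size-$\theta$ blocking coalitions (via Proposition~\ref{prop:toptheta}) match the paper, but the heart of your argument---that $\lambda_{E_{S_1}}=\max_{|T|\ge\theta}\lambda_{E_T}$---is exactly the step you leave open, and this is where your proposal diverges from the paper's proof. You correctly identify that the one-agent-at-a-time optimality of Appendix~\ref{appendix:conditionalgorithm} does not obviously compose into global optimality of the final $\theta$-coalition, and you do not close this gap: the ``threshold/monotonicity argument analogous to Theorem~\ref{theorem:CEA}'' you gesture at would itself have to rule out that an early commitment (say, adding a low agent at Step~$k$) can be regretted once later agents are fixed, and nothing in Lemmas~\ref{lemma:thetaalpha-very-large}--\ref{lemma:thresholdtheta} controls this cross-step interaction. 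So as written the proposal has a genuine gap at its central lemma.

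The paper sidesteps the global-maximization claim entirely. Rather than proving that $S_1$ is a top coalition, it assumes a blocking coalition $S$ of size $\theta$ exists with $S\cap S_1\neq\emptyset$, notes that then $\lambda_{E_S}>\lambda_{E_{S_1}}$, and derives a contradiction by a transfer--subsidy accounting: partition each coalition into \emph{recipients} (over-proportional payoffs) and \emph{transfer agents} (under-proportional), and show in four parts that (1) every agent in $S\setminus S_1$ is a recipient in $S$, (2) every recipient of $S_1$ surviving into $S$ gets a strictly larger subsidy, (3) every transfer agent of $S_1$ surviving into $S$ transfers strictly less, and (4) each newcomer in $S\setminus S_1$ receives a subsidy at least as large as the maximal subsidy in $S_1$. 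Part~(4) is where the paper exploits the specific order in which the $\theta$-CEA set algorithm added agents---in particular, which agent was added \emph{last} and under which case of the threshold test---and this structural information is what substitutes for your global-optimality claim. Combining (1)--(4) forces the total subsidy in $S$ to exceed the total transfer in $S$, contradicting feasibility. Your route, if completed, would be more conceptual; but completing it seems to require machinery of comparable intricacy to the paper's Parts~1--4, and quite possibly the same argument in disguise.
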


We prove Theorem~\ref{theorem:CEAtheta} in Appendix~\ref{appendix:proofTheorem7} but we explain the key intuition of the proof here. The $\theta$-CEA set algorithm describes a process of adding agents one by one such that the respective existing coalition weakly prefers the added agent to any other agent that could be added. This process already captures an aspect of coalitional stability but it is ``myopic'' or ``local'' in nature and we still need to show that the thus obtained coalitions are ``farsightedly'' or ``globally'' stable as well. In other words, we have to prove that the greedy one-by-one addition of agents in the $\theta$-CEA set algorithm leads to a stable partition.

To see this, we consider the partition $\pi$ constructed by the $\theta$-CEA algorithm and assume, by contradiction, that a blocking coalition $S$ exists. As part of the proof, we then show that an agent who received an over-proportional payoff in her coalition in partition $\pi$ will receive an even larger over-proportional payoff in $S$. Furthermore, an agent who received an under-proportional payoff in her coalition in partition $\pi$ will receive an over-proportional payoff in $S$ or transfer less to other agents in $S$. This, together with some additional proof steps, leads to an imbalance between the transfers that are being made within coalition $S$ between agents with over-proportional and under-proportional payoffs. Note that when $\theta=2$, the $\theta$-CEA algorithm reduces to the CEA algorithm and an alternative proof can be applied.\medskip

Next, we illustrate how the $\theta$-CEA algorithm works for $\theta = 3$.

\begin{example}[\textbf{3-CEA algorithm}]\normalfont
Let $N=\{1,2,3,4,5,6,7\}$, $c=(2,6,22,30,34,38,46)$, $E=89$, and $\theta = 3$, i.e., $(c,E)\in{\cal P}_{3}^{N}$ (hence, $\alpha=\frac{1}{2}$). With Theorem~\ref{theorem:thetasize} as our point of departure, we focus on coalitions of size $3$.\medskip

\noindent \textbf{3-CEA algorithm, Step~$\bm{1}$.} Let $N_1:=N$.\medskip

\noindent\textbf{3-CEA set algorithm, Step~$\bm{1.1}$.} Let  $\theta'=2$ and consider coalition $\{6,7\}$. By applying the CEA rule to this coalition, we obtain  $CEA((38,46), 42)=(21,21)$ and $\lambda_{ \{6,7\}}=21$. Since
\begin{itemize}[nosep]
	\item[(ii)]$\lambda_{ \{6,7\}}=21>20=\frac{1}{2}\times 2+\frac{1}{2}\times 38=(1-\alpha)c_1 +\alpha c_6,$
\end{itemize}
we select coalition $\{1,7\}$. Given that $\theta\neq 2$, we go to Step~2 of the $3$-CEA set algorithm.\medskip

\noindent\textbf{3-CEA set algorithm, Step~$\bm{1.2}$.}  Let $\theta'=3$ and $N'_2=N'_1\setminus\{1,7\}=\{2,3,4,5,6\}$. Then, $i=2$ and $j=6$ and we consider coalition $\{1,6,7\}.$ By applying the CEA rule to this coalition, we obtain $CEA((2,38,46), 43)=(2, 20.5,20.5)$ and $\lambda_{ \{1,6,7\}}=20.5$. Since
\begin{itemize}[nosep]
	\item[(i)]$\lambda_{ \{1,6,7\}}=20.5<22=\frac{1}{2}\times 6+\frac{1}{2}\times 38=(1-\alpha)c_2 +\alpha c_6,$
\end{itemize}
we select coalition $\{1,6,7\}$. Given that $\theta=3$, we stop and set $S_1=\{1,6,7\}$.
Setting $N_2=N_1\setminus\{1,6,7\}=\{2,3,4,5\}$ completes Step~1 of the 3-CEA algorithm. Since $|N_2|>3$, we continue with \medskip

\noindent \textbf{3-CEA algorithm, Step~$\bm{2}$.} $N_2=\{2,3,4,5\}$.

\noindent\textbf{3-CEA set algorithm, Step~$\bm{2.1}$.} Let  $\theta'=2$ and consider coalition $\{4,5\}$. By applying the CEA rule to this coalition, we obtain  $CEA((30,34), 32)=(16,16)$ and $\lambda_{ \{4,5\}}=16$. Since
\begin{itemize}[nosep]
	\item[(i)]$\lambda_{ \{4,5\}}=16<18=\frac{1}{2}\times 6+\frac{1}{2}\times 30=(1-\alpha)c_2 +\alpha c_4,$
\end{itemize}
we select coalition $\{4,5\}$. Given that $\theta\neq 2$, we go to Step~2 of the $3$-CEA set algorithm.\medskip

\noindent\textbf{3-CEA set algorithm, Step~$\bm{2.2}$.}  Let $\theta'=3$ and $N'_2=N'_1\setminus\{4,5\}=\{2,3\}$.

Then, $i=2$ and $j=3$ and we consider coalition $\{3,4,5\}$. By applying the CEA rule to this coalition, we obtain $CEA((22,30,34), 43)=(\frac{43}{3},\frac{43}{3},\frac{43}{3})$ and $\lambda_{ \{3,4,5\}}=\frac{43}{3}=14.33$. Since
\begin{itemize}[nosep]
	\item[(ii)]$\lambda_{ \{3,4,5\}}=14.33>14=\frac{1}{2}\times 6+\frac{1}{2}\times 22=(1-\alpha)c_2 +\alpha c_3,$
\end{itemize}
we select coalition $\{2,4,5\}$. Given that $\theta=3$, we stop and set $S_2=\{2,4,5\}$. Setting $N_3:=N\setminus\{1,2,4,5,6,7\}=\{3\}$ completes Step~2 of the 3-CEA algorithm. Since $|N_{3}|< 3$, we select $\{3\}$ and we obtain $$\pi=\{\{1,6,7\},\{2,4,5\},\{3\}\}.$$

In this case, none of the coalitions are purely assortative. However, in each step either a lowest-claim agent or a highest-claim agent is added. Theorem~\ref{theorem:CEAtheta} proves that partition $\pi$ is stable. The 3-CEA algorithm finds a stable partition without computing the payoffs of each agent in each coalition. It just needs to know the vector of claims, the endowment of coalition $N$ (which gives us the value of $\alpha$), and the payoffs of a few coalitions, which give us the $\lambda$ parameters needed to decide which agent is added throughout the $3$-CEA set algorithm.\hfill $\diamond$
\end{example}

As mentioned before, the CEA (R1) example in Section~\ref{ex1} shows (for $\theta=2$) that stable partitions with larger coalition sizes than the stable partition obtained by the $\theta$-CEA algorithm are possible.

\section{Stability under the constrained equal losses rule}\label{secCEL}

We analyze how agents organize themselves when endowments are distributed under the constrained equal losses rule, CEL. Recall that this rule allocates losses as equally as possible subject to no agent receiving a negative amount. Hence, under CEL, the agents who get a zero payoff (if they exist) receive under-proportional payoffs (more so the smaller the claims are) while some agents with higher claims receive over-proportional payoffs (more so the larger the claims are). Furthermore, the lower an agent's claim, the lower her contribution towards the loss of any coalition she is part of. For the simplest case of $\theta=2$, intuitively, in order to form a stable pair, one could suspect that an agent with a very low claim will pair up with another low-claim agent. Similarly, one could suspect that a similar result could arise for any $\theta\in\mathbb{N}$. That is, intuitively, in order to form a stable coalition of size $\theta$, one could suspect that an agent with a very low claim will pair up with $\theta-1$ low-claim agents. So in contrast to the CEA rule, when the CEL rule is used, low-claim agents play a special role in our construction of stable coalitions of size $\theta$.\medskip

Let $N=\{1,\dots, n\}$ and assume that $c_1\leq c_2\leq \cdots\leq c_n$. First, consider a lowest-claim agent, e.g., agent$~1$. As explained above, agent~$1$ in a coalition with agents~$2,\dots,\theta$ would be a contender to be part of a stable partition with the following possible justification: agent~$1$ needs to team up with some agents to obtain a positive payoff and agents~$2,\ldots,\theta$ provide the lowest possible loss to coalition $\{1,2,\ldots,\theta\}$ while requiring a smaller transfer from the proportional payoff compared to other agents. We capture this intuition in an algorithm that determines a stable partition by sequentially pairing off $\theta$ lowest-claims agents. Note that if $|N|= \theta$, then the grand coalition $N$ forms a stable partition. We therefore define the algorithm for agent sets with at least $\theta+1$ agents.\bigskip

\noindent$\bm{\theta}$-\textbf{CEL algorithm}\medskip

\noindent\textbf{Input:} $N\in{\cal N}$ such that $|N|>\theta$ and $(c,E)\in{\cal P_{\theta}}^{N}$.\medskip

\noindent\textbf{Step~$\bm{1}$.} Let $N_1:=N$, $|N_1|>\theta$. Set $S_1:=\{1,2,\dots,\theta\}$ and $N_2:=N\setminus S_1$. If $|N_2|\leq \theta$, then set $S_2:=N_2$, define $\pi:=\{S_1, S_2\}$, and stop. Otherwise, go to Step~2.\medskip

\noindent\textbf{Step~$\bm{k}$ ($\bm{k>1}$).} Recall from Step~$k-1$ that $N_k:=N\setminus\left(\cup_{j=1}^{k-1}S_{j}\right)$ and $|N_k|> \theta$.  Set $S_k:=\{\theta k- (\theta-1), \theta k- (\theta-2), \ldots, \theta k\}$ and $N_{k+1}:=N\setminus \cup_{j=1}^{k}S_j$. If $|N_{k+1}|\leq \theta$, then set $S_{k+1}:=N_{k+1}$, define $\pi:=\{S_1, S_2, \ldots, S_{k+1}\}$, and stop. Otherwise, go to Step~$k+1$.\medskip

\noindent\textbf{Output:} A partition $\pi=\{S_1,\ldots,S_l\}$ for the coalition formation problem with agent set $N$ induced by $((c,E),CEL)$ such that for each $k\in\{1,\dots,l-1\}$, $|S_{k}|= \theta$ and $|S_{l}|\leq \theta$. If $|N|$ is divisible by $\theta$, then partition $\pi$ is constructed in $l-1=\frac{n-\theta}{\theta}$ steps. If $|N|$ is not divisible by $\theta$, then partition $\pi$ is constructed in $l-1=\lfloor\frac{n}{\theta}\rfloor$ steps.
\medskip

Note that the $\theta$-CEL algorithm only needs the agents' claims as input. In particular, it is not necessary to derive agents' preferences over coalitions. The following result states that the partition obtained by the $\theta$-CEL algorithm is stable.

\begin{theorem}\label{theorem:CELtheta}
Let $N\in{\cal N}$ such that $|N|>\theta$ and $(c,E)\in{\cal P}_{\theta}^{N}$. Consider the coalition formation problem with agent set $N$ induced by $((c,E),CEL)$. Then, the partition obtained by the $\theta$-CEL algorithm is stable.
\end{theorem}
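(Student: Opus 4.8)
The plan is to argue by contradiction: suppose a coalition $T$ blocks the partition $\pi$ produced by the $\theta$-CEL algorithm, and exhibit one member of $T$ who does not strictly gain. Since any coalition of size smaller than $\theta$ has a zero endowment and pays every member $0\le u_i^\pi$, a blocking coalition must satisfy $|T|\ge\theta$. I would focus on the \emph{lowest-claim} member $i^\ast:=\min T$. Because $\pi$ groups the agents into the consecutive blocks $\{1,\ldots,\theta\},\{\theta+1,\ldots,2\theta\},\ldots$ and $T\subseteq\{i^\ast,\ldots,n\}$ has at least $\theta$ members, the agent $i^\ast$ cannot lie in the last (possibly short) block; thus $\pi(i^\ast)=:B=\{a,\ldots,a+\theta-1\}$ is a full block of size $\theta$ with $a\le i^\ast\le a+\theta-1$. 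Writing $\lambda_S$ for the CEL parameter of a coalition $S$ (so that $CEL_i(c_S,E_S)=\max\{0,c_i-\lambda_S\}$), the payoff of $i^\ast$ equals $\max\{0,c_{i^\ast}-\lambda\}$, which is nonincreasing in $\lambda$. Hence it suffices to prove $\lambda_T\ge\lambda_B$: then $i^\ast$ is weakly worse off in $T$ than in $\pi$, contradicting that $T$ blocks.

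To compare CEL parameters of different coalitions I would use $\phi_c(\lambda):=\max\{0,c-\lambda\}-\alpha c$ and set $\psi_S(\lambda):=\sum_{i\in S}\phi_{c_i}(\lambda)$. Each $\psi_S$ is weakly decreasing in $\lambda$ and vanishes exactly at $\lambda_S$ (this is the defining equation $\sum_{i\in S}\max\{0,c_i-\lambda_S\}=\alpha c^S$ rewritten), so to show $\lambda_{S'}\ge\lambda_S$ it is enough to verify $\psi_{S'}(\lambda_S)\ge 0$. Two elementary facts drive the comparison. First, as a function of $c$ the map $\phi_c(\lambda)$ is U-shaped (decreasing for $c<\lambda$, increasing for $c>\lambda$), so over the claims of any coalition $S$ its maximum $\gamma_S:=\max_{i\in S}\phi_{c_i}(\lambda_S)$ is attained at the largest claim. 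Second, under CEL the highest-claim agent of $S$ is always weakly over-proportional — her award-to-claim ratio is the largest and the ratios average to $\alpha$ — so $c_{\max(S)}\ge\lambda_S/(1-\alpha)$ and therefore $\gamma_S=(1-\alpha)c_{\max(S)}-\lambda_S\ge 0$.

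I would then introduce the auxiliary block $B^\ast:=\{i^\ast,\ldots,i^\ast+\theta-1\}$ (well defined since $|T|\ge\theta$ forces $i^\ast+\theta-1\le n$) and establish $\lambda_T\ge\lambda_{B^\ast}\ge\lambda_B$ in two steps. Step A ($\lambda_{B^\ast}\ge\lambda_B$): passing from $B$ to $B^\ast$ deletes the $r:=i^\ast-a$ lowest agents $\{a,\ldots,i^\ast-1\}$ and inserts the $r$ agents $\{a+\theta,\ldots,i^\ast+\theta-1\}$, whose claims all exceed $\max(B)=c_{a+\theta-1}$; evaluating at $\lambda_B$, each deleted agent contributes $\phi\le\gamma_B$ while each inserted agent, having claim $\ge c_{a+\theta-1}>\lambda_B$, contributes $\phi\ge\gamma_B$, and since the two groups have equal size $r$ this gives $\psi_{B^\ast}(\lambda_B)\ge 0$. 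Step B ($\lambda_T\ge\lambda_{B^\ast}$): both sets contain $i^\ast$ and $B^\ast$ consists of the $\theta$ smallest-claim agents with index $\ge i^\ast$, so every agent of $T\setminus B^\ast$ has claim $\ge\max(B^\ast)$ and hence $\phi\ge\gamma_{B^\ast}$ at $\lambda_{B^\ast}$, whereas every agent of $B^\ast\setminus T$ contributes $\phi\le\gamma_{B^\ast}$; because $|T|\ge\theta=|B^\ast|$ yields $|T\setminus B^\ast|\ge|B^\ast\setminus T|$ and $\gamma_{B^\ast}\ge 0$, we obtain $\psi_T(\lambda_{B^\ast})\ge 0$. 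Chaining the two inequalities closes the contradiction.

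The step I expect to be delicate is Step A, and more broadly the comparison of CEL parameters. The parameter $\lambda_S$ is \emph{not} monotone in the claims: raising a below-threshold claim actually \emph{lowers} $\lambda_S$ (it enlarges the target $\alpha c^S$ without changing that agent's award), so it is false that an arbitrary coordinatewise increase of claims raises $\lambda$. What rescues Step A is the specific consecutive-block structure: a rightward block shift removes the smallest claims and compensates by inserting claims above the block's maximum, and the bound $\gamma_B\ge 0$ (over-proportionality of the top agent) guarantees the inserted agents contribute at least as much to $\psi$ as the deleted ones. I would also dispatch the degenerate cases separately — $r=0$ (then $B^\ast=B$) and $T=B^\ast$ (then $\lambda_T=\lambda_{B^\ast}$ trivially) — and record the observation that the maximum of $\phi$ over a block is attained at its largest claim, which is exactly what lets me bound the deleted and retained terms uniformly by $\gamma$.
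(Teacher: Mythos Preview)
Your argument is correct and takes a genuinely different route from the paper. The paper proceeds iteratively in the spirit of a top-coalition argument: it fixes $S_1=\{1,\ldots,\theta\}$ and shows, via a one-agent swap (replace $j\in S_1$ by $k\notin S_1$ with $c_k\ge c_j$), auxiliary problems, and the resource monotonicity and consistency of CEL, that every member of $S_1$ weakly prefers $S_1$ to any $\theta$-coalition in $N$; it then repeats the step on $N\setminus S_1$, and finally invokes the general Proposition~\ref{prop:toptheta} to reduce any blocking coalition of size larger than $\theta$ to one of size exactly $\theta$. By contrast, you work globally: you pick the lowest-index member $i^\ast$ of a hypothetical blocking coalition $T$ and compare CEL parameters directly through the test function $\psi_S(\lambda)=\sum_{i\in S}\bigl(\max\{0,c_i-\lambda\}-\alpha c_i\bigr)$, chaining $\lambda_T\ge\lambda_{B^\ast}\ge\lambda_B$ and using that $i^\ast$'s CEL payoff is nonincreasing in $\lambda$. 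Your approach is more elementary and self-contained: it needs no appeal to Proposition~\ref{prop:toptheta} (it handles $|T|>\theta$ and $|T|=\theta$ in one stroke), and it isolates the single inequality $(1-\alpha)c_{\max(S)}\ge\lambda_S$ (over-proportionality of the top agent) as the driving fact. The paper's approach, in exchange, is more modular---it reuses the general machinery built for all continuous, resource-monotonic, consistent rules---and makes transparent the ``top-coalition'' structure (each $S_k$ is best for its members among coalitions in $N_k$), which your argument establishes only implicitly. One small remark: your sentence ``the ratios average to $\alpha$'' is slightly loose (it is the claim-weighted average, not the arithmetic mean, that equals $\alpha$), but the conclusion you draw---that the maximal ratio, attained at the top claim, is at least $\alpha$---is of course correct.
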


We prove Theorem~\ref{theorem:CELtheta} in Appendix~\ref{appendix:proofTheorem8} but we explain the key intuition of the proof here. Consider a coalition $S\subsetneq N$ with $E_{S}=\alpha c^S$ and associated loss equals $(1-\alpha) c^S$. Hence, the coalitional loss decreases (increases) if we switch an agent in $S$ with a lower-claim (higher-claim) agent from $N\setminus S$. Since losses are split as equally as possible (taking zero as lower bound), and, by an iterative reasoning, sequentially matching up $\theta$ lowest-claims agents will lead to a stable partition.\medskip

Next, we illustrate how the $\theta$-CEL algorithm works.

\begin{example}[\textbf{2-CEL algorithm}]\normalfont
Let $N=\{1,2,3,4,5\}$, $c=(2,6,22,30,34)$, $E=47$, and $\theta = 2$, i.e., $(c,E)\in{\cal P}_{2}^{N}$ (hence, $\alpha=\frac{1}{2}$). With Theorem~\ref{theorem:thetasize} as our point of departure, we just focus on coalitions of size $2$.\medskip

\noindent\textbf{Step~$\bm{1}$.} Let $N_1=N$. We select pair $\{1,2\}$.\medskip

\noindent\textbf{Step~$\bm{2}$.} Let $N_2=N_1\setminus\{1,2\}=\{3,4,5\}$. We select pair $\{3,4\}$.\medskip

Now, we define $N_3=N\setminus\{1,2,3,4\}=\{5\}$. Since $|N_{3}|< 2$, we select $\{5\}$ and we obtain  $$\pi=\{\{1,2\},\{3,4\},\{5\}\}.$$

Note that the partition given by the algorithm is formed by positively assortative pairs. Theorem~\ref{theorem:CELtheta} proves that partition $\pi$ is stable. The $\theta$-CEL algorithm finds a stable partition without computing the payoffs of each agent in each coalition; it just needs to know the vector of claims.\hfill $\diamond$
\end{example}

Finally, the stable partition obtained by the $\theta$-CEL algorithm is not necessarily unique; the CEL (R2) example in Section~\ref{ex1} shows that stable partitions with larger coalition sizes are possible and that different stable partitions with the same coalition sizes may also exist.

\section{Conclusion}\label{secConclusion}

In this paper, we continue the analysis of \citeauthor{gallo2018rationing}'s (\citeyear{gallo2018rationing}) \textit{coalition formation problems with claims}. They focus on the existence of stable partitions but they do not analyze their exact structure. To make more precise predictions about the possible size and composition of stable partitions, we restrict attention to what we call $\theta$-minimal proportional generalized claims problems where coalitions of size smaller than $\theta$ receive zero endowments and all remaining coalitional endowments are a fixed proportion of the sum of the claims of coalition members. Note that the concept of proportionality implicitly assumes that claims are objectively determined or verifiable. If this was not the case, given the proportionality assumption for coalitional endowments, agents would have strong incentives to overstate their claims in order to receive larger coalitional endowments and payoffs. Therefore, strategic properties, e.g., strategy-proofness,\footnote{Strategy-proofness requires that for each agent, truthful claim reporting is a weakly dominant strategy in the associated direct revelation game. Note that even if coalitional endowments are fixed, most resource monotonic rules are not strategy-proof. The exception for fixed coalitional endowments is the CEA rule when agents' preferences are single-peaked (in this context, the CEA rule is also referred to as the uniform rule, see Appendix~\ref{appendix:surplus}). However, when coalitional endowments are proportional to coalitional claims / peaks, overstating claims / peaks can be a strictly dominant strategy for the CEA rule as well.} are not satisfied by our algorithms.
Let us briefly summarize our results.\medskip

We first characterize the structure of any possible stable partition when the rule applied satisfies continuity, strict resource monotonicity, and consistency. For the weaker notion of resource monotonicity, we demonstrate the existence of a stable partition formed by the maximal possible number of coalitions of size $\theta$ and one coalition of size smaller than $\theta$ formed by the remaining agents. Furthermore, we provide two algorithms to construct stable partitions mostly formed by $\theta$-size coalitions under CEA and CEL, respectively. For the CEA rule, the obtained stable partition is formed by coalitions where a highest-claim agent or a lowest-claim agent is added in each step. In particular, for some values of $\alpha$, the $\theta$-CEA algorithm leads to partitions that are formed either by only positively assortative coalitions or by mixed coalitions. For the CEL rule, a stable partition formed by only positively assortative coalitions is obtained by sequentially matching up $\theta$ lowest-claims agents.\medskip

Observe that our results are based on the assumption of proportional coalitional endowments. Future research could consider another principle of assigning coalitional endowments than proportionality. More generally, a two-step model in which first the total endowment is split among coalitions (by a claims rule) and second, within each coalition the coalitional endowment is split among its members (by another rule, possibly the same as the first one) could be considered \citep[for a related two-step model in a bankruptcy framework see, for instance,][]{izquierdo2016decentralized}.\footnote{Two-step procedures have also been analyzed, among others, by \cite{lorenzo2010two} and \cite{bergantinos2010characterization} for multi-issue allocation problems. \cite{moreno2011coalitional} studies a coalition procedure (two or more steps) for bankruptcy situations.} Therefore, coalition formation will depend on both, the rule that divides the total endowment among the different coalitions and the rule that is used to distribute the coalitional endowments among its members. Observe that our model can be straightforwardly extended to a two-step procedure in which the rule used in the first step is the proportional rule for any coalition of size larger than or equal to $\theta$ and the constant zero rule for coalitions of size smaller than $\theta$, and the rule applied in the second step satisfies continuity, (strict) resource monotonicity, and consistency (or, for some of our results, equals the CEA / CEL rule).\medskip

Another crucial assumption we make in our model is that the class of problems we consider is based on claims problems, i.e., that there is \textit{excess demand} for the endowment that is allocated. Let us assume that \textit{excess supply} could also occur, e.g., in a situation where the resource to be allocated is a certain amount of labor and each agent has an optimal amount of labor, a \textit{peak}, she would like to take on. We essentially assume that the closer the outcome is to the peak, the better it is; i.e., agents have single-peaked preferences. Then, the so-called \textit{uniform rule} represents an extension of the CEA rule to excess supply problems\footnote{For excess demand problems, the CEA / uniform rule allocates the endowment as equally as possible, taking the agents' peaks as upper bounds; correspondingly, for excess supply problems it allocates the endowment as equally as possible, taking agents' peaks as lower bounds.} and the \textit{equal surplus rule} represents an extension of the CEL rule to excess supply problems.\footnote{For excess demand problems, the CEL rule allocates losses as equally as possible, taking zero as lower bound; correspondingly, for excess supply problems, the equal surplus rule allocates the total surplus induced by the endowment  equally among agents.} This model variation is analyzed in  Appendix~\ref{appendix:surplus}. We first show that even though both the uniform and the equal surplus rules satisfy continuity, resource-monotonicity, and consistency, a result similar to \citet[][Theorem~2]{gallo2018rationing} does not hold for the mixed case: Example~\ref{example:uniformrule} (Example~\ref{example:equalsurplusrule}, respectively) shows that for problems with excess demand as well as excess supply that are based on the CEA / uniform rule (the CEL / equal surplus rule, respectively), no stable partition may exist. Second, if we restrict attention to classes of problems with either (i) only excess demand or (ii) only excess supply, then all our algorithmic results extend. The results in Case~(i) are the ones we obtained for the CEA and CEL rules and corresponding results for the uniform and equal surplus rules can be obtained in Case~(ii) (see Appendix~\ref{appendix:surplus}).

\begin{appendix}

\section*{Appendix}

\section{Parametric rules}\label{appendix:parametricrules}

For each parametric rule, there is a function of two variables such that for each problem, each agent's payoff is the value taken by this function when the first argument is her claim and the second one is parameter $\lambda$, which is the same for all agents. This parameter is chosen so that the sum of agents' payoffs is equal to the endowment.\medskip

A \textbf{parametric rule} for $N\in \mathcal{N}$ is defined as follows: Let $f$ be a collection of functions $\{f_{i}\}_{i\in N}$,\footnote{When the parametric rule is symmetric, $f_{i}$ is the same for all agents.} where each $f_{i}$ $:$ $\mathbb{R}_{++}\times \lbrack a,b]\longrightarrow \mathbb{R}_{+}$ is continuous and weakly increasing in the second argument $\lambda$, $\lambda \in\lbrack a,b]$, $-\infty\leq a<b\leq\infty$ and for each $i\in N$ and $c  _{i}\in\mathbb{R}_{++}$, $f_{i}(c_{i},a)=0$ and $f_{i}(c_{i},b)=c_{i}$. Hence, for each $f$, a rule $F$ is defined as follows. For each $(c,E)\in\mathcal{C}^{N}$ and each $i\in N$,
\begin{equation*}
F_{i}(c,E)=f_{i}(c_{i},\lambda )\mbox{ where }\lambda\mbox{ is chosen so that }\sum\nolimits_{j\in N}f_{j}(c_{j},\lambda )=E.
\end{equation*}
Then, $f$ is said to be a \textbf{parametric representation of parametric rule $\bm{F}$}.\medskip

\citet{young1987dividing} characterizes parametric rules on the basis of symmetry,\footnote{Two agents with equal claims should receive equal payoffs.} continuity, and bilateral consistency\footnote{Bilateral consistency requires consistency only when $|S|=2$.}. \citet{stovall2014asymmetric} characterizes the family of possibly asymmetric parametric rules on the basis of continuity, resource monotonicity, bilateral consistency, and two additional axioms, ``$N$-continuity'' and ``intrapersonal consistency.''

\section{Proof of Theorem~\ref{theorem:thetasize}}\label{appendix:proofTheorem6}

We first introduce some lemmas that will be used to prove Theorem~\ref{theorem:thetasize}. We first show that, given a $\theta$-minimal proportional generalized claims problem and a consistent rule, if all agents receive proportional payoffs in a coalition, then all agents receive proportional payoffs in any subcoalition (except subcoalitions of size smaller than $\theta$) as well.

\begin{lemma}\label{lemma:subcoalitionstheta}
Let $N\in \mathcal{N}$ and $(c,E)\in \mathcal{P}_{\theta}^{N}$. Consider $F\in{\cal F}$ satisfying consistency. If $S\subseteq N$, $|S|>\theta$, is such that for each $i\in S$, $F_i(c_S, E_S)=\alpha c_i$, then for each $S'\subsetneq S$ with $|S'|\geq\theta$ and each $j\in S'$, $F_j(c_{S'}, E_{S'})=\alpha c_j$, i.e., $S'\sim_j S$.
\end{lemma}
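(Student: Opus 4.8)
The plan is to invoke consistency directly, with the coalition $S$ playing the role of the ``grand coalition'' in the consistency axiom and $S'$ playing the role of its subcoalition. First I would set $x\equiv F(c_S,E_S)$ and record the hypothesis that $x$ is the proportional allocation, i.e., for each $i\in S$, $x_i=\alpha c_i$.

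Next, I would apply consistency to $(c_S,E_S)\in\mathcal{C}^{S}$ and the proper subcoalition $S'\subsetneq S$. Consistency yields $x_{S'}=F\bigl(c_{S'},\sum_{j\in S'}x_j\bigr)$. The key computation is then to evaluate the reduced endowment: since $x_j=\alpha c_j$ for each $j\in S'$, we obtain $\sum_{j\in S'}x_j=\alpha\sum_{j\in S'}c_j=\alpha c^{S'}$.

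The crucial step — and really the only place where the hypotheses are used in a nontrivial way — is to observe that, because $|S'|\geq\theta$, the definition of a $\theta$-minimal proportional generalized claims problem gives $E_{S'}=\alpha c^{S'}$. Hence the reduced problem appearing in the consistency identity is exactly the subcoalition problem $(c_{S'},E_{S'})$, and we conclude $F(c_{S'},E_{S'})=x_{S'}=(\alpha c_j)_{j\in S'}$. This is precisely the claim: for each $j\in S'$, $F_j(c_{S'},E_{S'})=\alpha c_j$, and since $F_j(c_S,E_S)=\alpha c_j$ as well, agent $j$ is indifferent between $S'$ and $S$ in the induced preferences, i.e., $S'\sim_j S$.

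I do not anticipate a genuine obstacle here; the argument is essentially a one-line application of consistency. The only point requiring care is the role of the size restriction $|S'|\geq\theta$: it is exactly what guarantees $E_{S'}=\alpha c^{S'}$ rather than $E_{S'}=0$. Without it, the reduced endowment $\alpha c^{S'}>0$ delivered by consistency would not match the subcoalition endowment $E_{S'}=0$, and the conclusion would fail — which is why the statement explicitly excludes subcoalitions of size smaller than $\theta$.
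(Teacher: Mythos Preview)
Your proposal is correct and follows essentially the same approach as the paper's own proof: apply consistency to $(c_S,E_S)$ with subcoalition $S'$, compute the reduced endowment as $\sum_{j\in S'}\alpha c_j=\alpha c^{S'}$, and use $|S'|\geq\theta$ to identify this with $E_{S'}$. Your added remark explaining why the size restriction $|S'|\geq\theta$ is indispensable is a nice touch that the paper leaves implicit.
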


\begin{proof}[\textbf{Proof}] Let $N$, $(c,E)$, and $F$ be as specified in the lemma. Consider $S\subseteq N$, $|S|>\theta$, such that for each $i\in S$, $F_i(c_S, E_S)=\alpha c_i$ and consider $S'\subsetneq S$ with $|S'|\geq \theta$ and problem $(c_{S'}, E_{S'})\in{C}^{S'}$. By consistency, for each $j\in S'$, $F_j\left(c_{S'},\sum_{k\in S'}F_{k}(c_S,E_S)\right)=F_{j}(c_S,E_S)$ and hence, $F_j\left(c_{S'},\sum_{k\in S'}F_{k}(c_S,E_S)\right)=\alpha c_j$. Since $(c,E)\in \mathcal{P}_{\theta}^{N}$ we have that $E_{S'}=\alpha c^{S'}=\sum_{k\in S'}F_k (c_{S},E_{S})$. Thus, for each $j\in S'$, $F_j\left(c_{S'},E_{S'}\right)=\alpha c_j$, i.e., $S'\sim_j S$.\end{proof}

We next show that, given a $\theta$-minimal proportional generalized claims problem and a resource monotonic and consistent rule, if some agent in a coalition of size larger than $\theta$ does not receive a proportional payoff, then one agent can be excluded from the coalition such that all remaining agents are weakly better off.

\begin{lemma}\label{lemma:noproportionaltheta}
Let $N\in \mathcal{N}$ and $(c,E)\in \mathcal{P}_{\theta}^{N}$. Consider $F\in{\cal F}$ satisfying resource monotonicity and consistency. If $S\subseteq N$, $|S|>\theta$, is such that for some $i\in S$, $F_i(c_S, E_S)\neq\alpha c_i$, then there is $S'\subsetneq S$, with $|S'|=|S|-1$, such that for all agents $j\in S'$, $S'\succsim^{((c,E),F)}_{j} S$.
\end{lemma}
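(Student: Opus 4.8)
The plan is to mirror the argument used for part~(ii) of Theorem~\ref{theorem:thetasize-strictRM}, adapting it in two respects. First, the hypothesis here only guarantees that \emph{some} agent is non-proportional, whereas that argument started from an agent known to be over-proportional; so I must first locate a strictly over-proportional agent. Second, since only resource monotonicity (and not its strict version) is available, the conclusion will be a weak improvement $S'\succsim_j S$ rather than a strict one.

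First I would invoke feasibility to pin down the right agent to remove. Because $|S|>\theta$, we have $E_S=\alpha c^S=\sum_{j\in S}\alpha c_j$, so the payoff vector $F(c_S,E_S)$ and the proportional vector $(\alpha c_j)_{j\in S}$ have the same coordinate sum; hence the deviations $F_j(c_S,E_S)-\alpha c_j$ sum to zero. Since at least one deviation is nonzero by assumption, at least one must be strictly positive, i.e., there is an agent $i^\ast\in S$ with $F_{i^\ast}(c_S,E_S)>\alpha c_{i^\ast}$. It is crucial to delete \emph{this} agent rather than the possibly under-proportional agent supplied by the hypothesis.

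Next I set $S':=S\setminus\{i^\ast\}$. Since $|S|>\theta$ forces $|S|\geq\theta+1$, we have $|S'|=|S|-1\geq\theta$, so $(c,E)\in\mathcal{P}_\theta^{N}$ gives $E_{S'}=\alpha c^{S'}$, and as $\alpha\in(0,1)$ we also get $E_{S'}=\alpha c^{S'}<\sum_{j\in S'}c_j$, which is the admissibility condition needed for resource monotonicity on $(c_{S'},\cdot)$. Consistency applied to $F(c_S,E_S)$ on the subcoalition $S'$ yields
$$F_j\Bigl(c_{S'},\,\textstyle\sum_{k\in S'}F_k(c_S,E_S)\Bigr)=F_j(c_S,E_S)\quad\text{for each }j\in S'.$$
It then remains to compare the two endowments for the claims problem on $S'$. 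Using $F_{i^\ast}(c_S,E_S)>\alpha c_{i^\ast}$,
$$\textstyle\sum_{k\in S'}F_k(c_S,E_S)=E_S-F_{i^\ast}(c_S,E_S)=\alpha c^S-F_{i^\ast}(c_S,E_S)<\alpha c^S-\alpha c_{i^\ast}=\alpha c^{S'}=E_{S'},$$
so $E_{S'}$ strictly exceeds $\sum_{k\in S'}F_k(c_S,E_S)$. Resource monotonicity then gives, for each $j\in S'$,
$$F_j(c_{S'},E_{S'})\geq F_j\Bigl(c_{S'},\textstyle\sum_{k\in S'}F_k(c_S,E_S)\Bigr)=F_j(c_S,E_S),$$
which is exactly $S'\succsim^{((c,E),F)}_j S$ for every $j\in S'$, as required.

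The only genuinely new ingredient beyond the proof of Theorem~\ref{theorem:thetasize-strictRM}(ii), and the step I would be most careful about, is the feasibility/efficiency argument that converts ``some agent is non-proportional'' into ``some agent is over-proportional.'' This matters because only deleting an over-proportional member frees up endowment for the remaining agents (one checks that removing an under-proportional agent would instead shrink their residual endowment and make them weakly worse off). Once the over-proportional agent is identified, the weakening from strict to ordinary resource monotonicity costs us nothing beyond the strictness of the final inequality, so the weak conclusion of the lemma follows directly.
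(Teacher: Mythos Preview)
Your proof is correct and follows essentially the same approach as the paper's. The only difference is cosmetic: the paper simply says ``without loss of generality, assume that agent $i$ receives an over-proportional payoff,'' whereas you make this step explicit via the feasibility argument that the deviations $F_j(c_S,E_S)-\alpha c_j$ sum to zero.
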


\begin{proof}[\textbf{Proof}] Let $N$, $(c,E)$, and $F$ be as specified in the lemma. Consider $S\subseteq N$, $|S|>\theta$, such that for some $i\in S$, $F_i(c_S, E_S)\neq\alpha c_i$. By consistency, for each $j\in S\setminus\{i\}$, \begin{equation}\label{eq:1}F_j\left(c_{S\setminus\{i\}},\sum_{k\in S\setminus\{i\}}F_k(c_S ,E_{S})\right)= F_j(c_S,E_{S}).\end{equation}
Since $E_{S}=\alpha c^{S}$ and $F_i(c_S, E_S)\neq\alpha c_i$, we can assume, without loss of generality, that agent $i$ receives an over-proportional payoff, i.e.,  $F_i(c_S,E_{S})> \alpha c_i$. Therefore, subcoalition $S\setminus\{i\}$ can achieve a larger joint endowment without agent $i$ compared to what they jointly receive at $(c_S, E_S)$, i.e., $E_{S\setminus\{i\}}=\alpha c^{S\setminus\{i\}}>\sum_{k\in S\setminus\{i\}} F_k(c_S,E_{S})$.
Hence, the endowment at problem $(c_{S\setminus\{i\}}, E_{S\setminus\{i\}})$ is larger than at problem $\left(c_{S\setminus\{i\}}, \sum_{k\in S\setminus\{i\}}F_k(c_S,E_{S})\right)$ and by resource monotonicity, for each agent $j\in S\setminus\{i\}$,
$$F_j(c_{S\setminus\{i\}},E_{S\setminus\{i\}})\geq F_j\left(c_{S\setminus\{i\}},\sum_{k\in S\setminus\{i\}}F_k(c_S,E_{S})\right)\stackrel{(\ref{eq:1})}{=}F_j(c_S, E_{S}).$$
Hence, for each $j\in S':=S\setminus\{i\}$, $S'\succsim^{((c,E),F)}_{j} S$.
\end{proof}

We next show that, given a $\theta$-minimal proportional generalized claims problem and a resource monotonic and consistent rule, there is a coalition of size $\theta$ that is weakly preferred by all its agents to any other coalition of size larger than $\theta$.

\begin{proposition}\label{prop:toptheta}
Let $N\in \mathcal{N}$ and $(c,E)\in \mathcal{P}_{\theta}^{N}$. Consider $F\in{\cal F}$ satisfying resource monotonicity and consistency. If $S\subseteq N$, $|S|> \theta$, then there is $S'\subsetneq S$, with $|S'|=\theta$, such that for each agent $j\in S'$, $S'\succsim^{((c,E),F)}_{j} S$.
\end{proposition}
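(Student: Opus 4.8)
The plan is to prove the statement by downward induction on the coalition size $|S|$, using Lemma~\ref{lemma:subcoalitionstheta} and Lemma~\ref{lemma:noproportionaltheta} as the two possible ``reduction steps'' and the transitivity of each preference relation $\succsim^{((c,E),F)}_{j}$ to chain these steps together. The induction hypothesis is that the proposition holds for every coalition of size strictly between $\theta$ and $|S|$.

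First I would dispose of the case in which all members of $S$ already receive proportional payoffs. In that case, pick any $S'\subsetneq S$ with $|S'|=\theta$; since $|S'|\geq\theta$, Lemma~\ref{lemma:subcoalitionstheta} yields $F_j(c_{S'},E_{S'})=\alpha c_j$ for each $j\in S'$, so that $S'\sim_j S$ and in particular $S'\succsim^{((c,E),F)}_{j} S$. This settles the proposition whenever the proportional configuration has been reached, and in particular handles one half of the base case $|S|=\theta+1$.

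Otherwise, some agent in $S$ receives a non-proportional payoff, and Lemma~\ref{lemma:noproportionaltheta} provides a subcoalition $S_1\subsetneq S$ with $|S_1|=|S|-1$ such that $S_1\succsim^{((c,E),F)}_{j} S$ for every $j\in S_1$. If $|S_1|=\theta$ we are done with $S'=S_1$ (this is the other half of the base case). If $|S_1|>\theta$, I would invoke the induction hypothesis on $S_1$ to obtain $S'\subsetneq S_1$ with $|S'|=\theta$ and $S'\succsim^{((c,E),F)}_{j} S_1$ for each $j\in S'$. Since $S'\subseteq S_1$, each such agent $j$ also satisfies $S_1\succsim^{((c,E),F)}_{j} S$, so transitivity of $\succsim^{((c,E),F)}_{j}$ gives $S'\succsim^{((c,E),F)}_{j} S$, completing the induction.

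I do not expect a genuine obstacle here, since the substantive work is already packaged in Lemmas~\ref{lemma:subcoalitionstheta} and \ref{lemma:noproportionaltheta}; the argument is essentially bookkeeping. The only point requiring care is that the final set $S'$ must be contained in every intermediate coalition, so that the chain of weak preferences can be applied agent by agent. This is guaranteed because Lemma~\ref{lemma:noproportionaltheta} delivers a weak improvement for \emph{all} surviving agents and the induction hypothesis is applied to a subset of those survivors; hence $S'\subseteq S_1\subseteq S$ and transitivity is legitimate for each $j\in S'$.
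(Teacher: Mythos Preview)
Your proposal is correct and follows essentially the same approach as the paper: the paper also splits into the proportional case (handled via Lemma~\ref{lemma:subcoalitionstheta}) and the non-proportional case (handled via Lemma~\ref{lemma:noproportionaltheta}), and then iteratively reduces the coalition size until reaching $\theta$, chaining the weak preferences by transitivity. The only cosmetic difference is that the paper phrases the reduction as ``successively apply Case~1 or the above argument'' rather than as a formal downward induction, but the content is identical.
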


\begin{proof}[\textbf{Proof}] Let $N$, $(c,E)$, and $F$ be as specified in the lemma. Consider $S\subseteq N$, $|S|>\theta$. We distinguish two cases:\medskip

\noindent \textit{Case~$1$.} $S\subseteq N$ is such that for each $i\in S$, $F_i(c_S, E_S)=\alpha c_i$.\smallskip

\noindent Then, by Lemma~\ref{lemma:subcoalitionstheta}, for each $S'\subsetneq S$ with $|S'|\geq \theta$ and each $j\in S'$, $F_j(c_{S'}, E_{S'})=\alpha c_j$. In particular, this is true for any $S'\subsetneq S$ with $|S'|= \theta$. Thus, for each agent $j\in S'$, $S'\sim^{((c,E),F)}_{j} S$ and we are done.
\medskip

\noindent \textit{Case~$2$.} $S\subseteq N$ is such that for some $i\in S$, $F_i(c_S, E_S)\neq\alpha c_i$.\smallskip

\noindent Then, by Lemma~\ref{lemma:noproportionaltheta}, there exists a subcoalition $\bar{S}\subsetneq S$ with $|\bar{S}|=|S|-1$, such that for each agent $j\in \bar{S}$, $\bar{S}\succsim^{((c,E),F)}_{j} S$. If $|\bar{S}|=\theta$, then we are done. Otherwise, starting with coalition $\bar{S}$, successively apply Case $1$ or the above argument to obtain a coalition $S'\subsetneq S$ with $|S'|=\theta$ such that for each $j\in S'$,
$S'\succsim^{((c,E),F)}_{j} S$.\end{proof}

Now, we introduce some properties for coalition formation problems that play an important role in the proof of Theorem~\ref{theorem:thetasize}. Let $\succsim$ be a coalition formation problem with agent set $N$.\medskip

The first property, \textbf{weak pairwise alignment}, requires that if one agent orders two coalitions in one direction, no other agent in the intersection of both coalitions orders them in the opposite direction. If $S, S'\subseteq N$, and $i,j\in S\cap S'$, then $S\succ_i S'$ implies $S\succsim_j S'$.\medskip

For the second property we first introduce some notation. Let $S,S'\subseteq N$. Then, if for each $i\in S\cap S'$, $S\succsim_{i} S'$, and for at least one $j\in S\cap S'$, $S\succ_{j}S'$, we write $S\trianglerighteq S'$. A \textbf{cycle} \citep[called ring in][]{gallo2018rationing} for $\succsim$ is an ordered set of coalitions $(S_{1},...,S_{k})$, $k>2,$ such that for each $l=1,\ldots,k$, $S_{l+1}\trianglerighteq S_{l}$ (subscripts modulo $k$). This definition of a cycle requires one agent in the intersection of any two
consecutive coalitions to have strict preferences between both coalitions while her ``intersection-mates'' can be indifferent between them. Coalition formation problem $\succsim$ satisfies \textbf{acyclicity} if it has no cycles.\medskip

The last property, the \textbf{top-coalition property} \citep[]{BanerjeeKonishiSonmezSCW2001} is sufficient to guarantee stability. A coalition $S'\subseteq S$ is a \textbf{top-coalition of $\bm{S}$} if for each $i\in S' $ and each $T\subseteq S$ with $i\in T$, we have $S'\succsim _{i}T$. Coalition formation problem $\succsim$ satisfies the \textbf{top-coalition property} if each non-empty set of agents $S\subseteq N$ has a top coalition.\medskip

\citet[][Theorem~1]{gallo2018rationing} show that a coalition formation problem that satisfies weak pairwise alignment and acyclicity satisfies the top-coalition property. Furthermore, \citet[][Lemma~2]{gallo2018rationing} \footnote{\citet[][Lemma~2]{gallo2018rationing} only states that resource monotonicity and consistency are necessary to satisfy weak pairwise alignment. However, the proof of the lemma shows that the two properties are also sufficient. Therefore, when referring to \citet[][Lemma~2]{gallo2018rationing}, we refer to the following more general result: \emph{Let $N\in \mathcal{N}$, $(c,E)\in \mathcal{P}^{N}$, and $F\in\mathcal{F}$. Then, $\succsim^{((c,E),F)}$ satisfies weak pairwise alignment if and only if $F$ is resource monotonic and consistent}.} and \citet[][Lemma~3]{gallo2018rationing} show that given a generalized claims problem, if the rule applied is continuous, resource monotonic, and consistent, then the coalition formation problem induced by $F$ satisfies weak pairwise alignment and acyclicity and, consequently, the top coalition property.\medskip

We next show that given a $\theta$-minimal proportional generalized claims problem, there is a top-coalition of $N$ of size at least $\theta$.\pagebreak

\begin{lemma}\label{lemma:topcoalition}
Let $N\in \mathcal{N}$ and $(c,E)\in \mathcal{P}_{\theta}^{N}$. Consider $F\in{\cal F}$ satisfying continuity, resource monotonicity, and consistency. Then, for the coalition formation problem with agent set $N$ induced by $((c,E), F)$, there exists a top-coalition $S\subseteq N$ of $N$ with $|S|\geq \theta$.
\end{lemma}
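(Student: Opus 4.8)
The plan is to adapt the maximal-element argument that underlies the top-coalition property, but to run it on the sub-collection of coalitions that are large enough to be productive, rather than on all coalitions. By \citet[][Lemmas~2 and~3]{gallo2018rationing}, since $F$ is continuous, resource monotonic, and consistent and $(c,E)\in\mathcal{P}_{\theta}^{N}\subseteq\mathcal{G}^{N}$, the induced coalition formation problem $\succsim^{((c,E),F)}$ satisfies weak pairwise alignment and acyclicity. Let $\mathcal{A}:=\{T\subseteq N:|T|\geq\theta\}$ be the collection of \emph{admissible} coalitions; since $|N|\geq\theta$ we have $N\in\mathcal{A}$, so $\mathcal{A}$ is non-empty and finite. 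First I would restrict the relation $\trianglerighteq$ to $\mathcal{A}$ and extract a $\trianglerighteq$-maximal admissible coalition $S^{*}$, and then verify that $S^{*}$ is in fact a top-coalition of the \emph{whole} set $N$, not merely of the restricted family $\mathcal{A}$.

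To obtain the maximal element I would first observe that $\trianglerighteq$ admits no $2$-cycles: if $S\trianglerighteq S'$ then some $j\in S\cap S'$ satisfies $S\succ_{j}S'$, which rules out $S'\trianglerighteq S$ (the latter would force $S'\succsim_{j}S$). Combined with acyclicity, which excludes cycles of length greater than $2$, this shows that $\trianglerighteq$ is acyclic on the finite set $\mathcal{A}$, so its transitive closure is a strict partial order and therefore has a maximal element $S^{*}\in\mathcal{A}$; that is, there is no $T\in\mathcal{A}$ with $T\trianglerighteq S^{*}$. By construction $|S^{*}|\geq\theta$.

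It then remains to show that $S^{*}$ is a top-coalition of $N$, i.e.\ that for every $i\in S^{*}$ and every $T\subseteq N$ with $i\in T$ we have $S^{*}\succsim_{i}T$; I would split on the size of $T$. If $|T|\geq\theta$, suppose toward a contradiction that $T\succ_{i}S^{*}$; weak pairwise alignment then gives $T\succsim_{j}S^{*}$ for every $j\in S^{*}\cap T$, so $T\trianglerighteq S^{*}$ with $T\in\mathcal{A}$, contradicting the maximality of $S^{*}$. If $|T|<\theta$, then $E_{T}=0$ forces $F_{i}(c_{T},E_{T})=0\leq F_{i}(c_{S^{*}},E_{S^{*}})$, so again $S^{*}\succsim_{i}T$. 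Hence $S^{*}$ is a top-coalition of $N$ with $|S^{*}|\geq\theta$. The main obstacle is precisely this second case: a priori a member of $S^{*}$ might be tempted by a coalition of size below $\theta$, and the maximality we engineered lives only inside $\mathcal{A}$. What rescues the argument is that every admissible coalition gives each of its members a non-negative payoff while every sub-$\theta$ coalition gives exactly $0$, so deviations to small coalitions are never profitable. This is also exactly why restricting the maximal-element construction to $\mathcal{A}$ is essential, rather than invoking the unrestricted top-coalition property of \citet{gallo2018rationing} as a black box, whose top-coalition of $N$ could a priori have size below $\theta$.
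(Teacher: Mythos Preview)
Your proof is correct and uses the same two ingredients as the paper---weak pairwise alignment and acyclicity from \citet[][Lemmas~2 and~3]{gallo2018rationing}---but packages them differently. The paper argues by contradiction: it assumes every top-coalition has size below $\theta$, then starts from some coalition $S_{1}\in\mathcal{A}$ containing an agent with positive payoff and iteratively builds a sequence $S_{1},S_{2},\ldots$ with $S_{k+1}\trianglerighteq S_{k}$, each $S_{k}$ lying in $\mathcal{A}$; finiteness then forces either a violation of weak pairwise alignment or a cycle. You instead observe directly that $\trianglerighteq$ is acyclic on the finite set $\mathcal{A}$ (handling the $2$-cycle case separately, since the paper's notion of cycle requires length at least $3$), extract a $\trianglerighteq$-maximal element $S^{*}$, and verify in one line via weak pairwise alignment that $S^{*}$ is a top-coalition of all of $N$. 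Your argument is the abstract version of the paper's iterative construction: the paper essentially re-derives ``acyclic plus finite implies a maximal element exists'' by hand, whereas you invoke it as a principle. The upshot is the same, but your route is shorter and makes clearer why the restriction to $\mathcal{A}$ is what does the work; the paper's route, on the other hand, is perhaps more self-contained for a reader unfamiliar with the maximal-element-from-acyclicity idiom.
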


\begin{proof}[\textbf{Proof}] Let $N$, $(c,E)$, and $F$ be as specified in the lemma. By \citet[][Lemmas~2, 3, and Theorem~1]{gallo2018rationing}, $\succsim^{((c,E),F)}$ satisfies weak pairwise alignment, acyclicity, and the top-coalition property.  Thus, a top-coalition of $N$ exists.

For each $i\in N$, let $TC_{i}(N)$ be the set of most / top preferred coalitions of $N$ containing agent $i$, i.e., $TC_{i}(N)=\{S\subseteq N:i\in S$ and for each $T\subseteq N$ with $i\in T$, $S\succsim_{i}T\}$. We show that there exists at least one coalition $S\subseteq N$ with $|S|\geq \theta$ such that for each $i\in S$, $S\in TC_{i}(N)$; thus $S$ is a top-coalition of $N$.\medskip

Assume by contradiction that there is no such coalition $S$, i.e., all top-coalitions of $N$ are of size smaller than $\theta$. Given that $(c,E)\in \mathcal{P}_{\theta}^{N}$, all agents get zero payoffs in all coalitions of size smaller than $\theta$. In particular, each agent gets a zero payoff in each top-coalition of $N$. However, each coalition of size $\theta$ and larger has a positive endowment and at least one agent in each such coalition receives a positive payoff (not necessarily the same agent in all these coalitions). In the following iterative procedure, we either contradict weak pairwise alignment or acyclicity.\medskip

\noindent\textbf{Step~$\bm{1}$.} Let $S_1\subseteq N$ be a coalition of size $\theta$ or larger, i.e., $|S_1|\geq\theta$, and $i_1\in S_1$ be an agent with a positive payoff in coalition $S_1$, i.e., $F_{i_1}(S_1, E_{S_1})>0$. Without loss of generality, assume that  $S_1\in TC_{i_1}(N)$.
Since, by assumption, $S_{1}$ is not a top-coalition of $N$, there exists an agent in $S_1$, say agent $i_2\in S_1$, and a coalition in $TC_{i_2}(N)$, say coalition $S_2\in TC_{i_2}(N)$, such that $S_{2}\succ _{i_2}S_{1}$. Note that $|S_{2}|\geq \theta$ (otherwise, agent $i_2$ would get a zero payoff and the strict preference would not be possible). If there is $j\in S_1\cap S_2$ such that $S_{1}\succ _{j}S_{2}$, then weak pairwise alignment is violated. Otherwise, $S_2\trianglerighteq S_{1}$; set $(S_1,S_2)$ as ordered set of coalitions and go to Step~2.\medskip

\noindent\textbf{Step~$\bm{k}$ ($\bm{k > 1}$).} Note that Step~$k$ starts with an ordered set of coalitions $(S_{1},...,S_{k})$, such that for each $l=1,\ldots,k-1$, $S_{l+1}\trianglerighteq S_{l}$.

Consider $S_{k}\subseteq N$, $|S_{k}|\geq \theta$, and $i_k\in S_k$ such that $S_k\in TC_{i_{k}}(N)$ from Step~$k-1$. Since, by assumption, $S_{k}$ is not a top-coalition of $N$, there exists an agent in $S_{k}$, say agent $i_{k+1}\in S_{k}$, and a coalition in $TC_{i_{k+1}}(N)$, say coalition $S_{k+1}\in TC_{i_{k+1}}(N)$, such that $S_{k+1}\succ _{i_{k+1}}S_{k}$. Note that $|S_{k+1}|\geq \theta$ (otherwise agent $i_{k+1}$ would get a zero payoff and the strict preference would not be possible). If there is $j\in S_k\cap S_{k+1}$ such that $S_{k}\succ_{j}S_{k+1}$, then weak pairwise alignment is violated. Otherwise, $S_{k+1}\trianglerighteq S_{k}$; set $(S_1,S_2,\ldots, S_{k+1})$  as ordered set of coalitions and go to Step~$k+1$.\medskip

Note that our iterative procedure either stops with a violation of weak pairwise alignment or it produces an infinite sequence of ordered sets $(S_k)_{k\in \mathbb{N}}$ such that for each $k\in \mathbb{N}$, $S_{k+1}\trianglerighteq S_k$.
Given that $N$ is finite, the set of coalitions for $N$ is finite as well.  Hence, the infinite sequence of ordered sets $(S_k)_{k\in \mathbb{N}}$ must list some coalitions multiple times. If for some $k\in \mathbb{N}$, $k\geq 2$, $S_{k+1}=S_{k-1}$, then  [$S_{k+1}\succ _{i_{k+1}}S_{k}$ and  $S_{k}\succ _{i_{k}}S_{k+1}$] and weak pairwise alignment is violated. Otherwise, the infinite sequence of ordered sets $(S_k)_{k\in \mathbb{N}}$ contains at least one cycle, which violates acyclicity. Thus, we either obtain a contradiction with weak pairwise alignment or acyclicity.\end{proof}

We are now ready to prove Theorem~\ref{theorem:thetasize}.

\begin{proof}[\textbf{Proof of Theorem~\ref{theorem:thetasize}}]
Let $N\in \mathcal{N}$ and $(c,E)\in \mathcal{P}_{\theta}^{N}$. Consider  $F\in{\cal F}$ satisfying continuity, resource monotonicity, and consistency.
\citet[][Lemmas~2, 3, and Theorem~1]{gallo2018rationing} implies that $\succsim^{((c,E),F)}$ satisfies the top-coalition property; hence, a stable partition exists for the coalition formation problem with agent set $N$ induced by $((c,E),F)$ \citep[][Theorem~2]{gallo2018rationing}.
If $|N|= \theta$, then the only stable partition consists of the grand coalition $N$. Hence, assume that $|N|>\theta$.\medskip

\noindent We iteratively construct a stable partition $\pi\in St(\succsim^{((c,E),F)})$ with coalition sizes of at most $\theta$.\medskip

\noindent \textbf{Step~$\bm{1}$.} Let $N_1:=N$, $|N_1|> \theta$. The set of top-coalitions of $N_1$ is non-empty and, by Lemma~\ref{lemma:topcoalition}, there exists a top-coalition $S'_1\subseteq N_1$ of $N_1$ such that $|S_1'|\geq\theta$, i.e., for each $i\in S'_1 $ and each $T\subseteq N_1$ with $i\in T$, we have $S'_1\succsim^{((c,E),F)}_{i}T$. If $|S'_1|=\theta$, then set $S_1:=S'_1$. Otherwise, by Proposition~\ref{prop:toptheta}, there is a coalition $S_1\subsetneq S'_1$, such that $|S_1|=\theta$ and for each $j\in S_1$, $S_1\succsim^{((c,E),F)}_{j}S'_1$. In particular, since $S'_1$ is a top-coalition, it follows that $S_1\sim^{((c,E),F)}_{j}S'_1$. Hence, for each $i\in S_1$ and each $T\subseteq N_1$ with $i\in T$, we have $S_1\sim^{((c,E),F)}_{i}S'_1\succsim^{((c,E),F)}_{i}T$ and $S_1$ is a top-coalition of $N_1$ as well. Hence, agents in $S_1$ can never be strictly better off in any other coalition $T\subseteq N_1=N$. Thus, if $S_1$ is part of a stable partition, no agent in $S_1$ can block it.\medskip

Set $N_2:=N\setminus S_1$. If $|N_2|\leq \theta$, then set $S_2:=N_2$, define $\pi:=\{S_1,S_2\}$, and stop. Otherwise, go to Step~2.\medskip\pagebreak

\noindent \textbf{Step~$\bm{k}$ ($\bm{k>1}$).} Recall from Step~$k-1$ that $N_k:=N\setminus\left(\cup_{i=1}^{k-1}S_{i}\right)$ and $|N_k|> \theta$. The set of top-coalitions of $N_k$ is non-empty and, by Lemma~\ref{lemma:topcoalition} (with $N_k$ in the role of $N$), there exists a top-coalition $S'_k\subseteq N_k$ of $N_k$ such that $|S'_k|\geq \theta$, i.e., for each $i\in S'_k$ and each $T\subseteq N_k$ with $i\in T$, we have $S'_k\succsim^{((c,E),F)}_{i}T$. If $|S'_k|=\theta$, then set $S_k:=S'_k$. Otherwise, by Proposition~\ref{prop:toptheta}, there is a coalition $S_k\subsetneq S'_k$, such that $|S_k|=\theta$ and for each $j\in S_k$, $S_k\succsim^{((c,E),F)}_{j}S'_k.$  In particular, since $S'_k$ is a top-coalition, it follows that $S_k\sim^{((c,E),F)}_{j}S'_k$. Hence, for each $i\in S_k$ and each $T\subseteq N_k$ with $i\in T$, we have $S_k\sim^{((c,E),F)}_{i}S'_k\succsim^{((c,E),F)}_{i}T$ and $S_k$ is a top-coalition of $N_k$ as well. Hence,  agents in $S_{k}$ can never be strictly better off in any other coalition $T\subseteq N_k$. In addition, it follows from previous steps that for each $j\in\{1,\ldots,k\}$, agents in $S_{j}$ can never be strictly better off in any other coalition $T\subseteq N_j$. Thus, if $S_1,\ldots,S_k$ are part of a stable partition, no agent in $\cup_{i=1}^{k}S_i$ can block it.\medskip

Set $N_{k+1}:=N\setminus \left(\cup_{i=1}^{k}S_i\right)$. If $|N_{k+1}|\leq \theta$, then set $S_{k+1}:=N_{k+1}$, define $\pi:=\{S_1,\ldots,S_{k+1}\}$, and stop. Otherwise, go to Step~$k+1$.\medskip

After at most $|N|-\theta$ steps, we have constructed a stable partition $\pi=\{S_1,\ldots,S_l\}$ of coalitions of size at most $\theta$.\end{proof}

\section{Proof of Theorem~\ref{theorem:CEA}}\label{appendix:proofCEA}

Recall that $N=\{1,\dots, n\}$ and $c_1\leq c_2\leq \cdots\leq c_n$. For each $S\subseteq N$, we denote the CEA parameter associated with $(c_S,E_S)$ by $\lambda_{E_S}$, i.e., for each $i\in S$, $CEA_{i}(c_S,E_S)=\min\{c_i,\lambda_{E_S}\}$, where $\lambda_{E_S}$ is chosen so that $\sum_{j\in S}\min\{c_j,\lambda_{E_S}\}= E_S$.\medskip

We first introduce some lemmas that will be used to prove Theorem~\ref{theorem:CEA}. First, we show that each agent $i\in N\setminus\{n\}$ weakly prefers to form a pair with highest-claim agent~$n$, instead of with any other agent.

\begin{lemma}\label{lemma:top}
Let $N\in{\cal N}$, $(c,E)\in\mathcal{P}_{2}^{N}$, and $\succsim^{((c,E),CEA)}$ be the coalition formation problem with agent set $N$ induced by $((c,E),CEA)$. Then, for each $i\in N\setminus\{n\}$ and each $j\in N\setminus\{i,n\}$,
\begin{equation*}\{i,n\}\succsim^{((c,E),CEA)}_{i}\{i,j\}.
\end{equation*}
\end{lemma}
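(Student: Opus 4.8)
The plan is to reduce the preference statement to a single inequality between two CEA payoffs and then compute each payoff in closed form. By the definition of the induced preferences, $\{i,n\}\succsim_i^{((c,E),CEA)}\{i,j\}$ is equivalent to $CEA_i(c_{\{i,n\}},E_{\{i,n\}})\ge CEA_i(c_{\{i,j\}},E_{\{i,j\}})$, and since $(c,E)\in\mathcal{P}_2^N$ every pair $\{i,k\}$ has endowment $E_{\{i,k\}}=\alpha(c_i+c_k)$ with $\alpha\in(0,1)$. So the whole statement comes down to comparing agent $i$'s CEA share across the two possible partners.

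The key preliminary step is to record two explicit expressions for agent $i$'s payoff in a pair $\{i,k\}$, according to whether $i$ is the smaller- or larger-claim member. If $c_i\le c_k$, then $i$ is the (weakly) smaller claimant and is never the agent left to absorb the remainder, so her payoff is $\min\{c_i,\tfrac{\alpha(c_i+c_k)}{2}\}$. If instead $c_i>c_k$, then either the equal split $\tfrac{\alpha(c_i+c_k)}{2}$ is feasible or $k$ is capped at $c_k$ and $i$ receives $\alpha(c_i+c_k)-c_k$, so her payoff is $\max\{\alpha(c_i+c_k)-c_k,\tfrac{\alpha(c_i+c_k)}{2}\}$. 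I would verify each formula by the usual case split on whether the equal-division level lies below the relevant claim.

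With these formulas, I use that $c_n=\max_k c_k$, hence $c_i\le c_n$, so the left-hand side equals $L:=\min\{c_i,\tfrac{\alpha(c_i+c_n)}{2}\}$. For the right-hand side I split on $c_j$ versus $c_i$. If $c_j\ge c_i$, the payoff is $\min\{c_i,\tfrac{\alpha(c_i+c_j)}{2}\}\le L$, because $c_j\le c_n$. If $c_j<c_i$, the payoff is $\max\{\alpha(c_i+c_j)-c_j,\tfrac{\alpha(c_i+c_j)}{2}\}$, and I bound each term of the maximum by $L$: the first term equals $\alpha c_i-(1-\alpha)c_j\le\alpha c_i$, and $\alpha c_i\le c_i$ together with $\alpha c_i=\tfrac{\alpha\cdot 2c_i}{2}\le\tfrac{\alpha(c_i+c_n)}{2}$ (using $\alpha<1$ and $c_i\le c_n$) gives $\alpha c_i\le L$; the second term satisfies $\tfrac{\alpha(c_i+c_j)}{2}\le\tfrac{\alpha(c_i+c_n)}{2}$ (as $c_j\le c_n$) and $\tfrac{\alpha(c_i+c_j)}{2}\le\alpha c_i\le c_i$ (as $c_j\le c_i$), hence is also at most $L$. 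In every case the payoff is at most $L$, which is the desired inequality.

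The main obstacle to anticipate is that agent $i$'s payoff is \emph{not} monotone in the partner's claim: when the partner's claim is very small, $i$ is the larger claimant, the capped partner absorbs her entire small claim while the endowment grows only at rate $\alpha<1$, so $i$'s share actually \emph{decreases} as the partner's claim rises from zero (one can see this explicitly for agent $4$ in Example~\ref{example:CEAalgorithm}). Consequently a one-line ``a larger partner is always better'' argument fails, and the case $c_j<c_i$ must be handled separately. The decisive point there is the uniform bound $\alpha c_i\le L$, which holds precisely because $c_n$ is a largest claim; once this case is isolated, the remaining estimates are routine. Note that $\theta=2$ is used throughout in restricting attention to pairs.
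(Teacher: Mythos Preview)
Your proof is correct and follows essentially the same approach as the paper's: both reduce to comparing $CEA_i$ across the two pairs via the explicit two-agent CEA formulas, then handle the cases by whether the partner is capped. The paper organizes the split slightly differently---it first disposes of the case $CEA_i(c_{\{i,n\}},E_{\{i,n\}})=c_i$ and then splits on whether $CEA_j(c_{\{i,j\}},E_{\{i,j\}})$ equals $c_j$ or the equal share---whereas you keep $L=\min\{c_i,\tfrac{\alpha(c_i+c_n)}{2}\}$ throughout and split on $c_j\gtrless c_i$; the resulting inequalities are the same and your uniform bound $\alpha c_i\le L$ neatly consolidates what the paper does in its Case~1.
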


\begin{proof}[\textbf{Proof}]
Let $N$, $(c,E)$, and $\succsim^{((c,E),CEA)}$ be as specified in the lemma. Let $i\in N\setminus\{n\}$ and $j\in N\setminus\{i,n\}$. We prove $\{i,n\}\succsim^{((c,E),CEA)}_{i}\{i,j\}$ by showing that
\begin{equation*}CEA_i(c_{\{i,n\}}, E_{\{i,n\}})\geq CEA_i(c_{\{i,j\}}, E_{\{i,j\}}).
\end{equation*}
Given that $(c,E)\in {\cal P}_{2}^{N}$, we have $E_{\{i,n\}} = \alpha (c_i+c_n)$ and $E_{\{i,j\}} = \alpha (c_i+c_j)$.
If $CEA_i(c_{\{i,n\}}, E_{\{i,n\}})=c_i$, then the above inequality holds automatically. Hence, assume that $CEA_i(c_{\{i,n\}}, E_{\{i,n\}})=\lambda_{E_{\{i,n\}}}<c_i$. Since $c_i\leq c_n$, this implies $CEA_n(c_{\{i,n\}}, E_{\{i,n\}})=\lambda_{E_{\{i,n\}}}< c_n$.  Thus, $CEA_i(c_{\{i,n\}}, E_{\{i,n\}})=\frac{\alpha (c_i+c_n)}{2}$. We distinguish two cases:\medskip

\noindent \textit{Case~$1$.} $CEA_j(c_{\{i,j\}},E_{\{i,j\}})=c_j$.\footnote{Note that this case only happens when $j<i$.} Hence, $CEA_i(c_{\{i,j\}},E_{\{i,j\}})=\alpha (c_i+c_j)-c_j$.  Then,
$$CEA_i(c_{\{i,n\}},E_{\{i,n\}})\geq CEA_i(c_{\{i,j\}},E_{\{i,j\}})$$
$$\Leftrightarrow\quad \frac{\alpha(c_i + c_n)}{2}\geq \alpha (c_i+c_j)-c_j$$
$$\Leftrightarrow\quad \alpha c_i + \alpha c_n \geq 2 \alpha c_i+ 2(\alpha - 1) c_j$$
$$\Leftrightarrow\quad \underbrace{\alpha}_{>0} (\underbrace{c_n - c_i}_{\geq 0})\geq   2(\underbrace{\alpha - 1}_{<0}) \underbrace{c_j}_{\geq 0}.$$

\noindent \textit{Case~$2$.} $CEA_j(c_{\{i,j\}},E_{\{i,j\}})=\lambda_{E_{\{i,j\}}}=\frac{\alpha (c_i+c_j)}{2}$. Thus, since $c_j\leq c_n$,
\begin{equation*}
CEA_i(c_{\{i,n\}}, E_{\{i,n\}})=\frac{\alpha (c_i+c_n)}{2}\geq\frac{\alpha (c_i+c_j)}{2}= CEA_i(c_{\{i,j\}}, E_{\{i,j\}}).\qedhere
\end{equation*}\end{proof}

We will now focus on agent~$n$ and discover with whom she wants to form a pair.\medskip

Our next lemma captures the intuition that for low values of $\alpha$ that are associated to the situation in which all agents receive an equal split of the endowment, i.e., for each $i\in N\setminus\{n\}$, $CEA_{i}(c_{\{i,n\}},E_{\{i,n\}})=\lambda_{E_{\{i,n\}}}= CEA_{n}(c_{\{i,n\}},E_{\{i,n\}})$, agent~$n$ weakly prefers coalition $\{n-1,n\}$ to any other coalition of size two.

\begin{lemma}\label{corollary:alpha-very-small}
Let $N\in{\cal N}$, $(c,E)\in\mathcal{P}_{2}^{N}$, and $\succsim^{((c,E),CEA)}$ be the coalition formation problem with agent set $N$ induced by $((c,E),CEA)$. If $\alpha \leq \frac{2c_1}{c_1+c_n}\equiv\beta_1$, then for each $i\in N\setminus\{n-1,n\}$, \begin{equation*}\{n-1,n\}\succsim^{((c,E),CEA)}_{n}\{i,n\}.
\end{equation*}
\end{lemma}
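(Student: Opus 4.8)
The plan is to compute agent~$n$'s CEA payoff in each candidate pair $\{i,n\}$ and to show that, under the hypothesis $\alpha\leq\beta_1$, every such pair falls in the ``equal-split'' regime, so that agent~$n$'s payoff equals $\frac{\alpha(c_i+c_n)}{2}$, a quantity maximized by taking the partner with the largest claim, namely $n-1$.

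First I would record the two-agent structure. Since $(c,E)\in\mathcal{P}_2^N$, for each $i\in N\setminus\{n\}$ we have $E_{\{i,n\}}=\alpha(c_i+c_n)$, and because $c_i\leq c_n$ the CEA rule applied to $(c_{\{i,n\}},E_{\{i,n\}})$ either splits the endowment equally, giving agent~$n$ the amount $\frac{\alpha(c_i+c_n)}{2}$, or caps agent~$i$ at $c_i$. The equal-split case is precisely the one in which the equal share does not exceed $c_i$, i.e. $\frac{\alpha(c_i+c_n)}{2}\leq c_i$, equivalently $\alpha\leq\frac{2c_i}{c_i+c_n}$.

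The crux is then to verify that the hypothesis $\alpha\leq\beta_1=\frac{2c_1}{c_1+c_n}$ forces the equal-split regime in \emph{every} pair $\{i,n\}$. For this I would note that the threshold $\frac{2c_i}{c_i+c_n}$ is weakly increasing in $c_i$: cross-multiplying by the positive denominators, $\frac{2c_1}{c_1+c_n}\leq\frac{2c_i}{c_i+c_n}$ is equivalent to $c_1c_n\leq c_ic_n$, i.e. to $c_1\leq c_i$, which holds for every $i$. Hence $\alpha\leq\beta_1=\frac{2c_1}{c_1+c_n}\leq\frac{2c_i}{c_i+c_n}$, so that $CEA_n(c_{\{i,n\}},E_{\{i,n\}})=\frac{\alpha(c_i+c_n)}{2}$ for all $i\in N\setminus\{n\}$.

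Finally I would conclude by monotonicity in the partner's claim. The map $c_i\mapsto\frac{\alpha(c_i+c_n)}{2}$ is weakly increasing, so over $i\in N\setminus\{n\}$ it is maximized at the largest available claim $c_{n-1}$. Thus, for each $i\in N\setminus\{n-1,n\}$,
\[
CEA_n(c_{\{n-1,n\}},E_{\{n-1,n\}})=\frac{\alpha(c_{n-1}+c_n)}{2}\geq\frac{\alpha(c_i+c_n)}{2}=CEA_n(c_{\{i,n\}},E_{\{i,n\}}),
\]
which is exactly $\{n-1,n\}\succsim^{((c,E),CEA)}_n\{i,n\}$. The only delicate point—and it is hardly an obstacle—is confirming that $\beta_1$ is calibrated so that the equal-split regime holds uniformly across all pairs; once that is in place the conclusion is immediate from the monotonicity of the equal split in the partner's claim.
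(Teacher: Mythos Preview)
Your argument is correct and follows essentially the same approach as the paper: both establish that under $\alpha\leq\beta_1$ every pair $\{i,n\}$ falls in the equal-split regime, and then use the monotonicity of $\frac{\alpha(c_i+c_n)}{2}$ in $c_i$ to conclude that $n-1$ is agent~$n$'s (weakly) best partner. The only cosmetic difference is that the paper rewrites the threshold condition as $\alpha c_n\leq(2-\alpha)c_1$ before invoking $c_1\leq c_i$, whereas you directly verify that $c_i\mapsto\frac{2c_i}{c_i+c_n}$ is weakly increasing; these are algebraically equivalent.
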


\begin{proof}Let $N$, $(c,E)$, $\succsim^{((c,E),CEA)}$, and $\alpha$ be as specified in the lemma. Let $i\in N\setminus\{n-1,n\}$. We prove $\{n-1,n\}\succsim^{((c,E),CEA)}_{n}\{i,n\}$ by showing that
\begin{equation*}CEA_{n}(c_{\{n-1,n\}},E_{\{n-1,n\}}) \geq CEA_{n}(c_{\{i,n\}},E_{\{i,n\}}).
\end{equation*}
First note that $$\alpha \leq \frac{2 c_1}{c_1 + c_n}$$$$ \Leftrightarrow \alpha (c_1 + c_n)\leq 2 c_1$$ $$\Leftrightarrow \frac{\alpha (c_1 + c_n)}{2}\leq c_1.$$
Given that $(c,E)\in {\cal P}_{2}^{N}$, we have $E_{\{1,n\}} = \alpha (c_1+c_n)$. Therefore, $CEA_1 (c_{\{1,n\}},E_{\{1,n\}})=\frac{\alpha (c_1 +c_n)}{2}=CEA_n (c_{\{1,n\}},E_{\{1,n\}})$.\medskip

Observe also that $$\alpha \leq \frac{2 c_1 }{c_1 + c_n}$$
$$ \Leftrightarrow \alpha c_n\leq (2-\alpha) c_1.$$
Let $j\in N\setminus\{1,n\}$. Then, $\alpha c_n\leq (2-\alpha) c_1$, $c_1 \leq c_j$, and $(2-\alpha)>0$ imply
$$\alpha c_n\leq (2-\alpha) c_j$$ $$\Leftrightarrow \frac{\alpha (c_j + c_n)}{2}\leq c_j.$$ Therefore, $CEA_j (c_{\{j,n\}},E_{\{j,n\}})=\frac{\alpha (c_j +c_n)}{2}=CEA_n (c_{\{j,n\}},E_{\{j,n\}})$.\medskip

Finally, let $i\in N\setminus\{n-1,n\}$. Given than $c_i \leq c_{n-1}\leq c_{n}$,
\begin{equation*}
 CEA_{n}(c_{\{n-1,n\}}, E_{\{n-1,n\}})=\frac{\alpha(c_{n-1} +c_{n})}{2}\geq \frac{\alpha(c_{i} +c_{n})}{2}=CEA_{n}(c_{\{i,n\}}, E_{\{i,n\}}).\qedhere
 \end{equation*}\end{proof}

Similarly, the following lemma captures the intuition that for high values of $\alpha$ that are associated to the situation in which all agents, except agent $n$, receive their claim, i.e., for each $i\in N\setminus\{n\}$, $CEA_{i}(c_{\{i,n\}},E_{\{i,n\}})=c_i$, agent~$n$ weakly prefers coalition $\{1,n\}$ to any other coalition of size two.

\begin{lemma}\label{corollary:alpha-very-large}
Let $N\in{\cal N}$, $(c,E)\in\mathcal{P}_{2}^{N}$, and $\succsim^{((c,E),CEA)}$ be the coalition formation problem with agent set $N$ induced by $((c,E),CEA)$. If $\alpha\geq\frac{2c_{n-1}}{c_{n-1}+ c_n}\equiv\gamma_1$, then for each $i\in N\setminus\{1,n\}$,
\begin{equation*}\{1,n\}\succsim^{((c,E),CEA)}_{n}\{i,n\}.
\end{equation*}
\end{lemma}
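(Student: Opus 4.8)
The plan is to mirror the argument used for Lemma~\ref{corollary:alpha-very-small}, but now exploiting that a \emph{large} $\alpha$ pushes every low-claim partner to her claims bound. Concretely, I would prove the equivalent inequality
$$CEA_n(c_{\{1,n\}},E_{\{1,n\}})\geq CEA_n(c_{\{i,n\}},E_{\{i,n\}})$$
for each $i\in N\setminus\{1,n\}$. The first and key step is to translate the hypothesis $\alpha\geq\gamma_1=\frac{2c_{n-1}}{c_{n-1}+c_n}$ into the statement that, in \emph{every} pair $\{i,n\}$ with $i\in N\setminus\{n\}$, the low-claim agent $i$ receives exactly her claim $c_i$ under CEA. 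Rewriting $\alpha\geq\frac{2c_{n-1}}{c_{n-1}+c_n}$ as $\alpha c_n\geq(2-\alpha)c_{n-1}$ and combining $c_i\leq c_{n-1}$ with $2-\alpha>0$ yields $\alpha c_n\geq(2-\alpha)c_i$, i.e. $\frac{\alpha(c_i+c_n)}{2}\geq c_i$. Hence the equal split of $E_{\{i,n\}}=\alpha(c_i+c_n)$ already meets or exceeds $c_i$, so $CEA_i(c_{\{i,n\}},E_{\{i,n\}})=c_i$ and, by efficiency, $CEA_n(c_{\{i,n\}},E_{\{i,n\}})=\alpha(c_i+c_n)-c_i=\alpha c_n+(\alpha-1)c_i$.

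With this closed form in hand, the second step is immediate monotonicity. Since $\alpha\in(0,1)$ gives $\alpha-1<0$, the expression $\alpha c_n+(\alpha-1)c_i$ is decreasing in $c_i$; as $c_1\leq c_i$ for every $i\in N\setminus\{1,n\}$, we get $(\alpha-1)c_1\geq(\alpha-1)c_i$ and therefore $CEA_n(c_{\{1,n\}},E_{\{1,n\}})\geq CEA_n(c_{\{i,n\}},E_{\{i,n\}})$, which is the desired conclusion. I would also note in passing that the benchmark pair $\{1,n\}$ itself falls under the regime of the first step (take $i=1$), so the benchmark payoff $CEA_n(c_{\{1,n\}},E_{\{1,n\}})$ is indeed of the claimed residual form $\alpha c_n+(\alpha-1)c_1$.

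The main obstacle — really the only non-mechanical point — is the first step: verifying that the threshold $\gamma_1$ is exactly what forces \emph{all} partners $i$ (not merely the extreme one, $n-1$) to hit their claims bound. This hinges on $c_i\leq c_{n-1}$ together with the sign of $2-\alpha$, and on the boundary case $\frac{\alpha(c_i+c_n)}{2}=c_i$, where the equal-split and claims-bound formulas coincide and both assign agent~$n$ the value $c_i$, so no separate treatment is required. Once agent~$i$'s payoff is pinned to $c_i$, the residual-monotonicity argument closes the proof with no further computation.
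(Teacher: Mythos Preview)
Your proof is correct and follows essentially the same route as the paper: both translate $\alpha\geq\gamma_1$ into $\alpha c_n\geq(2-\alpha)c_{n-1}$, use $c_i\leq c_{n-1}$ and $2-\alpha>0$ to conclude that every partner $i\in N\setminus\{n\}$ receives exactly $c_i$ under CEA in the pair $\{i,n\}$, and then compare the residuals $\alpha c_n-(1-\alpha)c_i$ via monotonicity in $c_i$. The only cosmetic difference is that the paper first works out the case $i=n-1$ explicitly before extending to general $j$, whereas you handle all $i$ in one pass.
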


\begin{proof}Let $N$, $(c,E)$, $\succsim^{((c,E),CEA)}$, and $\alpha$ be as specified in the lemma. Let $i\in N\setminus\{1,n\}$. We prove $\{1,n\}\succsim^{((c,E),CEA)}_{n}\{i,n\}$ by showing that
\begin{equation*}CEA_{n}(c_{\{1,n\}},E_{\{1,n\}}) \geq CEA_{n}(c_{\{i,n\}},E_{\{i,n\}}).
\end{equation*}
First note that $$\alpha \geq \frac{2c_{n-1}}{c_{n-1}+c_n}$$ $$\Leftrightarrow \alpha (c_{n-1} + c_n)\geq 2 c_{n-1}$$ $$\Leftrightarrow \frac{\alpha (c_{n-1} + c_n)}{2}\geq c_{n-1}.$$
Given that $(c,E)\in {\cal P}_{2}^{N}$, we have $E_{\{n-1,n\}} = \alpha (c_{n-1}+ c_{n})$. Therefore, $CEA_{n-1} (c_{\{n-1,n\}},E_{\{n-1,n\}})=c_{n-1}$ and $CEA_{n} (c_{\{n-1,n\}},E_{\{n-1,n\}})=\alpha c_n -(1-\alpha) c_{n-1}$.\medskip

Observe also that $$\alpha \geq \frac{2c_{n-1}}{c_{n-1}+c_n}$$$$ \Leftrightarrow \alpha c_n \geq (2-\alpha) c_{n-1}.$$ Let $j\in N\setminus\{n-1,n\}$. Then, $\alpha c_n \geq (2-\alpha) c_{n-1}$, $c_j \leq c_{n-1}$, and $(2-\alpha)> 0$, imply
$$\alpha c_n \geq (2-\alpha) c_j$$ $$\Leftrightarrow \frac{\alpha (c_j + c_n)}{2}\geq c_j.$$  Therefore, $CEA_j (c_{\{j,n\}},E_{\{j,n\}})=c_j$ and $CEA_n (c_{\{j,n\}},E_{\{j,n\}})= \alpha c_n -(1-\alpha) c_{j}$.\medskip

Finally, let $i\in N\setminus\{1,n\}$. Given than $c_1\leq c_{i}\leq c_{n}$ and $(1-\alpha)>0$,
\begin{equation*}
CEA_{n}(c_{\{1,n\}}, E_{\{1,n\}})=\alpha c_n -(1-\alpha) c_{1}\geq \alpha c_n -(1-\alpha) c_{i} =CEA_{n}(c_{\{i,n\}}, E_{\{i,n\}}).\qedhere
\end{equation*}\end{proof}

Lemmas~\ref{corollary:alpha-very-small} and \ref{corollary:alpha-very-large} now imply that agents $1$ and $n-1$ are potential ``stable partners'' for agent~$n$. When $\alpha$ is low ($\alpha\leq \frac{2c_1}{c_1+c_n}\equiv\beta_1$), then $\{n-1,n\}$ is the candidate for a stable pair, and when $\alpha$ is high ($\alpha\geq \frac{2c_{n-1}}{c_{n-1}+c_n}\equiv\gamma_1$), then $\{1,n\}$ is the candidate for a stable pair. Next, we show that for any other value of $\alpha$ in between $\beta_1$ and $\gamma_1$, agents $1$ and $n-1$ are also potential ``stable partners'' for agent~$n$. Thus, we next need to determine a threshold value $\delta_1$ for parameter $\alpha$ when $\beta_1 <\alpha \leq \gamma_1$ to see when agent~$n$'s partner of choice is $n-1$ (for $\alpha\leq\delta_1$) and when it is $1$ (for $\alpha\geq\delta_1$).\medskip

We next show that $\delta_1\equiv \frac{2c_1}{2c_1 -c_{n-1} +c_n}$, the value specified at Step~1 of the CEA algorithm to trigger either Case~(i) with coalition $\{n-1,n\}$ or Case~(ii) with coalition $\{1,n\}$.\pagebreak

\begin{lemma}\label{lemma:threshold}
Let $N\in{\cal N}$, $(c,E)\in\mathcal{P}_{2}^{N}$, and $\succsim^{((c,E),CEA)}$ be the coalition formation problem with agent set $N$ induced by $((c,E),CEA)$. Assume that $\beta_1 \leq\alpha \leq \gamma_1$. Then, for $\delta_1\equiv \frac{2c_1}{2c_1 -c_{n-1} +c_n}$ we have
\begin{itemize}
	\item[\emph{(i)}]If $\alpha\leq\delta_1$, then for each $i\in N\setminus\{n-1,n\}, \{n-1,n\}\succsim^{((c,E),CEA)}_{n}\{i,n\}$.
	\item[\emph{(ii)}] If $\alpha\geq\delta_1$, then for each $i\in N\setminus\{1,n\}, \{1,n\}\succsim^{((c,E),CEA)}_{n}\{i,n\}$.
\end{itemize}
\end{lemma}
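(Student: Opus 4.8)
The plan is to reduce the comparison of agent~$n$'s payoffs across all possible partners to a comparison between only the two extreme partners, agents~$1$ and $n-1$, and then to read off the threshold $\delta_1$ from that single comparison. Throughout, write $g_i := CEA_n(c_{\{i,n\}}, E_{\{i,n\}})$ for agent~$n$'s payoff in the pair $\{i,n\}$, and set $\tau_i := \frac{2c_i}{c_i+c_n}$, the value of $\alpha$ at which agent~$i$ switches from receiving the equal split to being capped at her claim. As established inside the proofs of Lemmas~\ref{corollary:alpha-very-small} and \ref{corollary:alpha-very-large}, $g_i = \frac{\alpha(c_i+c_n)}{2}$ when $\alpha \le \tau_i$ and $g_i = \alpha c_n - (1-\alpha)c_i$ when $\alpha \ge \tau_i$ (the two expressions agreeing at $\alpha=\tau_i$). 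Since $\tau_i$ is weakly increasing in $c_i$, we have $\tau_1 = \beta_1$ and $\tau_{n-1} = \gamma_1$, so the hypothesis $\beta_1 \le \alpha \le \gamma_1$ places agent~$1$ in the capped regime and agent~$n-1$ in the equal-split regime.

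First I would prove the reduction: for every $i \in N \setminus \{n\}$, $g_i \le \max\{g_1, g_{n-1}\}$. Fix such an $i$ and split on its regime. If $\alpha \ge \tau_i$ (agent~$i$ capped), then since $c_1 \le c_i$, $1-\alpha>0$, and agent~$1$ is also capped, $g_1 = \alpha c_n - (1-\alpha)c_1 \ge \alpha c_n - (1-\alpha)c_i = g_i$. If instead $\alpha \le \tau_i$ (agent~$i$ on the equal split), then since $c_i \le c_{n-1}$ and agent~$n-1$ is also on the equal split, $g_{n-1} = \frac{\alpha(c_{n-1}+c_n)}{2} \ge \frac{\alpha(c_i+c_n)}{2} = g_i$. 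Thus in either case $g_i$ is dominated by $g_1$ or by $g_{n-1}$. Intuitively, $g_i$ is ``V-shaped'' in $c_i$ (decreasing while agent~$i$ is capped, increasing once she is on the equal split), so its maximum over $i$ is attained at an endpoint, i.e., agent~$n$'s best partner is $1$ or $n-1$.

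It then remains to compare the two endpoints. Using $g_1 = \alpha c_n - (1-\alpha)c_1$ and $g_{n-1} = \frac{\alpha(c_{n-1}+c_n)}{2}$, a short manipulation yields the equivalence $g_{n-1} \ge g_1 \iff \alpha(2c_1 - c_{n-1} + c_n) \le 2c_1 \iff \alpha \le \delta_1$, which is exactly where the threshold $\delta_1 = \frac{2c_1}{2c_1 - c_{n-1}+c_n}$ comes from (note $2c_1 - c_{n-1}+c_n>0$ since $c_n\ge c_{n-1}$). Combining with the reduction gives both claims: if $\alpha \le \delta_1$ then $g_{n-1} = \max\{g_1,g_{n-1}\} \ge g_i$ for every $i$, which is~(i); and if $\alpha \ge \delta_1$ then $g_1 = \max\{g_1,g_{n-1}\} \ge g_i$ for every $i$, which is~(ii).

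The only genuinely delicate point is the reduction step: recognizing that because agent~$n$'s payoff moves in opposite directions in the two CEA regimes, the optimum over partners must sit at one of the two extreme claims, so that only the single inequality $g_{n-1} \ge g_1$ has to be resolved. Everything after that is the one-line algebra producing $\delta_1$. To confirm internal consistency with the hypothesis, I would also record that $\beta_1 \le \delta_1 \le \gamma_1$, the left inequality following from $c_1 \le c_{n-1}$ and the right being the computation $\gamma_1 \ge \delta_1$ noted at the end of this appendix.
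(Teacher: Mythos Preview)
Your proposal is correct and follows essentially the same approach as the paper: the paper also establishes that under the hypothesis $\beta_1\le\alpha\le\gamma_1$ agent~$1$ is in the capped regime and agent~$n-1$ is in the equal-split regime, proves the same two monotonicity observations (its ``Fact~1'' and ``Fact~2'' are exactly your regime-wise comparisons $g_{n-1}\ge g_i$ and $g_1\ge g_i$), and derives the threshold $\delta_1$ from the single inequality $g_{n-1}\gtrless g_1$. The only difference is organizational---you state the reduction to $\max\{g_1,g_{n-1}\}$ first and then compute the threshold, while the paper computes the threshold first and then combines it with Facts~1 and~2.
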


\begin{proof}[\textbf{Proof}]Let $N$, $(c,E)$, and $\succsim^{((c,E),CEA)}$ be as specified in the lemma. Given that $(c,E)\in {\cal P}_{2}^{N}$, we have $E_{\{1,n\}} = \alpha (c_1+c_n)$ and $E_{\{n-1,n\}} = \alpha (c_{n-1}+c_n)$. Assume that $\beta_1 \leq\alpha \leq \gamma_1$, which implies $$(2-\alpha)c_1\leq \alpha c_n\leq (2-\alpha)c_{n-1}.$$ This, together with similar arguments to those in the proofs of Lemmas~\ref{corollary:alpha-very-small} and \ref{corollary:alpha-very-large}, implies
$$ \frac{\alpha (c_1 + c_n)}{2}\geq c_1\mbox{ and } \frac{\alpha (c_{n-1} + c_n)}{2}\leq c_{n-1}.$$  Therefore, $CEA_{1}(c_{\{1,n\}}, E_{\{1,n\}})=c_1$ and $CEA_{n-1}(c_{\{n-1,n\}}, E_{\{n-1,n\}})=\frac{\alpha(c_{n-1}+c_n)}{2}\leq c_{n-1}$. Hence, $CEA_{n}(c_{\{1,n\}}, E_{\{1,n\}})=\alpha(c_1+c_n)-c_1$ and $CEA_{n}(c_{\{n-1,n\}}, E_{\{n-1,n\}})=\frac{\alpha(c_{n-1}+c_n)}{2}$.  Next, we check when agent $n$ prefers to form a coalition with either agent $1$ or agent $n-1$.
$$CEA_{n}(c_{\{1,n\}}, E_{\{1,n\}})=\alpha(c_1+c_n)-c_1\gtrless \frac{\alpha(c_{n-1}+c_n)}{2}=CEA_{n}(c_{\{n-1,n\}}, E_{\{n-1,n\}})$$
$$\Leftrightarrow\quad \alpha(2c_1+2c_n-c_{n-1}-c_n)\gtrless 2c_1$$
$$\Leftrightarrow\quad \alpha\gtrless \frac{2c_1}{(2c_1-c_{n-1}+c_n)}=\delta_1.$$

It follows that

\begin{itemize}
	\item[(a.1)]If $\alpha\leq\delta_1$, then $CEA_{n}(c_{\{n-1,n\}},E_{\{n-1,n\}}) \geq CEA_{n}(c_{\{1,n\}},E_{\{1,n\}})$.
	\item[(a.2)] If $\alpha>\delta_1$, then $CEA_{n}(c_{\{1,n\}},E_{\{1,n\}}) > CEA_{n}(c_{\{n-1,n\}},E_{\{n-1,n\}})$.
\end{itemize}

To prove the general Cases (i) and (ii) we first establish two facts that do not depend on $\alpha$.\medskip

\noindent \textit{Fact~$1$.} When considering two agents $j,k\in N\setminus\{n\}$ who, with agent $n$, receive an equal share under CEA, agent $n$ prefers to form a coalition with the higher-claim agent.

Let $j,k\in N\setminus\{n\}$, with $c_j\leq c_k$, be such that $CEA_{j}(c_{\{j,n\}},E_{\{j,n\}})=\frac{\alpha (c_j+c_n)}{2}$ and $CEA_{k}(c_{\{k,n\}},E_{\{k,n\}})=\frac{\alpha (c_k+c_n)}{2}$. Then, $CEA_{n}(c_{\{j,n\}},E_{\{j,n\}})=\frac{\alpha (c_j+c_n)}{2}$ and  $CEA_{n}(c_{\{k,n\}},E_{\{k,n\}})=\frac{\alpha (c_k+c_n)}{2}$. Since $\alpha>0$ and $c_j\leq c_k$,
$$CEA_{n}(c_{\{k,n\}},E_{\{k,n\}})\geq CEA_{n}(c_{\{j,n\}},E_{\{j,n\}}).$$

\noindent \textit{Fact~$2$.} When considering two agents $j,k\in N\setminus\{n\}$ who, with agent $n$, receive their claim under CEA, agent $n$ prefers to form a coalition with the lower-claim agent.

Let $j,k\in N\setminus\{n\}$, with $c_j\leq c_k$, be such that $CEA_{j}(c_{\{j,n\}},E_{\{j,n\}})=c_j$ and $CEA_{k}(c_{\{k,n\}},E_{\{k,n\}})=c_k$. Then, $CEA_{n}(c_{\{j,n\}},E_{\{j,n\}})=\alpha (c_j+c_n)-c_j=\alpha c_n -(1-\alpha)c_j$ and  $CEA_{n}(c_{\{k,n\}},E_{\{k,n\}})=\alpha (c_k+c_n)-c_k=\alpha c_n -(1-\alpha)c_k$. Since $(1-\alpha)>0$ and $c_j\leq c_k$,
$$CEA_{n}(c_{\{j,n\}},E_{\{j,n\}})\geq CEA_{n}(c_{\{k,n\}},E_{\{k,n\}}).$$

We are now ready to prove (i) and (ii).\medskip

\noindent (i) Let $\alpha\leq\delta_1$ and $i\in N\setminus\{n-1,n\}$.

If $CEA_{i}(c_{\{i,n\}},E_{\{i,n\}})=\frac{\alpha (c_i+c_n)}{2}$, then $CEA_{n-1}(c_{\{n-1,n\}},E_{\{n-1,n\}})=\frac{\alpha (c_{n-1}+c_n)}{2}$ and by Fact~1, $CEA_{n}(c_{\{n-1,n\}},E_{\{n-1,n\}})\geq CEA_{n}(c_{\{i,n\}},E_{\{i,n\}}).$

If $CEA_{i}(c_{\{i,n\}},E_{\{i,n\}})=c_i$, then $CEA_{1}(c_{\{1,n\}},E_{\{1,n\}})=c_1$ and either by $i=1$ or by Fact~2, $CEA_{n}(c_{\{1,n\}},E_{\{1,n\}})\geq CEA_{n}(c_{\{i,n\}},E_{\{i,n\}}).$ This, together with (a.1), implies $CEA_{n}(c_{\{n-1,n\}},E_{\{n-1,n\}})\geq CEA_{n}(c_{\{i,n\}},E_{\{i,n\}}).$ Hence, $$\{n-1,n\}\succsim^{((c,E),CEA)}_{n}\{i,n\}.$$

\noindent (ii) Let $\alpha>\delta_1$ and $i\in N\setminus\{1,n\}$.

If $CEA_{i}(c_{\{i,n\}},E_{\{i,n\}})=c_i$, then $CEA_{1}(c_{\{1,n\}},E_{\{1,n\}})=c_1$ and by Fact~2, $CEA_{n}(c_{\{1,n\}},E_{\{1,n\}})\geq CEA_{n}(c_{\{i,n\}},E_{\{i,n\}}).$

If  $CEA_{i}(c_{\{i,n\}},E_{\{i,n\}})=\frac{\alpha (c_i+c_n)}{2}$, then $CEA_{n-1}(c_{\{n-1,n\}},E_{\{n-1,n\}})=\frac{\alpha (c_{n-1}+c_n)}{2}$ and either by $i=n-1$ or by Fact~1, $CEA_{n}(c_{\{n-1,n\}},E_{\{n-1,n\}})\geq CEA_{n}(c_{\{i,n\}},E_{\{i,n\}}).$ This, together with (a.2), implies $CEA_{n}(c_{\{1,n\}},E_{\{1,n\}})\geq CEA_{n}(c_{\{i,n\}},E_{\{i,n\}}).$ Hence, \[\{1,n\}\succsim^{((c,E),CEA)}_{n}\{i,n\}.\qedhere\]
\end{proof}

We finally show that parameters $\beta_1$, $\gamma_1$, and $\delta_1$ as defined in Lemmas~\ref{corollary:alpha-very-small}, \ref{corollary:alpha-very-large}, and \ref{lemma:threshold} satisfy
\begin{equation}\label{eq:beta-delta-gamma}\beta_1\leq \delta_1\leq \gamma_1.
\end{equation}

$$\beta_1\leq \delta_1$$
$$\Leftrightarrow\quad\frac{2c_1}{(c_1+c_n)}\leq\frac{2c_1}{(2c_1 -c_{n-1} +c_n)}$$
$$\Leftrightarrow\quad\frac{2c_1}{(c_1+c_n)}\leq\frac{2c_1}{(c_1+c_n) - (\underbrace{c_{n-1} - c_1}_{\geq 0})}$$
and
$$  \delta_1\leq \gamma_1$$
$$\frac{2c_1}{(2c_1 -c_{n-1} +c_n)}\leq\frac{2c_{n-1}}{(c_{n-1}+c_n)}$$
$$\Leftrightarrow\quad 2c_1(c_{n-1}+c_n)\leq 2c_{n-1}(2c_1 -c_{n-1} +c_n)$$
$$\Leftrightarrow\quad c_1c_{n-1}+c_1c_n\leq 2c_1c_{n-1} -c_{n-1}c_{n-1} +c_{n-1}c_n$$
$$\Leftrightarrow\quad 0\leq c_1c_{n-1} -c_1c_n-c_{n-1}c_{n-1} +c_{n-1}c_n$$
$$\Leftrightarrow\quad 0\leq c_{n-1}(c_n-c_{n-1})-c_1(c_n-c_{n-1} )$$
$$\Leftrightarrow\quad 0\leq (\underbrace{c_{n-1}-c_1}_{\geq 0})(\underbrace{c_n-c_{n-1}}_{\geq 0})$$

We are now ready to prove Theorem~\ref{theorem:CEA}.

\begin{proof}[\textbf{Proof of Theorem~\ref{theorem:CEA}}] Let $N\in{\cal N}$ such that $|N|>2$ and $(c,E)\in{\cal P}_{2}^{N}$. Consider the coalition formation problem with agent set $N$ induced by $((c,E),CEA)$. We show that the partition $\pi=\{S_1,\dots,S_l\}$ obtained by the CEA algorithm is stable.\medskip

First, we show that the coalitions obtained by the CEA algorithm cannot be blocked by coalitions of size smaller than or equal to two. \medskip

\noindent\textbf{Step~$\bm{1}$.} Recall that $N_1:=N$, $|N_1|>2$, and according to Cases~(i) or (ii) at Step~$1$ of the CEA algorithm, $S_1\in\{\{n-1,n\},\{1,n\}\}$. We show that in either case, $S_1$ is for each of its members at least as desirable as any other coalition of size smaller than or equal to two.\medskip

\noindent \textit{Case~\emph{(i)}.} $\alpha\leq\delta_1$.
Note that since $(c,E)\in \mathcal{P}_{2}^{N}$, we have that for each $j\in\{n-1,n\}$, $$\{n-1,n\}\succ^{((c,E),CEA)}_{j}\{j\}.$$
Next, let $j\in\{n-1,n\}$ and $i\in N_1\setminus\{n-1,n\}$.
We prove $\{n-1,n\}\succsim^{((c,E),CEA)}_{j}\{i,j\}$ by showing that \begin{equation}\label{eqCEAi}
CEA_j(c_{\{n-1,n\}}, E_{\{n-1,n\}})\geq CEA_j(c_{\{i,j\}}, E_{\{i,j\}}).
\end{equation}
For $j=n-1$, inequality (\ref{eqCEAi}) follows from Lemma~\ref{lemma:top}. For $j=n$, inequality (\ref{eqCEAi}) follows from Lemmas~\ref{corollary:alpha-very-small} and \ref{lemma:threshold}~(i).\medskip

\noindent \textit{Case~\emph{(ii)}.} $\alpha>\delta_1$.
Note that since $(c,E)\in \mathcal{P}_{2}^{N}$, we have that for each $j\in\{1,n\}$, $$\{1,n\}\succ^{((c,E),CEA)}_{j}\{j\}.$$
Next, let $j\in\{1,n\}$ and $i\in N_1\setminus\{1,n\}$.
We prove $\{1,n\}\succsim^{((c,E),CEA)}_{j}\{i,j\}$ by showing that \begin{equation}\label{eqCEAii}
CEA_j(c_{\{1,n\}}, E_{\{1,n\}})\geq CEA_j(c_{\{i,j\}}, E_{\{i,j\}}).
\end{equation}
For $j=1$, inequality (\ref{eqCEAii}) follows from Lemma~\ref{lemma:top}. For $j  =n$, inequality (\ref{eqCEAi}) follows from Lemmas~\ref{corollary:alpha-very-large} and \ref{lemma:threshold}~(ii).

In particular, there exists no coalition $T\subseteq N_1$ such that $|T|\leq 2$, and for each $i\in S_1\cap T$, $T\succ^{((c,E),CEA)}_{i}S_1$, i.e., an agent from set $S_1$ cannot be part of a blocking coalition of size smaller than or equal to two with an agent from set $N_1$.\medskip

\noindent\textbf{Step~$\bm{k}$ ($\bm{k>1}$).} Recall from Step~$k-1$ that $N_k:=N\setminus\left(\cup_{j=1}^{k-1}S_{j}\right)$ and $|N_k|> 2$. Furthermore, agents in $N_k$ are relabeled such that $N_k=\{1',\ldots,n'\}$, $c_{1'}\leq\ldots\leq c_{n'}$, and $\delta_k=\frac{2c_{1'}}{2c_{1'} -c_{(n-1)'} +c_{n'}}$. Then, according to Cases~(i) or (ii) at Step~$k$ of the CEA algorithm, $S_k\in\{\{(n-1)',n'\},\{1',n'\}\}$. Hence, by a similar reasoning than at Step~1 (with agents $1'$, $(n-1)'$, and $n'$ in the roles of agents $1$, $n-1$, and $n$ respectively), it follows that there exists no coalition $T\subseteq N_k$ such that $|T|\leq 2$, and for each $i\in S_k\cap T$, $T\succ^{((c,E),CEA)}_{i}S_k$, i.e., an agent from set $S_k$ cannot be part of a blocking coalition with an agent from set $N_k$. By the previous steps, an agent from set $\bigcup_{j\in\{1,\ldots,k-1\}}S_j$ cannot be part of a blocking coalition with an agent from set $S_k$. Hence, no agent from set $S_k$ can be part of a blocking coalition of size smaller than or equal to two.\medskip

After $l-1$ steps, we have shown that there is no blocking coalition $T\subsetneq N$ such that $|T|\leq 2$.\medskip

Second, assume, by contradiction, that $\pi$ is not stable. Then, there exists a blocking coalition $T\subseteq N$ such that for each agent $i\in T$, $T \succ^{((c,E),CEA)}_{i}\pi(i)$. In particular, $|T|> 2$.\medskip

Then, by Proposition~\ref{prop:toptheta} (Appendix~\ref{appendix:proofTheorem6}), there is a coalition $T'\subsetneq T$ with $|T'|=2$ such that for each $j\in T'$, $$T'\succsim^{((c,E),CEA)}_{j}T\succ^{((c,E),CEA)}_{j}\pi(j),$$ which contradicts the fact that there is no blocking coalition of size smaller than or equal to two.\medskip

We have proven that partition $\pi$ obtained by the CEA algorithm is stable.\end{proof}

\section[Cases and parameters]{Cases and parameters in the $\bm{\theta}$-CEA algorithm}\label{appendix:conditionalgorithm}

Recall that $N=\{1,\dots, n\}$ and $c_1\leq c_2\leq \cdots\leq c_n$. Throughout this appendix, we consider sets of cardinalities $2,\ldots,\theta$ and proportional generalized claims problems, i.e., problems in $\mathcal{P}^{N}$ (without imposing a minimal coalition size of $\theta$).\medskip

Note that in the $\theta$-CEA set algorithm, starting with agent $n$, we construct a set of cardinality $\theta$ by adding agents one by one. More precisely, at each Step~$k$, a set of agents $S'_{k-1}$ considers to add either the lowest-label or the highest-label agent of the remaining set of agents. The results in this appendix show that the agent who is added, according to $\theta$-CEA set algorithm Case (i) or (ii), is weakly preferred by the agents in $S'_{k-1}$ to any other agent who could have been added.\medskip

Let $S\subsetneq N$, $S\neq\emptyset$, and consider agents $i,j\in N\setminus S$ such that $i<j$, i.e., $c_i\leq c_j$. We first show that if both agents receive their claim in coalitions $S\cup\{i\}$ and $S\cup\{j\}$ respectively, then all agents in $S$  weakly prefer to form a coalition with agent~$i$.

\begin{lemma}\label{lemma:thetaalpha-very-large}
Let $N\in{\cal N}$, $(c,E)\in\mathcal{P}^{N}$, and $\succsim^{((c,E),CEA)}$ be the coalition formation problem with agent set $N$ induced by $((c,E),CEA)$. Let $S\subsetneq N$, $S\neq\emptyset$, and $j\in N\setminus S$ such that $$CEA_{j}(c_{(S\cup\{j\})}, E_{(S\cup\{j\})})=c_j.$$

\noindent Then, for each $i\in N\setminus(S\cup\{j\})$ such that $i< j$,

$$CEA_{i}(c_{(S\cup\{i\})}, E_{(S\cup\{i\})})=c_i,$$
and for each $k\in S$,
\begin{equation}\label{eq1:thetaalpha-very-large}
S\cup\{i\}\succsim^{((c,E),CEA)}_{k} S\cup\{j\}.
\end{equation}
\end{lemma}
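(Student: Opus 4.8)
The plan is to reason through the CEA threshold parameters. Write $\lambda_i := \lambda_{E_{S\cup\{i\}}}$ and $\lambda_j := \lambda_{E_{S\cup\{j\}}}$ for the two coalitions, and introduce the auxiliary function $f(\lambda) := \sum_{k\in S}\min\{c_k,\lambda\}$, the total award going to the members of $S$ as a function of the common threshold; $f$ is continuous and non-decreasing. Since $(c,E)\in\mathcal{P}^{N}$, recall that $E_{S\cup\{i\}}=\alpha(c^{S}+c_i)$ and $E_{S\cup\{j\}}=\alpha(c^{S}+c_j)$, and note that $c_i\le c_j$ gives $E_{S\cup\{i\}}\le E_{S\cup\{j\}}$. (Both are genuine claims problems, since $\alpha\in(0,1)$ forces the endowment below the sum of claims, so the thresholds are well defined.)

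For the first claim, that $CEA_i(c_{S\cup\{i\}},E_{S\cup\{i\}})=c_i$, I would exploit the hypothesis $CEA_j(c_{S\cup\{j\}},E_{S\cup\{j\}})=c_j$. This means $\lambda_j\ge c_j\ge c_i$ and $f(\lambda_j)=E_{S\cup\{j\}}-c_j=\alpha c^{S}-(1-\alpha)c_j$. Because $f$ is non-decreasing and $\lambda_j\ge c_i$, I get $f(c_i)\le f(\lambda_j)=\alpha c^{S}-(1-\alpha)c_j\le \alpha c^{S}-(1-\alpha)c_i$, where the last step uses $c_j\ge c_i$ and $1-\alpha>0$. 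Hence $f(c_i)+c_i\le\alpha(c^{S}+c_i)=E_{S\cup\{i\}}$. Now consider $g(\lambda):=f(\lambda)+\min\{c_i,\lambda\}$, which on $[0,c_i]$ equals $f(\lambda)+\lambda$ and is therefore \emph{strictly} increasing there, while $g(\lambda_i)=E_{S\cup\{i\}}$. If we had $\lambda_i<c_i$, then $E_{S\cup\{i\}}=g(\lambda_i)<g(c_i)=f(c_i)+c_i\le E_{S\cup\{i\}}$, a contradiction; so $\lambda_i\ge c_i$ and agent $i$ receives her full claim.

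For the second claim, inequality~(\ref{eq1:thetaalpha-very-large}), I would use that both added agents now receive their full claims, so the endowment left to the members of $S$ is $\alpha c^{S}-(1-\alpha)c_i$ in $S\cup\{i\}$ and $\alpha c^{S}-(1-\alpha)c_j$ in $S\cup\{j\}$. By consistency of CEA, the restriction to $S$ of each allocation is exactly the CEA allocation of $(c_S,\cdot)$ for the corresponding residual endowment (equivalently, this is immediate because both restrictions are $(\min\{c_k,\lambda_i\})_{k\in S}$ and $(\min\{c_k,\lambda_j\})_{k\in S}$). Since $c_i\le c_j$ and $1-\alpha>0$, the residual endowment in $S\cup\{i\}$ is at least that in $S\cup\{j\}$, so resource monotonicity of CEA (applied to the fixed claims vector $c_S$) yields $CEA_k(c_{S\cup\{i\}},E_{S\cup\{i\}})\ge CEA_k(c_{S\cup\{j\}},E_{S\cup\{j\}})$ for every $k\in S$, which is precisely $S\cup\{i\}\succsim^{((c,E),CEA)}_{k}S\cup\{j\}$.

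The step I expect to be the main obstacle is the monotonicity argument in the first part: one must resist concluding $\lambda_i\ge\lambda_j$ directly (which can fail because $f$ may have flat stretches) and instead argue through the strictly increasing auxiliary function $g$ on the interval $[0,c_i]$. Once the full-claim conclusion for agent $i$ is secured, the second part reduces cleanly to resource monotonicity of CEA for the common claims vector $c_S$.
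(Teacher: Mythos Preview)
Your proof is correct and follows essentially the same line as the paper's: both establish that agent $i$ also receives her full claim and then compare the residuals left to $S$. The paper packages the first step via an explicit auxiliary problem $(c_{S\cup\{i\}},\bar{E}_{S\cup\{i\}})$ with $\bar{E}_{S\cup\{i\}}=E_{S\cup\{j\}}-c_j+c_i$ and the observation $\lambda_{\bar{E}_{S\cup\{i\}}}=\lambda_{E_{S\cup\{j\}}}$, then shows $E_{S\cup\{i\}}\ge\bar{E}_{S\cup\{i\}}$ to conclude $\lambda_{E_{S\cup\{i\}}}\ge\lambda_{E_{S\cup\{j\}}}\ge c_j\ge c_i$, which gives both conclusions at once; you instead work directly with the threshold function $f$ and the strictly increasing auxiliary $g$ on $[0,c_i]$, and then invoke consistency plus resource monotonicity on $(c_S,\cdot)$ for the second part. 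The two arguments are interchangeable, with the paper's auxiliary-problem device yielding the $\lambda$-comparison slightly more directly, and your route making the role of consistency and resource monotonicity more explicit.
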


\begin{proof} [\textbf{Proof}] Let $N$, $(c,E)$, $\succsim^{((c,E),CEA)}$, $S\subsetneq N$ and $i,j\in N\setminus S$ be as specified in the lemma. Hence, $c_i\leq c_j$. Given that $(c,E)\in {\cal P}^{N}$, we have $E_{(S\cup\{i\})} = \alpha c^{(S\cup\{i\})}$ and $E_{(S\cup\{j\})} = \alpha c^{(S\cup\{j\})}$.

Starting from problem $(c_{(S\cup\{j\})}, E_{(S\cup\{j\})})\in\mathcal{C}^{(S\cup\{j\})}$, we replace agent $j$ (with claim $c_j$) with agent $i$ (with claim $c_i$) but without changing the CEA payoffs of agents in $S$ in the resulting problem in $\mathcal{C}^{(S\cup\{i\})}$. That is, we consider the auxiliary  problem $(c_{(S\cup\{i\})}, \bar{E}_{(S\cup\{i\})})\in\mathcal{C}^{(S\cup\{i\})}$ such that for each $k\in S$, $CEA_k(c_{(S\cup\{i\})}, \bar{E}_{(S\cup\{i\})})=CEA_k(c_{(S\cup\{j\})}, E_{(S\cup\{j\})})$, and $CEA_i(c_{(S\cup\{i\})}, \bar{E}_{(S\cup\{i\})})=c_i.$ Hence, $$\bar{E}_{(S\cup\{i\})}=\sum_{k\in S} CEA_k(c_{(S\cup\{j\})}, E_{(S\cup\{j\})})+c_i=E_{(S\cup\{j\})}-c_j + c_i.$$  Note that $(c_{(S\cup\{i\})}, \bar{E}_{(S\cup\{i\})})$ is such that $\bar{E}_{(S\cup\{i\})}\neq \alpha c^{(S\cup\{i\})}$.
Furthermore, by construction, $\lambda_{\bar{E}_{(S\cup\{i\})}}=\lambda_{E_{(S\cup\{j\})}}$.\\

Finally, consider problem $(c_{(S\cup\{i\})}, E_{(S\cup\{i\})})\in\mathcal{C}^{(S\cup\{i\})}$. We have that $E_{(S\cup\{i\})} = E_{(S\cup\{j\})}-\alpha c_j +\alpha c_i$. Therefore, since $c_j-c_i\geq 0$ and $\alpha\in (0,1)$,
$$E_{(S\cup\{i\})} =E_{(S\cup\{j\})}-\alpha (c_j -c_i)\geq E_{(S\cup\{j\})}-(c_j -c_i)=\bar{E}_{(S\cup\{i\})}.$$
Then, by definition of the CEA rule, $\lambda_{E_{(S\cup\{i\})}}\geq\lambda_{\bar{E}_{(S\cup\{i\})}}=\lambda_{E_{(S\cup\{j\})}}.$ Thus, we have that
\begin{equation*}
CEA_i(c_{(S\cup\{i\})}, E_{(S\cup\{i\})})=c_i=CEA_i(c_{(S\cup\{i\})}, \bar{E}_{(S\cup\{i\})}),
\end{equation*}
and for each $k\in S$,
\begin{equation*}CEA_{k}(c_{(S\cup \{i\})}, E_{(S\cup\{i\})})\geq CEA_k(c_{(S\cup\{i\})}, \bar{E}_{(S\cup\{i\})})= CEA_{k}(c_{S\cup\{j\}}, E_{(S\cup\{j\})}),
\end{equation*}
or equivalently,
\begin{equation*}
S\cup\{i\}\succsim^{((c,E),CEA)}_{k} S\cup\{j\}.
\qedhere
\end{equation*}\end{proof}

Lemma~\ref{lemma:thetaalpha-very-large} implies that, given any coalition $S\subsetneq N$, among all agents receiving their claim when added, coalition $S$ weakly prefers to add a lowest-claim agent. In particular, if there are several such agents, we add the agent with the lowest label.\medskip

Let $S\subsetneq N$, $S\neq\emptyset$, and consider agents $i,j\in N\setminus S$ such that $i<j$, i.e., $c_i\leq c_j$. We second show that if both agents do not receive their claim in coalitions $S\cup\{i\}$ and $S\cup\{j\}$ respectively, then all agents in $S$ weakly prefer to form a coalition with agent~$j$.

\begin{lemma}\label{lemma:thetaalpha-very-small}
Let $N\in{\cal N}$, $(c,E)\in\mathcal{P}^{N}$, and $\succsim^{((c,E),CEA)}$ be the coalition formation problem with agent set $N$ induced by $((c,E),CEA)$. Let $S\subsetneq N$, $S\neq\emptyset$, and $i\in N\setminus S$ such that $$CEA_{i}(c_{(S\cup\{i\})}, E_{(S\cup\{i\})})=\lambda_{E_{(S\cup\{i\})}}<c_i.$$
\noindent Then, for each $j\in N\setminus(S\cup\{i\})$ such that $i< j$,
$$CEA_{j}(c_{(S\cup\{j\})}, E_{(S\cup\{j\})})=\lambda_{E_{(S\cup\{j\})}}<c_j,$$
and for each $k\in S$,
 \begin{equation}\label{eq1:thetaalpha-very-small}
S\cup\{j\}\succsim^{((c,E),CEA)}_{k} S\cup\{i\}.
\end{equation}
\end{lemma}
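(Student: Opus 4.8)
The plan is to mirror the proof of Lemma~\ref{lemma:thetaalpha-very-large}, using the same auxiliary-problem device but now in the ``rationed'' regime. Throughout, write $\nu:=\lambda_{E_{(S\cup\{i\})}}$, so the hypothesis reads $\nu<c_i\leq c_j$, and recall $(c,E)\in\mathcal{P}^{N}$ gives $E_{(S\cup\{i\})}=\alpha c^{(S\cup\{i\})}$ and $E_{(S\cup\{j\})}=\alpha c^{(S\cup\{j\})}$ with $\alpha\in(0,1)$.

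First I would establish that agent~$j$ is also rationed in $S\cup\{j\}$, i.e.\ $\lambda_{E_{(S\cup\{j\})}}<c_j$. I expect this to be the main obstacle, because --- in contrast to Lemma~\ref{lemma:thetaalpha-very-large}, where the analogous ``added agent gets her claim'' fact dropped out of the parameter comparison --- here the monotonicity comparison alone pushes the parameter \emph{upward}, which is the wrong direction for an upper bound. I would therefore argue by contradiction: suppose $\mu:=\lambda_{E_{(S\cup\{j\})}}\geq c_j$, so that $CEA_j(c_{(S\cup\{j\})},E_{(S\cup\{j\})})=c_j$. Writing the two efficiency identities $\sum_{k\in S}\min\{c_k,\mu\}+c_j=\alpha(c^S+c_j)$ and $\sum_{k\in S}\min\{c_k,\nu\}+\nu=\alpha(c^S+c_i)$, and subtracting (using $\mu\geq c_j\geq c_i>\nu$, so each $\min\{c_k,\mu\}-\min\{c_k,\nu\}\geq 0$), one obtains $c_j-\nu\leq\alpha(c_j-c_i)$, hence $\nu\geq (1-\alpha)c_j+\alpha c_i\geq c_i$, contradicting $\nu<c_i$. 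The conceptual point is that passing from claim $c_i$ to claim $c_j$ raises the coalitional endowment only by $\alpha(c_j-c_i)$ while raising the added agent's claim by the full $c_j-c_i$; since $\alpha<1$ the claim outpaces the endowment, so rationing is inherited upward along claims.

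Second, with both agents now known to be rationed, I would prove the preference \eqref{eq1:thetaalpha-very-small} via the auxiliary problem exactly as in Lemma~\ref{lemma:thetaalpha-very-large}. I construct $(c_{(S\cup\{j\})},\bar E_{(S\cup\{j\})})\in\mathcal{C}^{(S\cup\{j\})}$ that preserves the $S$-payoffs of $(c_{(S\cup\{i\})},E_{(S\cup\{i\})})$ and rations agent~$j$ at the same level $\nu$ (legitimate since $c_j\geq c_i>\nu$); by construction $\lambda_{\bar E_{(S\cup\{j\})}}=\nu$ and $\bar E_{(S\cup\{j\})}=\sum_{k\in S}CEA_k(c_{(S\cup\{i\})},E_{(S\cup\{i\})})+\nu=E_{(S\cup\{i\})}$. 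Comparing with the true endowment, $E_{(S\cup\{j\})}=\alpha(c^S+c_j)\geq\alpha(c^S+c_i)=E_{(S\cup\{i\})}=\bar E_{(S\cup\{j\})}$, so by monotonicity of CEA in the endowment $\lambda_{E_{(S\cup\{j\})}}\geq\nu$ and, for each $k\in S$, $CEA_k(c_{(S\cup\{j\})},E_{(S\cup\{j\})})\geq CEA_k(c_{(S\cup\{j\})},\bar E_{(S\cup\{j\})})=CEA_k(c_{(S\cup\{i\})},E_{(S\cup\{i\})})$, which is precisely $S\cup\{j\}\succsim^{((c,E),CEA)}_{k}S\cup\{i\}$.

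In short, the second step is routine once the first is in hand, so the crux is the upward inheritance of rationing, which I would pin down with the efficiency-equation subtraction above (the inequality $\nu\geq(1-\alpha)c_j+\alpha c_i$). This lemma is the exact dual of Lemma~\ref{lemma:thetaalpha-very-large}: there, among agents who receive their full claim the coalition prefers to add the \emph{lowest}-claim agent, whereas here, among agents who are rationed the coalition prefers to add the \emph{highest}-claim agent.
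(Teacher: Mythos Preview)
Your proof is correct. The second step---the auxiliary problem $(c_{(S\cup\{j\})},\bar E_{(S\cup\{j\})})$ with $\bar E_{(S\cup\{j\})}=E_{(S\cup\{i\})}$ and the endowment-monotonicity comparison---is exactly the paper's argument.

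The only difference is in the first step. The paper dispatches the ``upward inheritance of rationing'' in one line by invoking Lemma~\ref{lemma:thetaalpha-very-large} contrapositively: if $CEA_j(c_{(S\cup\{j\})},E_{(S\cup\{j\})})=c_j$, then Lemma~\ref{lemma:thetaalpha-very-large} (with $i$ in the role of the lower-claim agent) forces $CEA_i(c_{(S\cup\{i\})},E_{(S\cup\{i\})})=c_i$, contradicting the hypothesis. Your efficiency-equation subtraction is a valid self-contained alternative and even yields the sharper bound $\nu\geq(1-\alpha)c_j+\alpha c_i$, but the paper's route is shorter precisely because the dual lemma is already available.
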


\begin{proof} [\textbf{Proof}] Let $N$, $(c,E)$, $\succsim^{((c,E),CEA)}$, $S\subsetneq N$ and $i,j\in N\setminus S$ be as specified in the lemma. Hence, $c_i\leq c_j$. Given that $(c,E)\in {\cal P}^{N}$, we have $E_{(S\cup\{i\})} = \alpha c^{(S\cup\{i\})}$ and $E_{(S\cup\{j\})} = \alpha c^{(S\cup\{j\})}$.
Suppose that $CEA_j(c_{(S\cup\{j\})}, E_{(S\cup\{j\})})=c_j$. Then, by Lemma~\ref{lemma:thetaalpha-very-large}, $CEA_i(c_{(S\cup\{i\})}, E_{(S\cup\{i\})})= c_i$, contradicting our assumption that $CEA_i(c_{(S\cup\{i\})}, E_{(S\cup\{i\})})=\lambda_{E_{(S\cup\{i\})}}< c_i$. Hence,
\begin{equation*}
CEA_j(c_{(S\cup\{j\})}, E_{(S\cup\{j\})})=\lambda_{E_{(S\cup\{j\})}}<c_j.
\end{equation*}

Starting from problem $(c_{(S\cup\{i\})}, E_{(S\cup\{i\})})\in\mathcal{C}^{(S\cup\{i\})}$, we replace agent $i$ (with claim $c_i$) with agent $j$ (with claim $c_j$) without changing the endowment. That is, we consider the auxiliary problem $(c_{(S\cup\{j\})}, E_{(S\cup\{i\})})\in\mathcal{C}^{(S\cup\{j\})}$. Note that $(c_{(S\cup\{j\})}, E_{(S\cup\{i\})})$ is such that $E_{(S\cup\{i\})}\leq \alpha c^{(S\cup\{j\})}$. Since $CEA_i(c_{(S\cup\{i\})}, E_{(S\cup\{i\})})=\lambda_{E_{(S\cup\{i\})}}< c_i$ and $c_i \leq c_j,$
$$CEA_i(c_{(S\cup\{i\})}, E_{(S\cup\{i\})})=\lambda_{E_{(S\cup\{i\})}}=CEA_j(c_{(S\cup\{j\})}, E_{(S\cup\{i\})})$$
and for each agent $k\in S$, $$CEA_{k}(c_{(S\cup\{i\})}, E_{(S\cup\{i\})})=CEA_{k}(c_{(S\cup\{j\})}, E_{(S\cup\{i\})}).$$

Finally, consider problem $(c_{(S\cup\{j\})}, E_{(S\cup\{j\})})\in\mathcal{C}^{(S\cup\{j\})}$. Since $E_{(S\cup\{j\})}=\alpha c^{(S\cup\{j\})}\geq \alpha c^{(S\cup\{i\})}=E_{(S\cup\{i\})}$, by the definition of the CEA rule, we have that $\lambda_{E_{(S\cup\{j\})}}\geq \lambda_{E_{(S\cup\{i\})}}$. Thus, for each $k\in S$,
\begin{equation*}CEA_{k}(c_{(S\cup \{j\})}, E_{(S\cup\{j\})})\geq CEA_{k}(c_{S\cup\{i\}}, E_{(S\cup\{i\})}),
\end{equation*}
or equivalently,
\begin{equation*}S\cup\{j\}\succsim^{((c,E),CEA)}_{k} S\cup\{i\}.\qedhere
\end{equation*}
\end{proof}

Lemma~\ref{lemma:thetaalpha-very-small} implies that, given any coalition $S\subsetneq N$, among all the agents who do not receive their claim when added, coalition $S$ weakly prefers to add a highest-claim agent. In particular, if there are several such agents, we add the agent with the highest label.\medskip

To summarize, by Lemmas~\ref{lemma:thetaalpha-very-large} and \ref{lemma:thetaalpha-very-small}, a coalition $S\subsetneq N$ weakly prefers to add either the lowest-label agent or the highest-label agent to their coalition. Finally, we determine which of these two agents the coalition weakly prefers to add.\medskip

Let $S\subsetneq N$, $S\neq\emptyset$, and consider agents $i,j\in N\setminus S$ such that $i<j$, i.e., $c_i\leq c_j$.  We next show with which agent, $i$ or $j$, all agents in $S$ weakly prefer to form a coalition if agent $i$ receives her claim in coalition $S\cup\{i\}$ and agent $j$ does not receive her claim in coalition $S\cup\{j\}$.

\begin{lemma}\label{lemma:thresholdtheta}
Let $N\in{\cal N}$, $(c,E)\in\mathcal{P}^{N}$, and $\succsim^{((c,E),CEA)}$ be the coalition formation problem with agent set $N$ induced by $((c,E),CEA)$. Let $S\subsetneq N$, $S\neq\emptyset$, and $i,j\in N\setminus S$ be such that $i< j$ and $$CEA_{i}(c_{(S\cup\{i\})}, E_{(S\cup\{i\})})=c_i,$$ and $$CEA_{j}(c_{(S\cup\{j\})}, E_{(S\cup\{j\})})=\lambda_{E_{(S\cup\{j\})}}< c_j.$$
Hence, $c_i< c_j$.
\begin{itemize}
	\item[\emph{(i)}]If $\lambda_{E_{(S\cup\{j\})}}\leq (1-\alpha)c_i +\alpha c_j$, then for each $k\in S$, $S\cup\{j\}\succsim^{((c,E),CEA)}_{k} S\cup\{i\}.$
	\item[\emph{(ii)}] If $\lambda_{E_{(S\cup\{j\})}}> (1-\alpha)c_i +\alpha c_j$, then for each $k\in S$, $S\cup\{i\}\succsim^{((c,E),CEA)}_{k} S\cup\{j\}.$
\end{itemize}
\end{lemma}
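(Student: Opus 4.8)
The plan is to avoid tracking the individual payoffs of the members of $S$ altogether, and instead to observe that the entire family of member-by-member comparisons collapses—via consistency of the CEA rule—to a single comparison between two scalars: the amount of endowment that ``reaches'' $S$ when agent $i$ is added versus when agent $j$ is added. Once this reduction is made, resource monotonicity of CEA on the fixed claims vector $c_S$ settles both cases at once. Recall that CEA, being a parametric rule, is continuous, resource monotonic, and consistent, so all three tools are available.

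First I would write $\mu:=\lambda_{E_{(S\cup\{i\})}}$ and $\nu:=\lambda_{E_{(S\cup\{j\})}}$ and use the two hypotheses to compute the residual endowments for $S$. Since $CEA_i(c_{(S\cup\{i\})},E_{(S\cup\{i\})})=c_i$ and $E_{(S\cup\{i\})}=\alpha(c^S+c_i)$, summing the CEA payoffs over $S$ gives
$$\sum_{k\in S}CEA_k(c_{(S\cup\{i\})},E_{(S\cup\{i\})})=\alpha(c^S+c_i)-c_i=\alpha c^S-(1-\alpha)c_i=:E_S^{i}.$$
Likewise, since $CEA_j(c_{(S\cup\{j\})},E_{(S\cup\{j\})})=\nu$ and $E_{(S\cup\{j\})}=\alpha(c^S+c_j)$,
$$\sum_{k\in S}CEA_k(c_{(S\cup\{j\})},E_{(S\cup\{j\})})=\alpha(c^S+c_j)-\nu=\alpha c^S+\alpha c_j-\nu=:E_S^{j}.$$

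Next I would apply consistency: for each $k\in S$ we have $CEA_k(c_{(S\cup\{i\})},E_{(S\cup\{i\})})=CEA_k(c_S,E_S^{i})$ and $CEA_k(c_{(S\cup\{j\})},E_{(S\cup\{j\})})=CEA_k(c_S,E_S^{j})$. Thus the members of $S$ solve the \emph{same} reduced claims problem on $c_S$ under the two scenarios, differing only in the endowment, and the comparison is governed entirely by the ordering of $E_S^{i}$ and $E_S^{j}$. The threshold in the statement now appears by a one-line computation,
$$E_S^{j}-E_S^{i}=\bigl[(1-\alpha)c_i+\alpha c_j\bigr]-\nu,$$
so $(1-\alpha)c_i+\alpha c_j$ is exactly the break-even value of $\nu=\lambda_{E_{(S\cup\{j\})}}$. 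Finally, resource monotonicity finishes the proof in both directions: in case~(i), $\nu\le(1-\alpha)c_i+\alpha c_j$ gives $E_S^{j}\ge E_S^{i}$, whence $CEA_k(c_S,E_S^{j})\ge CEA_k(c_S,E_S^{i})$ for every $k\in S$, i.e. $S\cup\{j\}\succsim^{((c,E),CEA)}_{k}S\cup\{i\}$; in case~(ii), $\nu>(1-\alpha)c_i+\alpha c_j$ gives $E_S^{i}>E_S^{j}$ and hence the reverse weak inequality for every $k$, i.e. $S\cup\{i\}\succsim^{((c,E),CEA)}_{k}S\cup\{j\}$.

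The argument has no genuinely hard step; the only place where care is needed is the reduction itself—recognizing that consistency lets one replace the whole vector of member comparisons by the single comparison of $E_S^{i}$ and $E_S^{j}$. With that reduction in hand, resource monotonicity handles the weak inequalities in both cases automatically, and no case analysis on the internal structure of CEA restricted to $S$ is required. This makes the present argument cleaner than the more computational proof of its $\theta=2$ counterpart (Lemma~\ref{lemma:threshold}), where the payoffs were compared by hand.
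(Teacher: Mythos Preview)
Your proof is correct and rests on the same underlying idea as the paper's: comparing what is left for the members of $S$ after the newcomer is paid. The paper phrases this in ``subsidy/transfer'' language---agent $i$ absorbs a subsidy of $(1-\alpha)c_i$ while agent $j$ either transfers $\alpha c_j-\nu$ or absorbs a subsidy of $\nu-\alpha c_j$---and then asserts, somewhat informally, that the smaller subsidy (equivalently, the larger residual for $S$) is preferred by every member of $S$. Your version makes that last step fully rigorous by invoking consistency to collapse both scenarios to the fixed reduced problem $(c_S,\cdot)$ and then applying resource monotonicity on $c_S$. The computation $E_S^{j}-E_S^{i}=(1-\alpha)c_i+\alpha c_j-\nu$ is exactly the paper's subsidy comparison rewritten, so the threshold emerges the same way. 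The gain from your route is that no case split on whether $\nu\lessgtr\alpha c_j$ is needed and no appeal to the internal structure of CEA on $S$ is made; the paper's narrative is perhaps more intuitive but leaves the ``each agent in $S$ weakly prefers'' step to the reader.
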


\begin{proof}[\textbf{Proof}]Let $N$, $(c,E)$, $\succsim^{((c,E),CEA)}$, $S\subsetneq N$ and $i,j\in N\setminus S$ be as specified in the lemma. Given that $(c,E)\in {\cal P}^{N}$, we have $E_{(S\cup\{i\})} = \alpha c^{(S\cup\{i\})}$ and $E_{(S\cup\{j\})} = \alpha c^{(S\cup\{j\})}$. Since agent $i$ in coalition $S\cup\{i\}$ receives $c_i$ but only contributes $\alpha c_i$, she receives a subsidy of $ (1-\alpha)c_i> 0$. Agent $j$ in coalition $S\cup\{j\}$ receives $\lambda_{E_{(S\cup\{j\})}}< c_j$ and contributes $\alpha c_j$. Thus, if $\lambda_{E_{(S\cup\{j\})}}\leq \alpha c_j$,\footnote{Hence, $\lambda_{E_{(S\cup\{j\})}}< (1-\alpha)c_i+\alpha c_j$ and the premise of Case~(i) is satisfied.} agent $j$ weakly transfers $\alpha c_j - \lambda_{E_{(S\cup\{j\})}}\geq 0$, which implies that each agent in $S$ weakly prefers $S\cup\{j\}$ to $S\cup\{i\}$. Otherwise, if $\lambda_{E_{(S\cup\{j\})}}> \alpha c_j$, agent $j$ receives a subsidy of $\lambda_{E_{(S\cup\{j\})}}-\alpha c_j>0$. Hence, we distinguish two cases:\medskip

\noindent (i) Agent $j$ weakly transfers $\alpha c_j - \lambda_{E_{(S\cup\{j\})}}\geq 0$ or receives a subsidy of $\lambda_{E_{(S\cup\{j\})}}-\alpha c_j>0$ that is weakly smaller than the subsidy $(1-\alpha)c_i$ of agent $i$. Thus, $\lambda_{E_{(S\cup\{j\})}}-\alpha c_j \leq (1-\alpha)c_i$, which is equivalent to $\lambda_{E_{(S\cup\{j\})}}\leq (1-\alpha)c_i +\alpha c_j$. In this case, for each $k\in S$, $S\cup\{j\}\succsim^{((c,E),CEA)}_{k} S\cup\{i\}$.\medskip

\noindent (ii) Agent $i$ receives a  smaller subsidy than agent $j$.

Then, $(1-\alpha)c_i< \lambda_{E_{(S\cup\{j\})}}-\alpha c_j$, which is equivalent to $\lambda_{E_{(S\cup\{j\})}}> (1-\alpha)c_i +\alpha c_j$. In this case, for each $k\in S$, $S\cup\{i\}\succsim^{((c,E),CEA)}_{k} S\cup\{j\}$.\medskip
\end{proof}

Let $S\subsetneq N$. Define $L\equiv\{i'\in N\setminus S : CEA_{i'}(c_{(S\cup\{i'\})}, E_{(S\cup\{i'\})})=c_{i'}\}$ and $R\equiv\{j'\in N\setminus S: CEA_{j'}(c_{(S\cup\{j'\})}, E_{(S\cup\{j'\})})=\lambda_{E_{(S\cup\{j'\})}}\neq c_{j'}\}$. Consider $i\equiv \min L$ and $j\equiv \max R$. Lemmas~\ref{lemma:thetaalpha-very-large} and \ref{lemma:thetaalpha-very-small} together imply that agents $i$ and $j$ are potential ``stable partners'' for coalition $S$. If $R=\emptyset$, then Lemma~\ref{lemma:thetaalpha-very-large} implies that agent $i$ is added to coalition $S$. Similarly, if $L=\emptyset$, then Lemma~\ref{lemma:thetaalpha-very-small} implies that agent $j$ is added to coalition $S$. Finally, if $L\neq\emptyset\neq R$, Lemma~\ref{lemma:thresholdtheta}, depending on Cases (i) or (ii), together with Lemmas~\ref{lemma:thetaalpha-very-large} and \ref{lemma:thetaalpha-very-small}, determines which of the two agents, $i$ or $j$, is added to coalition $S$.

\section{Proof of Theorem~\ref{theorem:CEAtheta}}\label{appendix:proofTheorem7}

Recall that $N=\{1,\dots, n\}$ and $c_1\leq c_2\leq \cdots\leq c_n$. For each $S\subseteq N$, we denote the CEA parameter associated with $(c_S,E_S)$ by $\lambda_{E_S}$, i.e., for each $i\in S$, $CEA_{i}(c_S,E_S)=\min\{c_i,\lambda_{E_S}\}$, where $\lambda_{E_S}$ is chosen so that $\sum_{j\in S}\min\{c_j,\lambda_{E_S}\}= E_S$.\medskip

We introduce some notation before the proof. For $S\subseteq N$ and $CEA(c_S, E_S)$, we denote the set of agents receiving an over-proportional or proportional payoff in coalition $S$, also called the \textit{set of \textbf{r}ecipients}, by $R_{S}\equiv\{i\in S: CEA_i (c_S, E_S)\geq \alpha c_i\}$. Moreover, for each agent $i\in R_{S}$, we denote the subsidy she receives in coalition $S$ as $r_i^{S}=CEA_i (c_S, E_S)-\alpha c_i\geq 0$. Similarly, we denote the set of agents receiving an under-proportional payoff in coalition $S$, also called the \textit{set of \textbf{t}ransfer agents}, by $T_{S}\equiv\{i\in S: CEA_i (c_S, E_S)< \alpha c_i\}$. Furthermore, for each agent $i\in T_S$, we denote the transfer she makes to coalition $S$ as $t_i^{S}=\alpha c_i-CEA_i (c_S, E_S)>0$.\smallskip

Note that for any coalition $S\subseteq N$, $S\neq \emptyset$, by the definition of the CEA rule, $R_{S}\neq \emptyset$ and [$T_{S}=\emptyset$ if and only if for all $i,j\in S$, $c_i=c_j$].

\begin{proof}[\textbf{Proof of Theorem~\ref{theorem:CEAtheta}}] Let $N\in{\cal N}$ such that $|N|> \theta$ and $(c,E)\in{\cal P}_{\theta}^{N}$. Consider the coalition formation problem with agent set $N$ induced by $((c,E),CEA)$. We show that the partition $\pi=\{S_1,\dots,S_l\}$ obtained by the $\theta$-CEA algorithm is stable.\medskip

First, we show that the coalitions obtained by the $\theta$-CEA algorithm cannot be blocked by coalitions of size smaller than or equal to $\theta$. \medskip

\noindent\textbf{Step~$\bm{1}$.} Recall that $N_1:=N$, $|N_1|>\theta$. We show that $S_1$
cannot be blocked by a coalition of size smaller than or equal to $\theta$. Note that since $(c,E)\in \mathcal{P}_{\theta}^{N}$, we have that for each $S\subsetneq N$ with $|S|<\theta$ and each $k\in S_1\cap S$, $$S_{1}\succsim^{((c,E),CEA)}_{k}S.$$

Next, assume, by contradiction, that $S_1$ can be blocked by a coalition of size $\theta$, i.e., there exists a set $S\subsetneq N$ with $|S|=\theta$ such that $S\cap S_1\neq \emptyset$ and for each $k\in S\cap S_1$,
 $$S\succ^{((c,E),CEA)}_{k} S_1.$$
 Note that no agent receiving her claim in $S_1$ can be in $S$. Hence, $$\lambda_{E_S}>\lambda_{E_{S_1}}.$$

Let $i^*$ be the agent with the highest label in $S_1$ receiving $c_i$, i.e., $i^{*}\equiv\max \{i\in S_1 : CEA_{i}(c_{S_1}, E_{S_1})=c_{i}\}$ and $j^*$ be the agent with the lowest label in $S_1$ receiving $\lambda_{E_{S_1}}\geq\alpha c_{j^*}$, $\lambda_{E_{S_1}}\neq c_{j^*}$, i.e., $j^*\equiv\min \{j\in S_1 : CEA_{j}(c_{S_1}, E_{S_1})=\lambda_{E_{S_1}}\geq\alpha c_{j}\mbox{ and }\lambda_{E_{S_1}}\neq c_{j}\}$. Since $R_{S_1}\neq\emptyset$, $\{i^*,j^*\}\neq\emptyset$, i.e., at least one of these agents (or both) exists.

Next, we show that if both agents $i^*$ and $j^*$ exist, then there is no agent in $S_1$ whose label is in between that of these two agents. Assume, by contradiction, that $i^*$ and $j^*$ exist and that there is an agent $l\in S_1$ such that $i^*<l< j^*$; thus, $c_{i^*}\leq c_l\leq c_{j^*}$. Since $i^{*}<l$, $CEA_{l}(c_{S_1}, E_{S_1})=\lambda_{E_{S_1}}\neq c_l$. Since $l<j^*$, $CEA_{l}(c_{S_1}, E_{S_1})=\lambda_{E_{S_1}}< \alpha c_l$. Since $c_l\leq c_{j^*}$, we have $\lambda_{E_{S_1}}< \alpha c_l\leq  \alpha c_{j^*}                                                                                                                                                                                                                               $, contradicting $\lambda_{E_{S_1}}\geq  \alpha c_{j^*}$.

\begin{figure}[ht!]
	\centering
\begin{tikzpicture}[scale=1]
	\tikzset{lw/.style = {line width=1pt}}
	\draw [-, line width=0.45mm](0,0) -- (6,0);
	\foreach \x in {0}
	\draw[-, line width=0.45mm] (\x cm,3pt) -- (\x cm,-3pt);
	\foreach \x in {0.5}
	\draw[-, line width=0.45mm] (\x cm,3pt) -- (\x cm,-3pt);
	\foreach \x in {1}
	\draw[-, line width=0.45mm] (\x cm,3pt) -- (\x cm,-3pt);
	\foreach \x in {3}
	\draw[-] (\x cm,1.5pt) -- (\x cm,-1.5pt);
	\foreach \x in {4}
	\draw[-] (\x cm,1.5pt) -- (\x cm,-1.5pt);
	\foreach \x in {5}
	\draw[-, line width=0.45mm](\x cm,3pt) -- (\x cm,-3pt);
	\foreach \x in {5.5}
	\draw[-, line width=0.45mm](\x cm,3pt) -- (\x cm,-3pt);
	\foreach \x in {6}
	\draw[-, line width=0.45mm] (\x cm,3pt) -- (\x cm,-3pt);
	\draw (1,0) node[below=3pt] {$c_{i^*}$} node[above=3pt] {$   $};
	\draw (3,0) node[below=3pt] {$\alpha c_{j^*} $} node[above=3pt] {$   $};
	\draw (4,0) node[below=3pt] {$\lambda_{E_{S_{1}}}  $} node[above=3pt] {$   $};
	\draw (5,0) node[below=3pt] {$c_{j^*}  $} node[above=3pt] {$   $};
	\end{tikzpicture}%
\caption{Both agents, $i^*$ and $j^*$, exist.}
\end{figure}
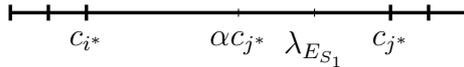

We divide the rest of the proof into four parts.\bigskip

\noindent\textbf{Part~1.} \emph{We show that all agents in $S\setminus S_1\neq\emptyset$ are recipients. That is, for each $l\in S\setminus S_1$, $l\in R_{S}$. }\medskip

Recall that $S\neq S_1$ and $|S|=|S_1|=\theta$. Hence, $S\setminus S_1\neq\emptyset$.\smallskip

We distinguish two cases:\smallskip

\noindent\textit{Case~$1$.} Agent $j^*$ exists, i.e., $j^*$ is the agent with the lowest label in $S_1$ receiving $\lambda_{E_{S_1}}\geq\alpha c_{j^*}$ ($\lambda_{E_{S_1}}\neq c_{j^*}$).\smallskip

Let $l\in S\setminus S_1$. Note first that $l<j^*$ because, by construction of coalition $S_1$ by the $\theta$-CEA set algorithm, each agent with a higher label than $j^*$ is necessarily in $S_1$.\footnote{Note that coalition $S_1$ constructed by the algorithm is such that each agent with a higher label than $j^*$ is either a transfer agent or a recipient with a smaller subsidy, which means that they are chosen before $j^*$ in the algorithm.} Therefore, $c_l\leq c_{j^*}$. Recall that $\lambda_{E_S}>\lambda_{E_{S_1}}\geq \alpha c_{j^*}$. Hence, $$\lambda_{E_S}>\lambda_{E_{S_1}}\geq \alpha c_{j^*} \geq \alpha c_l,$$ which implies $\lambda_{E_S}>\alpha c_l$. Thus, $l\in R_S$.\medskip

\noindent\textit{Case~$2$.} Agent $j^*$ does not exist and $i^*$ is the agent with the highest label in $S_1$ receiving~$c_{i^*}$.\smallskip

Thus, $T_{S_1}\neq\emptyset$. Denote by $j^{t}$ the transfer agent in $S_1$ with the lowest label, i.e., $j^{t}\equiv\min\{j\in T_{S_1}:\lambda_{E_{S_1}}<\alpha c_{j}\}$. Since agent $j^*$ does not exist, agents $i^*$ and $j^{t}$ are adjacent agents in $S_1$, i.e., there exists no $l\in S_1$ such that $i^*<l< j^{t}$.\medskip

\begin{figure}[ht!]
	\centering
\begin{tikzpicture}[scale=1]
	\tikzset{lw/.style = {line width=1pt}}
		\draw [-, line width=0.45mm](0,0) -- (6,0);
	\foreach \x in {0}
	\draw[-, line width=0.45mm] (\x cm,3pt) -- (\x cm,-3pt);
	\foreach \x in {0.5}
	\draw[-, line width=0.45mm] (\x cm,3pt) -- (\x cm,-3pt);
	\foreach \x in {1}
	\draw[-, line width=0.45mm] (\x cm,3pt) -- (\x cm,-3pt);
	\foreach \x in {3}
	\draw[-] (\x cm,1.5pt) -- (\x cm,-1.5pt);
	\foreach \x in {4}
	\draw[-] (\x cm,1.5pt) -- (\x cm,-1.5pt);
	\foreach \x in {5}
	\draw[-, line width=0.45mm](\x cm,3pt) -- (\x cm,-3pt);
	\foreach \x in {5.5}
	\draw[-, line width=0.45mm](\x cm,3pt) -- (\x cm,-3pt);
	\foreach \x in {6}
	\draw[-, line width=0.45mm] (\x cm,3pt) -- (\x cm,-3pt);
		\draw (1,0) node[below=3pt] {$c_{i^*}$} node[above=3pt] {$   $};
	\draw (3,0) node[below=3pt] {$\lambda_{E_{S_{1}}} $} node[above=3pt] {$   $};
	\draw (4,0) node[below=3pt] {$\alpha c_{j^{t}}  $} node[above=3pt] {$   $};
	\draw (5,0) node[below=3pt] {$c_{j^{t}}  $} node[above=3pt] {$   $};
	\end{tikzpicture}%
\caption{Agent $j^*$ does not exist and agent $j^{t}$ is a transfer agent with the lowest label.}
\end{figure}
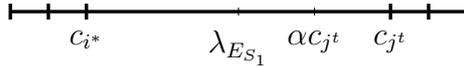

First, we show that in the construction of coalition $S_1$ by the $\theta$-CEA set algorithm, agent $j^t$ is not the last agent added. Assume, by contradiction, that at Step~$\theta - 1$ (i) of the $\theta$-CEA set algorithm, agent $j^t$ has been chosen over agent $i^*+1$. Recall that $\alpha c_{j^t}>\lambda_{E_{S_1}}$ and that therefore, agent $j^t$ is a transfer agent in coalition $S_1$; hence, $\lambda_{E_{S_1}}>\lambda_{E_{S_1\setminus\{j^t\}}}$.\smallskip

\noindent \textit{Case $2.1$.} Assume that at Step~$\theta - 2$ of the $\theta$-CEA set algorithm, agent $i^*$ has been chosen over agent $j^t$. Since at Step~$\theta - 2$, the set of agents $S_1\setminus\{i^*,j^t\}$ decides to add agent $i^*$ instead of agent $j^t$, we have that the set $S_1\setminus\{j^t\}$ obtained when agent $i^*$ is added is better than the set $S_1\setminus\{i^*\}$ obtained when agent $j^t$ is added. Thus, $\lambda_{E_{S_1\setminus\{j^t\}}}>\lambda_{E_{S_1\setminus\{i^*\}}}$. Therefore, $$\alpha c_{j^t}>\lambda_{E_{S_1}}>\lambda_{E_{S_1\setminus\{j^t\}}}> \lambda_{E_{S_1\setminus\{i^*\}}}.$$
Hence, since $\lambda_{E_{S_1\setminus\{i^*\}}}< \alpha c_{j^t}$, $\lambda_{E_{S_1\setminus\{i^*\}}}\leq (1-\alpha)c_{i^*}+\alpha c_{j^t}$, i.e., the inequality for Case~(i) that indicates that agent $j^t$ should have been chosen over agent $i^*$ at Step~$\theta -2$ of the $\theta$-CEA set algorithm is satisfied; a contradiction.\smallskip

\noindent \textit{Case $2.2$.} Assume that at Step~$\theta-2$ of the $\theta$-CEA set algorithm, agent $j^{t}+1$ has been chosen over $i^*+1$. Then, $c_{j^t+1}\geq c_{j^t}$, together with previous inequalities, $\alpha c_{j^t}>\lambda_{E_{S_1}}$ and $\lambda_{E_{S_1}}>\lambda_{E_{S_1\setminus\{j^t\}}}$, implies $\alpha c_{j^t+1}\geq \alpha c_{j^t}> \lambda_{E_{S_1}}>\lambda_{E_{S_1\setminus\{j^t\}}}$. Thus, $\alpha c_{j^t +1}>\lambda_{E_{S_1\setminus\{j^t\}}}$ and therefore agent $j^t +1$ is a transfer agent in $S_1\setminus\{j^t\}$. Hence, \begin{equation}\label{smallk}
\lambda_{E_{S_1\setminus\{j^t\}}}>\lambda_{E_{S_1\setminus\{j^t,j^t+1\}}}.
\end{equation}

More generally, assume that at Step $\theta-k$ ($k> 2$), agent $j^t +(k-1)$ has been chosen over agent $i^*+1$. Then, $c_{j^t + (k-1)}\geq c_{j^t+(k-2)}$, together with previous inequalities, implies $\alpha c_{j^t+(k-1)}\geq \alpha c_{j^t+(k-2)}> \lambda_{E_{S_1\setminus\{j^t,\ldots,j^t+(k-3)\}}}> \lambda_{E_{S_1\setminus\{j^t,\ldots,j^t+(k-2)\}}}$. Thus,
\begin{equation}\label{eq:transfer}
\alpha c_{j^t+(k-1)}>\lambda_{E_{S_1\setminus\{j^t,\ldots,j^t+(k-2)\}}}
\end{equation}
and therefore agent $j^t+(k-1)$ is a transfer agent in ${S_1\setminus\{j^t,\ldots,j^t+(k-2)\}}$. Hence,
\begin{equation}\label{eq:lambda}
\lambda_{E_{S_1\setminus\{j^t,\ldots,j^t+(k-2)\}}}> \lambda_{E_{S_1\setminus\{j^t,\ldots,j^t+(k-1)\}}}.
\end{equation}

Since agent $i^*$ has been added to $S_1$ before agent $j^t$, there is a step of the $\theta$-CEA set algorithm, Step~$\theta - \ell$ ($\ell\geq 3$), at which agent $i^*$ has been chosen over agent $j^t +(\ell-2)$. Let $\bar{S}=S_1\setminus\{j^t,\ldots,j^t+(\ell-2),i^*\}$ be the coalition obtained at Step~$\theta - \ell-1$ of the $\theta$-CEA set algorithm. Since at Step~$\theta - \ell$, the set of agents $\bar{S}$ decides to add agent $i^*$ instead of agent $j^t +(\ell-2)$, we have that the set $\bar{S}\cup\{i^*\}$ obtained when agent $i^*$ is added is better than the set $\bar{S}\cup\{j^t +(\ell-2)\}$ obtained when agent $j^t +(\ell-2)$ is added. Thus, $\lambda_{E_{\bar{S}\cup\{i^*\}}}>\lambda_{E_{\bar{S}\cup\{j^t +(\ell-2)\}}}$, i.e.,
\begin{equation}\label{eq:ell}
\lambda_{E_{S_1\setminus\{j^t,\ldots,j^t+(\ell-2)\}}}> \lambda_{E_{S_1\setminus\{j^t,\ldots,j^t+(\ell-3),i^*\}}}.
\end{equation}

Note that  at Step~$\theta-\ell+1$, agent $j^t +(\ell-2)$ has been chosen over agent $i^*+1$. Thus, by inequalities (\ref{eq:transfer}), (\ref{eq:lambda}), and (\ref{eq:ell}),
$$\alpha c_{j^t +(\ell-2)}>\lambda_{E_{S_1\setminus\{j^t,\ldots,j^t+(\ell-3)\}}}> \lambda_{E_{S_1\setminus\{j^t,\ldots,j^t+(\ell-2)\}}}> \lambda_{E_{S_1\setminus\{j^t,\ldots,j^t+(\ell-3),i^*\}}}.$$

Hence, since $\lambda_{E_{S_1\setminus\{j^t,\ldots,j^t+(\ell-3),i^*\}}}<\alpha c_{j^t +(\ell-2)}$, $\lambda_{E_{S_1\setminus\{j^t,\ldots,j^t+(\ell-3),i^*\}}}\leq (1-\alpha)c_{i^*}+\alpha c_{j^t +(\ell-2)}$, i.e., the inequality for Case~(i) that indicates that agent $j^t +(\ell-2)$ should have been chosen over agent $i^*$ at Step~$\theta -\ell$ of the $\theta$-CEA set algorithm is satisfied; a contradiction. We conclude that in the construction of coalition $S_1$ by the $\theta$-CEA set algorithm, agent $j^t$ is not the last agent added.

Therefore, agent $i^*$ is the last agent added to coalition $S_1$, and by Case~(ii), $$(1-\alpha)c_{i^*}< \lambda_{E_{((S_{1}\setminus\{i^*\})\cup \{j^{t}-1\})}}-\alpha c_{j^{t}-1},$$ which means that if agent $i^*$ is added to coalition $(S_{1}\setminus\{i^*\})$, even if receiving her claim, she would receive a smaller subsidy than if agent $j^{t}-1$ is added to coalition $(S_{1}\setminus\{i^*\})$. This, together with the definition of the CEA rule, implies $$\lambda_{E_{S_{1}}}\geq\lambda_{E_{((S_{1}\setminus\{i^*\})\cup \{j^{t}-1\})}}.$$ Furthermore, since $(1-\alpha)c_{i^*}>0$, we have $\lambda_{E_{((S_{1}\setminus\{i^*\})\cup \{j^{t}-1\})}}-\alpha c_{j^{t}-1}>0$. Thus, $$\lambda_{E_{((S_{1}\setminus\{i^*\})\cup \{j^{t}-1\})}}>\alpha c_{j^{t}-1}.$$

Let $l\in S\setminus S_1$. Note first that $l \leq j^{t}-1$ because, by construction of coalition $S_1$ by the $\theta$-CEA set algorithm, each agent with a higher label than $j^{t}$ is necessarily in $S_1$. Therefore, $\alpha c_{l}\leq \alpha c_{j^{t}-1}$. Recall that $\lambda_{E_S}>\lambda_{E_{S_1}}\geq\lambda_{E_{(S_1\setminus\{i^*\})\cup \{j^{t}-1\}}}>\alpha c_{j^{t}-1}$. Hence,
$$\lambda_{E_S}>\lambda_{E_{S_1}}\geq\lambda_{E_{(S_1\setminus\{i^*\})\cup \{j^{t}-1\}}}>\alpha c_{j^{t}-1}\geq \alpha c_l,$$ which implies $\lambda_{E_S}>\alpha c_l$. Thus, $l\in R_S$.
\medskip

As a result of Part~1, all new agents in coalition $S$ are recipients, i.e., $S\setminus S_1\subseteq R_S$.\bigskip

\noindent\textbf{Part~2.} \emph{We show that each agent in $S\cap S_1$ who is a recipient in $S_1$ is also a recipient in $S$. Furthermore, the subsidy a recipient in $S\cap S_1$ receives in $S$ is larger than the subsidy she receives in $S_1$. That is, for each $l\in S \cap R_{S_1}$, we have $l\in R_{S}$ and $r_{l}^{S}> r_{l}^{S_1}$.}\smallskip

Let $l\in S \cap R_{S_1}$. Since $l\in R_{S_1}$, we have $\lambda_{E_{S_1}}\geq\alpha c_l$. Hence, $\lambda_{E_S}>\lambda_{E_{S_1}}$ implies $\lambda_{E_S} > \alpha c_l\geq 0$ and $l\in R_{S}$. Recall that since $l\in S\cap S_1$, $CEA_l(c_{S_1},E_{S_1})=\lambda_{E_{S_1}}\neq c_l$. Now, we distinguish two cases:\medskip

\noindent (i) If $CEA_l(c_{S},E_{S})= c_l$, then $r_{l}^{S}=CEA_l(c_S, E_S)-\alpha c_l=(1-\alpha) c_l>\lambda_{E_{S_1}}-\alpha c_l=CEA_l(c_{S_1}, E_{S_1})-\alpha c_l=r_{l}^{S_1}.$\medskip

\noindent (ii) If $CEA_l(c_{S},E_{S})= \lambda_{E_S}$, then $r_{l}^{S}=CEA_l(c_S, E_S)-\alpha c_l=\lambda_{E_S}-\alpha c_l>\lambda_{E_{S_1}}-\alpha c_l=CEA_l(c_{S_1}, E_{S_1})-\alpha c_l=r_{l}^{S_1}$.\medskip

Hence, for each $l\in S \cap R_{S_1}$,
\begin{equation}\label{eq4theorem}
r_{l}^{S}>r_{l}^{S_1}.
\end{equation}

As a result of Part~2, all recipients of $S_1$ who are in $S$ are recipients in $S$, i.e., $S\cap R_{S_1}\subseteq R_S$. Moreover, by Part 1, we know that $S\setminus S_1\subseteq R_S$. Thus, since $|S|=|S_1|$, we conclude that
\begin{equation}\label{eq2theorem}
|R_S|\geq |R_{S_1}|.
\end{equation}

\noindent\textbf{Part~3.} \emph{We show that each agent in $S\cap S_1$ who is a transfer agent in $S_1$ is either a transfer agent in $S$ transferring less in $S$ than in $S_1$ or a recipient in $S$. That is, for each $l\in S\cap T_{S_1}$, we have  either $[l\in T_S$ and $t_{l}^{S}< t_{l}^{S_1}]$ or $l\in R_{S}$. Furthermore, $T_S\subseteq S\cap T_{S_1}$.}\smallskip

Let $l\in S \cap T_{S_1}$. Since $l\in T_{S_1}$, we have $\lambda_{E_{S_1}}<\alpha c_l$. Hence, given that $\lambda_{E_S}>\lambda_{E_{S_1}}$, we distinguish two cases:\medskip

\noindent (i) If $\alpha c_l>\lambda_{E_S}$, then agent $l$ continues being a transfer agent, i.e., $l\in T_S$. Recall that since $l\in T_{S_1}$, $CEA_l(c_{S_1},E_{S_1})=\lambda_{E_{S_1}}\neq c_l$. Therefore,
\begin{equation}\label{eq3theorem}
t_{l}^{S}=\alpha c_l-CEA_l (c_S, E_S)=\alpha c_l-\lambda_{E_S}<\alpha c_l-\lambda_{E_{S_1}}=\alpha c_l-CEA_l (c_{S_1}, E_{S_1})= t_{l}^{S_1}.
\end{equation}
(ii) If $\alpha c_l\leq\lambda_{E_S}$, then agent $l$ becomes a recipient, i.e., $l\in R_{S}$.\medskip

By Part~1, $S\setminus S_1\subseteq R_S$. By Part~2, $S\cap R_{S_1}\subseteq R_S$. Hence, $T_S\subseteq S\cap T_{S_1}$. This, together with inequality~(\ref{eq3theorem}), implies
\begin{equation}\label{eq6theorem}
\sum_{l\in T_S}t_{l}^{S}\leq\sum_{l'\in T_{S_1}}t_{l'}^{S_1}.
\end{equation}
If $T_S\neq\emptyset$, then inequality~(\ref{eq6theorem}) is strict.\bigskip

\noindent\textbf{Part~4.} \emph{We show that for each agent in $S\setminus S_1$, the subsidy she receives in $S$ is larger than or equal to the maximal subsidy any agent in $S_1$ receives. That is, for each $l\in S\setminus S_1$, $r_l ^{S}\geq \max \{r_{l'}^{S_1}\}_{l'\in R_{S_{1}}}$.}\smallskip

Note first that, by construction of coalition $S_1$ by the $\theta$-CEA set algorithm, the maximal subsidy any recipient in $S_1$ receives is either $(1-\alpha)c_{i^*}$ or $\lambda_{E_{S_{1}}}-\alpha c_{j^*}$.\smallskip

Consider now the agent in $S\setminus S_{1}$ with the lowest subsidy, i.e., let $l^*\in  S\setminus S_{1}$ be such that for each $l\in  S\setminus S_{1}$, $r_{l^*}^{S}\leq r_{l}^{S}$. We show that $r_{l^*}^{S}$ is larger than or equal to $\max \{r_{l'}^{S_1}\}_{l'\in R_{S_{1}}}$. Recall that no agent receiving her claim in $S_1$ can be in $S$. In particular, neither agent $i^*$ nor any agent with a smaller label than $i^*$ can be in $S$. Furthermore, by construction of coalition $S_1$ by the $\theta$-CEA set algorithm, any agent with a higher label than $j^*$ is in $S_1$. Thus, $i^*<l^{*}<j^*$ and $c_{i^*}\leq c_{l^*}\leq c_{j^*}$.\smallskip

We distinguish three cases:\medskip

\noindent\textit{Case~$1$.} Agent $j^*$ does not exist.\smallskip

Then, agent $i^*$ has been chosen in the last step of the algorithm over agent $j^t-1$ (recall that $j^t$ is the transfer agent in $S_1$ with the lowest label and  $c_{i^*}\leq c_{j^t}$.) That is, $\max \{r_{l'}^{S_1}\}_{l'\in R_{S_{1}}}=(1-\alpha)c_{i^*}$ and by Step~$\theta-1$ (ii) of the $\theta$-CEA set algorithm, \begin{equation}
\label{eq:agenti}
(1-\alpha)c_{i^*}< \lambda_{E_{((S_1\setminus\{i^*\})\cup\{j^t-1\})}}-\alpha c_{j^t-1}.
\end{equation}
By construction of coalition $S_1$ by the $\theta$-CEA set algorithm, we have $\lambda_{E_{S_1}}\geq \lambda_{E_{((S_1\setminus\{i^*\})\cup\{j^t-1\})}}$ (this follows from Lemma~\ref{lemma:thresholdtheta} in  Appendix~\ref{appendix:conditionalgorithm}). We distinguish two cases:\smallskip

\noindent (i) If $CEA_{l^*} (c_S, E_S)=c_{l^*}$, then, since $c_{l^*}\geq c_{i^*}$ and $(1-\alpha)>0$,

$$r_{l^*}^{S}=(1-\alpha)c_{l^*}\geq (1-\alpha)c_{i^*}=\max \{r_{l'}^{S_1}\}_{l'\in R_{S_{1}}}.$$

\noindent (ii) If $CEA_{l^*} (c_S, E_S)=\lambda_{E_S}< c_{l^*}$, then, since $\lambda_{E_S}>\lambda_{E_{S_1}}\geq \lambda_{E_{((S_1\setminus\{i^*\})\cup\{j^t -1\})}}$ and $c_{l^*}\leq c_{{j^t} -1}$,
$$\lambda_{E_S} - \alpha c_{l^*}\geq \lambda_{E_S} -\alpha c_{{j^t} -1}> \lambda_{E_{S_1}} -\alpha c_{{j^t} -1}\geq \lambda_{E_{((S_1\setminus\{i^*\})\cup\{j^t -1\})}}-\alpha c_{j^t -1}.$$
This, together with inequality~(\ref{eq:agenti}), implies
$$r_{l^*}^{S}=\lambda_{E_S} - \alpha c_{l^*}>\lambda_{E_{((S_1\setminus\{i^*\})\cup\{j^t -1\})}}-\alpha c_{j^t -1}> (1-\alpha)c_{i^*}=\max \{r_{l'}^{S_1}\}_{l'\in R_{S_{1}}}.$$

\noindent\textit{Case~$2$.} Agent $i^*$ does not exist.\smallskip

Then, agent $j^*$ has been chosen in the last step of the algorithm over agent~$1$. That is, $\max \{r_{l'}^{S_1}\}_{l'\in R_{S_{1}}}=\lambda_{E_{S_1}}-\alpha c_{j^*}$ and by Step~$\theta-1$ (i) of the $\theta$-CEA set algorithm,
$$\lambda_{E_{S_1}}-\alpha c_{j^*}\leq (1-\alpha) c_{1}.$$
We distinguish two cases:\smallskip

\noindent (i) If $CEA_{l^*} (c_S, E_S)=c_{l^*}$, then, since $\lambda_{E_{S_1}}-\alpha c_{j^*}\leq (1-\alpha) c_{1},$ and $c_{1}\leq c_{l^*}$, $$r_{l^*}^{S}=(1-\alpha) c_{l^*}\geq (1-\alpha) c_{1}\geq \lambda_{E_{S_1}}-\alpha c_{j^*}=\max \{r_{l'}^{S_1}\}_{l'\in R_{S_{1}}}.$$

\noindent (ii) If $CEA_{l^*} (c_S, E_S)=\lambda_{E_S}< c_{l^*}$, then, since $c_{l^*}\leq c_{j^*}$ and $\lambda_{E_S}>\lambda_{E_{S_1}}$,
$$r_{l^*}^{S}=\lambda_{E_S}-\alpha c_{l^*}\geq\lambda_{E_S}-\alpha c_{j^*} > \lambda_{E_{S_1}}-\alpha c_{j^*}=\max \{r_{l'}^{S_1}\}_{l'\in R_{S_{1}}}.$$

\noindent\textit{Case~$3$.} Both agents $i^*$ and $j^*$ exist.\smallskip

Then, $\max \{r_{l'}^{S_1}\}_{l'\in R_{S_{1}}}=\max\{(1-\alpha)c_{i^*},\lambda_{E_{S_{1}}}-\alpha c_{j^*}\}.$\smallskip

Note that in Cases~$1$ and $2$ it is clear who the last agent added to coalition $S_1$ is and therefore, what the highest subsidy in set $S_1$ is. However, when both agents $i^*$ and $j^*$ exist, we need to distinguish cases depending on the last agent added to coalition $S_1$, who after all is the agent with the highest subsidy in $S_1$. We distinguish the following cases:\smallskip

\noindent (i) $CEA_{l^*} (c_S, E_S)=c_{l^*}$.\smallskip

\noindent (i.1) If agent $i^*$ is the last agent added to coalition $S_1$, then $\max \{r_{l'}^{S_1}\}_{l'\in R_{S_{1}}}=(1-\alpha)c_{i^*}$ and we apply Case~1~(i).\smallskip

\noindent (i.2) If agent $j^*$ is the last agent added to coalition $S_1$, then $\max \{r_{l'}^{S_1}\}_{l'\in R_{S_{1}}}=\lambda_{E_{S_{1}}}-\alpha c_{j^*}$ and we apply Case~2~(i) by replacing agent $1$ with agent $i^* +1$.\smallskip

\noindent (ii) $CEA_{l^*} (c_S, E_S)=\lambda_{E_S}< c_{l^*}$.
\smallskip

\noindent (ii.1) If agent $i^*$ is the last agent added to coalition $S_1$, then $\max \{r_{l'}^{S_1}\}_{l'\in R_{S_{1}}}=(1-\alpha)c_{i^*}$ and we apply Case~1~(ii) by replacing agent $j^t -1$ with agent $j^*-1$.\smallskip

\noindent (ii.2) If agent $j^*$ is the last agent added to coalition $S_1$, then $\max \{r_{l'}^{S_1}\}_{l'\in R_{S_{1}}}=\lambda_{E_{S_{1}}}-\alpha c_{j^*}$ and we apply Case~2~(ii).\medskip\pagebreak

Hence, we can conclude that $r_{l^*}^{S}\geq\max \{r_{l'}^{S_1}\}_{l'\in R_{S_{1}}}$. Furthermore, since for each $l\in S\setminus S_{1}$, $r_{l^*}^{S}\leq r_{l}^{S}$, we have that for each $l\in S\setminus S_{1}$,
 \begin{equation*}
r_{l}^{S}\geq\max \{r_{l'}^{S_1}\}_{l'\in R_{S_{1}}}.
\end{equation*}

Part 4, together with inequalities~(\ref{eq4theorem}) and (\ref{eq2theorem}), implies that the total subsidy in coalition $S$ is larger than or equal to the total subsidy in coalition $S_1$. That is,

\begin{equation}\label{eq5theorem}
\sum_{k\in R_S}r_{k}^{S}\geq\sum_{k'\in R_{S_1}}r_{k'}^{S_1}.
\end{equation}
If $S\cap R_{S_1}\neq \emptyset$, then by inequality~(\ref{eq4theorem}), inequality~(\ref{eq5theorem}) is strict. \medskip

Finally, inequalities~(\ref{eq6theorem}) and (\ref{eq5theorem}), together with the fact that in coalition $S_1$ the total transfer has to be equal to the total subsidy, imply that the total transfer in coalition $S$ is smaller than or equal to the total subsidy in coalition $S$. That is,
 $$\sum_{k\in R_S}r_{k}^{S}\stackrel{(\ref{eq5theorem})}{\geq}\sum_{k'\in R_{S_1}}r_{k'}^{S_1}=\sum_{l'\in T_{S_1}}t_{l'}^{S_1}\stackrel{(\ref{eq6theorem})}{\geq}\sum_{l\in T_S}t_{l}^{S}.$$

Note that if [$S\cap R_{S_1}\neq \emptyset$ or $T_S\neq \emptyset$], then [inequality~(\ref{eq5theorem}) or inequality~(\ref{eq6theorem}) is strict, respectively]. Thus, since the total transfer has to be equal to the total subsidy, a contradiction is reached.

Next, we show that it is impossible that $S\cap R_{S_1}= \emptyset$ and $T_S= \emptyset$. Assume, by contradiction, that  $S\cap R_{S_1}= \emptyset$ and $T_S= \emptyset$. On the one hand, since $S\cap R_{S_1}= \emptyset$, and given that $S\cap S_1\neq \emptyset$ ($S_1=R_{S_1}\cup T_{S_1}$), we have $S\cap T_{S_1}\neq\emptyset$. Thus, $T_{S_1}\neq \emptyset$ and therefore, $\sum_{k'\in R_{S_1}}r_{k'}^{S_1}>0$.
On the other hand, since $T_S=\emptyset$, we have $S=R_S$, and thus, $\sum_{k\in R_S}r_{k}^{S}=0$.
Hence, $$\sum_{k\in R_S}r_{k}^{S}=0<\sum_{k'\in R_{S_1}}r_{k'}^{S_1},$$ which contradicts inequality~(\ref{eq5theorem}).\medskip

In particular, there exists no coalition $T\subseteq N_1$ such that $|T|\leq \theta$, and for each $i\in S_1\cap T$, $T\succ^{((c,E),CEA)}_{i}S_1$, i.e., an agent from set $S_1$ cannot be part of a blocking coalition of size smaller than or equal to $\theta$ with agents from set $N_1$.\medskip\pagebreak

\noindent\textbf{Step~$\bm{k}$ ($\bm{k>1}$).} Recall from Step~$k-1$ that $N_k:=N\setminus\left(\cup_{j=1}^{k-1}S_{j}\right)$ and $|N_k|> \theta$. Furthermore, agents in $N_k$ are relabeled such that $N_k=\{1',\ldots,n'\}$, $c_{1'}\leq\ldots\leq c_{n'}$. Hence, by a similar reasoning than at Step~1 (with the new set of agents $N_k$), it follows that there exists no coalition $T\subseteq N_k$ such that $|T|\leq \theta$, and for each $i\in S_k\cap T$, $T\succ^{((c,E),CEA)}_{i}S_k$, i.e., an agent from set $S_k$ cannot be part of blocking coalition of size smaller than or equal to $\theta$ with agents from set $N_k$. By the previous steps, an agent from set $\bigcup_{j\in\{1,\ldots,k-1\}}S_j$ cannot be part of a blocking coalition with agents from set $S_k$. Hence, no agent from set $S_k$ can be part of a blocking coalition of size smaller than or equal to $\theta$.\medskip

After $l-1$ steps, we have shown that there is no blocking coalition $T\subsetneq N$ such that $|T|\leq \theta$.\medskip

Finally, assume, by contradiction, that $\pi$ is not stable. Then, there exists a blocking coalition $T\subseteq N$ such that for each agent $i\in T$, $T \succ^{((c,E),CEA)}_{i}\pi(i)$. In particular, $|T|> \theta$.\medskip

Then, by Proposition~\ref{prop:toptheta} (Appendix~\ref{appendix:proofTheorem6}), there is a coalition $T'\subsetneq T$ with $|T'|=\theta$ such that for each $j\in T'$, $$T'\succsim^{((c,E),CEA)}_{j}T\succ^{((c,E),CEA)}_{j}\pi(j),$$ which contradicts the fact that there is no blocking coalition of size smaller than or equal to $\theta$.\medskip

We have proven that partition $\pi$ obtained by the $\theta$-CEA algorithm is stable.\end{proof}

\section{Proof of Theorem~\ref{theorem:CELtheta}}\label{appendix:proofTheorem8}

Recall that $N=\{1,\dots, n\}$ and $c_1\leq c_2\leq \ldots\leq c_n$. For each $S\subseteq N$, we denote the CEL parameter associated with $(c_S,E_S)$ by $\lambda_{E_S}$, i.e., for each $i\in S$, $CEL_{i}(c_S,E_S)=\max\{0,c_i-\lambda_{E_S}\}$, where $\lambda_{E_S}$ is chosen so that $\sum_{j\in S}\max\{0,c_j-\lambda_{E_S}\}= E_S$.

\begin{proof}[\textbf{Proof of Theorem~\ref{theorem:CELtheta}}]                                                                                                                                                                                                                                                                                                                                                                                                                                                                                                                                                                                                                                                                                                                                                                                                                                                                                                                                                                                                                                                                                                                                                                                                                                                                                                                                                                                                                                                                                                                                                                                                                                                                                                                     Let $N\in{\cal N}$ such that $|N|>\theta$ and $(c,E)\in{\cal P}_{\theta}^{N}$. Consider the coalition formation problem with agent set $N$ induced by $((c,E),CEL)$. We show that the partition $\pi=\{S_1,\dots,S_l\}$ obtained by the $\theta$-CEL algorithm is stable.\medskip

First, we show that the coalitions obtained by the $\theta$-CEL algorithm cannot be blocked by coalitions of size smaller than or equal to $\theta$. \medskip

\noindent\textbf{Step~$\bm{1}$.} Recall that $N_1:=N$, $|N_1|>\theta$, and $S_1:=\{1,\ldots,\theta\}$. We show that $S_1$ is for each of its members at least as desirable as any other coalition of size smaller than or equal to $\theta$.\medskip

Note that since $(c,E)\in \mathcal{P}_{\theta}^{N}$, we have that for each $S\subsetneq N$ with $|S|<\theta$ and each $i\in S\cap S_1$, $$S_1 \succsim^{((c,E),CEL)}_{i} S.$$
Next, let $j\in S_1$ and $k\in N\setminus S_{1}$. Then, $k>\theta$ and $c_k\geq c_{\theta}$. We consider coalition $T=(S_{1}\setminus\{j\})\cup \{k\}$ and prove that for each $i\in S_{1}\setminus\{j\}$, $S_1\succsim^{((c,E),CEL)}_{i} T$ by showing that $$CEL_{i}(c_{S_{1}},E_{S_{1}})\geq CEL_{i}(c_{T},E_{T}).$$
If $c_k=c_j$, then $CEL_{i}(c_{S_{1}},E_{S_{1}})= CEL_{i}(c_{T},E_{T})$ follows immediately. Hence, assume that $c_k>c_j$. We consider two cases:\medskip

\noindent\textit{Case~$1$.} $j\in S_{1}$ is such that $CEL_j(c_{S_{1}}, E_{S_{1}})\neq 0$.\smallskip

\noindent Then, at problem $(c_{S_{1}}, E_{S_{1}})\in\mathcal{C}^{S_1}$, since agent $j$ contributes $\alpha c_j$ to and requests $c_j$ from coalition $S_1$, she contributes a loss of $(1-\alpha)c_j$ to coalition $S_1$.

Next, consider auxiliary problem $(c_{T}, \bar{E}_{T})\in\mathcal{C}^{T}$ at which agent $k$ also contributes a loss of $(1-\alpha)c_j$ to coalition $T$. Then, agent $k$ contributes $c_k- (1-\alpha)c_j$ to coalition $T$ and $\bar{E}_{T}=E_{S_{1}}-\alpha c_j+(c_k-c_j+\alpha c_j)=E_{S_{1}}+ (c_k-c_j)$. Note that auxiliary problem $(c_{T}, \bar{E}_{T})$ is such that $\bar{E}_{T}\neq \alpha c^{T}$. Since $c_k>c_j$, we have, by the definition of the CEL rule, that for each agent $i\in S_{1}\setminus\{j\}=T\setminus\{k\}$, $$CEL_{i}(c_{S_{1}}, E_{S_{1}})=CEL_{i}(c_{T}, \bar{E}_{T}).$$

Now, consider problem $(c_{T}, E_{T})\in \mathcal{C}^{T}$ and note that agent $k$ contributes a loss of $(1-\alpha)c_k$ to coalition $T$. Then, agent $k$ contributes $\alpha c_k$ to coalition $T$ and $E_{T}=E_{S_{1}}+\alpha (c_k- c_j)$.\medskip

Since $\alpha\in (0,1)$ and $c_k> c_j$,  $$\bar{E}_{T}=E_{S_{1}}+ (c_k-c_j)>E_{S_{1}}+\alpha (c_k- c_j)=E_{T}.$$
Then, by resource monotonicity of the CEL rule, for each agent $i\in S_{1}\setminus\{j\}= T\setminus\{k\}$,
$$CEL_{i}(c_{T},\bar{E}_{T})\geq CEL_{i}(c_{T},E_{T})$$
and hence,
$$CEL_{i}(c_{S_{1}},E_{S_{1}})\geq CEL_{i}(c_{T},E_{T}).$$

\noindent\textit{Case~$2$.} $j\in S_{1}$ is such that $CEL_j(c_{S_{1}}, E_{S_{1}})=0$.\smallskip

\noindent Starting from problem $(c_{S_{1}},E_{S_1})\in\mathcal{C}^{S_1}$, we consider the auxiliary problem that is obtained when agent $j$ leaves with her payoff $CEL_j(c_{S_{1}}, E_{S_{1}})=0$. Thus, the total endowment is now divided between agents in coalition $S_{1}\setminus\{j\}$ and the auxiliary problem $(c_{S_{1}\setminus\{j\}},E_{S_1})\in\mathcal{C}^{S_{1}\setminus\{j\}}$ is such that $E_{S_1}\neq\alpha c^{S_{1}\setminus\{j\}}$.
By consistency of the CEL rule, for each $i\in S_{1}\setminus\{j\}$,
\begin{equation}\label{CELtheta:eq1*}
CEL_{i}(c_{S_{1}\setminus\{j\}},E_{S_{1}})=CEL_{i}(c_{S_{1}},E_{S_{1}}).\end{equation}

Next, let $T=(S_1\setminus \{j\})\cup k$ and consider problem $(c_{T},E_{T})\in\mathcal{C}^{T}$. Since $E_{T}>0$ and $c_k\geq c_{\theta}$, by the definition of the CEL rule, $CEL_k (c_{T},E_{T})=c_k-\lambda_{E_T}>0$. Then, the amount of the endowment that is allocated to agents in $T\setminus \{k\}=S_{1}\setminus \{j\}$ is \begin{equation}\label{CELtheta:eq1}
\bar{E}_{T\setminus \{k\}}=E_T - (c_k-\lambda_{E_T})= \alpha c^{T\setminus \{k\}}+\alpha c_k -(c_k-\lambda_{E_T})=\alpha c^{T\setminus \{k\}}+(\alpha-1) c_k +\lambda_{E_T}.
\end{equation}
By the definition of the CEL rule, if an agent receives a zero payoff, then agents with the highest claims receive an over-proportional payoff. Hence, $c_k-\lambda_{E_T}>\alpha c_k$, or equivalently, \begin{equation}\label{CELtheta:eq2}
\lambda_{E_T}< (1-\alpha)c_k.
\end{equation}
Starting with equality (\ref{CELtheta:eq1}) and using inequality (\ref{CELtheta:eq2}), we obtain
\begin{equation}\label{CELtheta:eq3*}\bar{E}_{T\setminus \{k\}}=\alpha c^{T\setminus \{k\}}+(\alpha-1) c_k +\lambda_{E_T}< \alpha c^{T\setminus \{k\}}+(\alpha-1) c_k +(1-\alpha) c_k=\alpha c^{T\setminus \{k\}}.
\end{equation}

Now, starting from problem $(c_{T},E_{T})\in \mathcal{C}^{T}$, we consider the auxiliary problem that is obtained when agent $k$ leaves with her payoff $CEL_k(c_{T}, E_{T})$. The auxiliary problem $(c_{T\setminus\{k\}},\bar{E}_{T\setminus \{k\}})\in \mathcal{C}^{(T\setminus \{k\})}$ is such that $\bar{E}_{T\setminus \{k\}}\neq\alpha c^{T\setminus \{k\}}$. By consistency of the CEL rule, for each $i\in T\setminus\{k\}=S_{1}\setminus\{j\}$,
\begin{equation}\label{CELtheta:eq2*}
CEL_{i}(c_{T\setminus\{k\}},\bar{E}_{T\setminus \{k\}})=CEL_{i}(c_{T},E_{T}).
\end{equation}

We will now put the above arguments together. Since $S_1\setminus\{j\}=T\setminus\{k\}$ and $\alpha c_j\geq 0$, we have $E_{S_{1}} = \alpha c^{S_{1}}\geq\alpha c^{T\setminus \{k\}}$, and by inequality~(\ref{CELtheta:eq3*}), we have that $\alpha c^{T\setminus \{k\}}>\bar{E}_{T\setminus \{k\}}=\bar{E}_{S_1\setminus \{j\}}$. Hence, $E_{S_{1}}>\bar{E}_{T\setminus \{k\}}$. This, together with inequalities (\ref{CELtheta:eq1*}), (\ref{CELtheta:eq2*}), and resource monotonicity of the CEL rule, implies that for each $i\in S_{1}\setminus\{j\}=T\setminus\{k\}$,
\begin{equation*}
CEL_{i}(c_{S_{1}},E_{S_{1}})\overset{(\ref{CELtheta:eq1*})}{=}CEL_{i}(c_{S_{1}\setminus\{j\}},E_{S_{1}})\geq CEL_{i}(c_{T\setminus\{k\}},\bar{E}_{T\setminus \{k\}})\overset{(\ref{CELtheta:eq2*})}{=}CEL_{i}(c_{T},E_{T}).
\end{equation*}

Note that so far we have only shown that any agent $i\in S_1$ weakly prefers coalition $S_1$ to a coalition $T$ where another agent was replaced by an agent outside of $S_1$. However, this proof step can be repeated by replacing other agents one by one and hence, for each agent $i\in S_1$ and each coalition  $S\subsetneq N_1$ such that [$i\in S$ and $|S|=\theta$], $$S_1 \succsim^{((c,E),CEL)}_{i} S.$$

In particular, there exists no coalition $T\subseteq N_1$ such that $|T|\leq \theta$, and for each $i\in S_1\cap T$, $T\succ^{((c,E),CEL)}_{i}S_1$, i.e., an agent from set $S_1$ cannot be part of a blocking coalition of size smaller than or equal to $\theta$ with agents from set $N_1$.
\medskip

\noindent\textbf{Step~$\bm{k}$ ($\bm{k>1}$).} Recall from Step~$k-1$ that $N_k:=N\setminus\left(\cup_{j=1}^{k-1}S_{j}\right)$ and $|N_k|> \theta$.
Set $S_k:=\{\theta k- (\theta-1), \theta k- (\theta-2), \ldots, \theta k\}$. Note that agents $\theta k- (\theta-1), \ldots, \theta k$ are lowest-claims agents in $N_{k}$. Hence, by a similar reasoning than at Step~1 (with agents $\theta k- (\theta-1), \ldots, \theta k$ in the roles of $1,\ldots,\theta$ respectively), it follows that there exists no coalition $T\subseteq N_k$ such that $|T|\leq \theta$, and for each $i\in S_k\cap T$, $T\succ^{((c,E),CEL)}_{i}S_k$, i.e., an agent from set $S_k$ cannot be part of a blocking coalition of size smaller than or equal to $\theta$ with agents from set $N_k$. By the previous steps, an agent from set $\bigcup_{j\in\{1,\ldots,k-1\}}S_j$ cannot be part of a blocking coalition with an agent from set $S_k$. Hence, no agent from set $S_k$ can be part of a blocking coalition of size smaller than or equal to $\theta$.\medskip

After $l-1$ steps, we have shown that there is no blocking coalition $T\subsetneq N$ such that $|T|\leq \theta$.\medskip

Finally, assume, by contradiction, that $\pi$ is not stable. Then, there exists a blocking coalition $T\subseteq N$ such that for each agent $i\in T$, $T \succ^{((c,E),CEL)}_{i}\pi(i)$. In particular, $|T|> \theta$.\medskip

Then, by Proposition~\ref{prop:toptheta} (Appendix~\ref{appendix:proofTheorem6}), there is a coalition $T'\subsetneq T$ with $|T'|=\theta$ such that for each $j\in T'$, $$T'\succsim^{((c,E),CEL)}_{j}T\succ^{((c,E),CEL)}_{j}\pi(j),$$ which contradicts the fact that there is no blocking coalition of size smaller than or equal to $\theta$.\medskip

We have proven that the partition $\pi$ obtained by the $\theta$-CEL algorithm is stable.\end{proof}

\section{Results when agents have single-peaked preferences}\label{appendix:surplus}

Let $N\in\mathcal{N}$ be a finite set of agents. Each agent $i\in N$ has \textbf{single-peaked preferences} $R_i$ on $\mathbb{R}_{+}$, i.e., there is a point $p(R_i)\in \mathbb{R}_{+}$ such that for each $x,y\in \mathbb{R}_{+}$, if either $y<x\leq p(R_i)$ or $p(R_i)\leq x<y$, then $x\mathbin{P_i} y$, where $P_i$ denote the strict preference relation associated with $R_i.$ The point $p(R_i)$ is called the \textbf{peak} of $R_i.$ Let $\mathcal{R}$ be the set of single-peaked preference relations. A \textbf{single-peaked problem with coalition} $\bm{N}$  is a pair $(R,E)\in \mathcal{R}^{N}\times \mathbb{R}_{+}$ where $R\in \mathcal{R}^N$ is the \textbf{preference profile} and $E\in \mathbb{R}_{+}$ is the \textbf{endowment} of coalition $N$. Let $\bm{\mathcal{SP}^{N}}$ denote the class of such problems and  $\bm{\mathcal{SP}}\equiv\bigcup_{N\in\mathcal{N}} \mathcal{SP}^{N}$. \medskip

An \textbf{allocation for} $\bm{(R,E)}\in \mathcal{SP}^{N}$ is a vector $x=(x_{i})_{i\in N}\in \mathbb{R}_{+}^{N}$ that satisfies $\sum_{i\in N}x_{i}=E$. A \textbf{rule} is a function $F$ defined on $\mathcal{SP}$ that associates with each $N\in\mathcal{N}$ and each $(R,E)\in \mathcal{SP}^{N}$ an allocation for $(R,E)$.
The \textbf{uniform rule} allocates the endowment as equally as possible by taking agents' peaks into account as follows: for excess demand problems, agents' peaks  are upper bounds and, for excess supply problems, agents' peaks are lower bounds. The uniform rule extends the CEA rule from excess demand problems (where $\sum_{i\in N}p(R_i) \leq E$) to excess supply problems (where $\sum_{i\in N}p(R_i) \geq E$). The uniform rule $U$ is defined as follows: for each $(R,E)\in \mathcal{SP}^{N}$,
$$U_i(R,E)= \left\{\
\begin{aligned}
  \min\,\{p(R_i),\lambda\} & \quad\text{ if }\sum_{i\in N}p(R_i) \geq E\quad \text{(excess demand)} \text{ and}\\
  \max\,\{p(R_i),\lambda\} & \quad\text{ if }\sum_{i\in N}p(R_i) \leq E \quad \text{(excess supply)},
\end{aligned}\right.
$$
where $\lambda$ is chosen so that $\sum_{i\in N}U_i(R,E)=E$. Note that for excess demand problems, $U=CEA$.\medskip

For excess supply problems, the \textbf{equal surplus rule} allocates the total surplus induced by the endowment  equally among agents.  The equal surplus rule extends the CEL rule from excess demand problems (where $\sum_{i\in N}p(R_i) \leq E$) to excess supply problems (where $\sum_{i\in N}p(R_i) \geq E$). The equal surplus rule $ES$ is defined as follows: for each $(R,E)\in \mathcal{SP}^{N}$ such that $\sum_{i\in N}p(R_i) \leq E$ (excess supply),
$$ES_i(R,E)= p(R_i)+\frac{E-\sum_{j\in N}p(R_j)}{|N|}.$$

Next, we generalize the notion of a single-peaked problem. Consider  $N\in\mathcal{N}$ and  $(R,E)\in\mathcal{SP}^{N}$. Then, each coalition of agents $S \subseteq N$ has the \textbf{reduced preference profile} $\bm{R_{S}}=(R_i)_{i\in S}$. Next, assume that for each coalition $S \subseteq N$  there is a \textbf{coalitional endowment} $\bm{E_{S}}$ such that $(R_S,E_{S})\in\mathcal{SP}^{S}$ and $E_{N}=E$. Formally, given $N\in\mathcal{N}$, a \textbf{generalized single-peaked problem with coalition} $\bm{N}$ is a pair $(R,(E_{S})_{S\subseteq N})\in \mathcal{R}^{N}\times \mathbb{R}_{+}^{2^{|N|}-1}$, such that for each coalition $S\subseteq N$, $(R_{S},E_{S})\in\mathcal{SP}^{S}$. Let $\bm{\mathcal{GSP}^{N}}$ denote the class of such problems and $\bm{\mathcal{GSP}}\equiv\bigcup_{N\in\mathcal{N}} \mathcal{GSP}^{N}$. \medskip

An \textbf{allocation configuration for} $\bm{(R,(E_{S})_{S\subseteq N})}\in\mathcal{GSP}^{N}$ is a list $(x_{S})_{S\subseteq N}$ such that for each $S\subseteq N$, $x_{S}$ is an allocation for the single-peaked problem derived from $(R,(E_{S})_{S\subseteq N})$ for coalition $S$, $(R_S,E_{S})$.\medskip

Now, starting from a generalized single-peaked problem (instead of a generalized claims problem) and a rule, a coalition formation problem is induced. The following example illustrates this for the uniform rule and shows that a stable partition need not exist.

\begin{example}[\textbf{CEA / uniform rule: excess demand / supply}]\normalfont
Consider $N=\{1,2,3\}$ and a preference profile $R\in\mathcal{R}$ such that $p(R)=(2,4,5)$ and agents' preferences are symmetric with respect to their peaks, i.e., for each agent $i\in N$, the distance to the peak determines her preferences. Let $(R,(E_{S})_{S\subseteq N})\in \mathcal{GSP}^{N}$ such that the coalitional endowments are given by Table~\ref{tableEx4endowments}.

\begin{table}[th]
\centering
\renewcommand{\arraystretch}{1.5}
\begin{tabular}{c||c|c|c|c|c}

Coalition & $\{1\},\ \{2\},\ \{3\}$ & $\{1,2\}$ & $\{1,3\}$ & $\{2,3\}$ & $\{1,2,3\}$
\\ \hline\hline
Endowment & $0$  & $7$ & $6$ & $11$ & $21$ \\
\end{tabular}%
\caption{Coalitions and their endowments.}
\label{tableEx4endowments}
\end{table}

For instance, coalition $\{1,2\}$, which claims $6$, gets an endowment of $7$ and has excess supply. However, coalition $\{1,3\}$, which claims $7$, gets only $6$ and has excess demand. The individual payoffs of agents in each possible coalition are listed in Table~\ref{Table:uniformpayoffs}.

\begin{table}[th]
\centering
\renewcommand{\arraystretch}{1.5}
\begin{tabular}{c||c|c|c|c|c}

Coalition & $\{1\},\ \{2\},\ \{3\}$ & $\{1,2\}$ & $\{1,3\}$ & $\{2,3\}$ & $\{1,2,3\}$
\\ \hline\hline
 CEA / uniform rule payoffs & $(0)$ & $(3,4)$ & $(2,4)$ & $(5.5,5.5)$ & $(7,7,7)$  \\
\end{tabular}%
\caption{Agents' coalitional payoffs induced by the CEA / uniform rule.}\label{Table:uniformpayoffs}
\end{table}

Given that each agent's preferences are symmetric and only depend on the distance of the payoffs from the peak, agents have the following preferences over coalitions.
\begin{align*}
\succsim_1^{CEA/U}:&\quad\{1,3\}\succ \{1,2\}\succ\{1\}\succ\{1,2,3\},\\
\succsim_2^{CEA/U}:&\quad\{1,2\}\succ \{2,3\}\succ\{1,2,3\}\succ\{2\},\\
\succsim_3^{CEA/U}:&\quad\{2,3\}\succ\{1,3\}\succ\{1,2,3\}\succ\{3\}.
\end{align*}
Observe that no stable partition exists for these preferences over coalitions.\label{example:uniformrule}\hfill $\diamond$
\end{example}

Similarly, the following example shows that a stable partition need not exist for the CEL / equal surplus rule.

\begin{example}[\textbf{CEL / equal surplus rule: excess demand / supply}]\normalfont
Consider $N=\{1,2,3\}$ and a preference profile $R\in\mathcal{R}$ such that $p(R)=(2,7,18)$ and agents' preferences are symmetric with respect to their peaks, i.e., for each agent $i\in N$, the distance to the peak determines her preferences. Let $(R,(E_{S})_{S\subseteq N})\in \mathcal{GSP}^{N}$ such that the coalitional endowments are given by Table~\ref{tableEx5endowments}.

\begin{table}[th]
\centering
\renewcommand{\arraystretch}{1.5}
\begin{tabular}{c||c|c|c|c|c|c|c}

Coalition & $\{1\}$ & $\{2\}$& $\{3\}$ & $\{1,2\}$ & $\{1,3\}$ & $\{2,3\}$ & $\{1,2,3\}$
\\ \hline\hline
Endowment & $7$ & $0$ & $0$ & $15$ & $10$ & $13$ & $54$   \\
\end{tabular}%
\caption{Coalitions and their endowments.}
\label{tableEx5endowments}
\end{table}

For instance, coalition $\{1,2\}$, which claims $9$, gets an endowment of $15$ and has excess supply. However, coalition $\{2,3\}$, which claims $25$, gets only $13$ and has excess demand. The individual payoffs of agents in each possible coalition are listed in Table~\ref{Table:equalsurpluspayoffs}.

\begin{table}[th]
\centering
\renewcommand{\arraystretch}{1.5}
\begin{tabular}{c||c|c|c|c|c|c|c}

Coalition & $\{1\}$& $\{2\}$& $\{3\}$ & $\{1,2\}$ & $\{1,3\}$ & $\{2,3\}$ & $\{1,2,3\}$
\\ \hline\hline
 CEL / equal surplus rule payoffs & $(7)$ & $(0)$ & $(0)$& $(5,10)$ & $(0,10)$ & $(1,12)$ & $(11,16,27)$ \\
\end{tabular}%
\caption{Agents' coalitional payoffs induced by the CEL / equal surplus rule.}\label{Table:equalsurpluspayoffs}
\end{table}

Given that each agent's preferences are symmetric and only depend on the distance of the payoffs from the peak, agents have the following preferences over coalitions.
\begin{align*}
\succsim_1^{CEL/ES}:&\quad\{1,3\}\succ \{1,2\}\succ\{1\}\succ\{1,2,3\},\\
\succsim_2^{CEL/ES}:&\quad\{1,2\}\succ \{2,3\}\succ\{2\}\succ\{1,2,3\},\\
\succsim_3^{CEL/ES}:&\quad\{2,3\}\succ\{1,3\}\succ\{1,2,3\}\succ\{3\}.
\end{align*}
Observe that no stable partition exists for these preferences over coalitions.\label{example:equalsurplusrule}\hfill $\diamond$
\end{example}

Similarly as for generalized claims problems, we consider a subclass of generalized single-peaked problems. First, for each coalition $S\subseteq N$, we define a \textbf{coalitional peak}, $\bm{p(R^{S})}$, that is equal to the sum of the peaks of the members of the coalition, i.e., $p(R^{S}):=\sum_{j\in S}p(R_j)$. Then, given $(R,E)\in\mathcal{SP}^{N}$ and $\alpha:=\frac{E}{p(R)}\in \mathbb{R}_{++}$, a $\bm{\theta}$\textbf{-minimal proportional generalized single-peaked problem} is a tuple $(R,(E_{S})_{S\subseteq N})$ such that, for each coalition $S\subseteq N$, $S\neq\emptyset$, with $|S|<\theta$, $E_{S}=0$, and for each coalition $S\subseteq N$ with $|S|\geq \theta$, $E_{S}=\alpha p(R^{S})$. Let $\bm{\mathcal{PSP}_{\theta}^{N}}$ denote the class of such problems and $\bm{\mathcal{PSP}_{\theta}}\equiv\bigcup_{N\in\mathcal{N}} \mathcal{PSP}_{\theta}^{N}$. Since  coalitional endowments $E_{S}$, $S\subseteq N$, are completely determined by $\theta$, $R$, and $E$, we will simplify notation and denote a $\theta$-minimal proportional generalized single-peaked problem $(R,(E_{S})_{S\subseteq N})\in \mathcal{PSP}_{\theta}^N$ by $(R,E)\in \mathcal{PSP}_{\theta}^N$.\medskip

For the class of $\theta$-minimal proportional generalized single-peaked problems with excess demand, all results for the CEA and the CEL rule translate without adjustments.
We next show how algorithmic results for the CEA and the CEL rule extend from excess demand to excess supply.

\subsection*{Excess supply: Stability under the uniform rule}

\noindent\textbf{Uniform algorithm for $\bm{\theta=2}$}\medskip

\noindent\textbf{Input:} $N\in\mathcal{N}$ such that $|N|>2$ and $(R,E)\in \mathcal{PSP}_{2}^N$ such that $p(R_1)\leq p(R_2)\leq \cdots\leq p(R_{|N|})$.\medskip

\noindent\textbf{Step~$\bm{1}$.} Let $N_1:=N$. Set $S_1:=\{1,2\}$ and $N_2:=N\setminus S_1$. If $|N_2|\leq 2$, then set $S_2:=N_2$, define $\pi:=\{S_1,S_2\}$, and stop. Otherwise, go to Step~2.\medskip

\noindent\textbf{Step~$\bm{k}$ ($\bm{k>1}$).} Recall from Step~$k-1$ that $N_k:=N\setminus\left(\cup_{j=1}^{k-1}S_{j}\right)$ and $|N_k|> 2$.
Set $S_k:=\{2k-1,2k\}$ and $N_{k+1}:=N\setminus \cup_{j=1}^{k}S_j$. If $|N_{k+1}|\leq 2$, then set $S_{k+1}:=N_{k+1}$, define $\pi:=\{S_1,\ldots,S_{k+1}\}$, and stop. Otherwise, go to Step~$k+1$.\medskip

\noindent\textbf{Output:} A partition $\pi=\{S_1,\ldots,S_l\}$ for the coalition formation problem with agent set $N$ induced by $((R,E),U)$ such that for each $k\in\{1,\dots,l-1\}$, $|S_{k}|= 2$ and $|S_{l}|\leq 2$. If $|N|$ is even, then partition $\pi$ is constructed in $l-1=\frac{n-2}{2}$ steps. If $|N|$ is odd, then partition $\pi$ is constructed in $l-1=\frac{n-1}{2}$ steps.\medskip

Note that the uniform algorithm forms coalitions by sequentially pairing up agents with lowest peaks, which results in sequentially lowest per capita endowments. The uniform rule allocates the coalitional endowments as equally as possible, taking the agents' peaks as lower bounds. If an agents receives her peak at the partition obtained by the uniform algorithm, then she cannot block. Hence, only agents who don't receive their peak could potentially block. However, note that in such a blocking pair (since the uniform algorithm sequentially pairs agents with lowest per capita endowments), the lower (or possibly equal) peak agent in the blocking pair would have a higher per capita endowment and receive a larger uniform payoff as before, making her worse off. Thus, the partition obtained by the uniform algorithm is stable. We state this result without a formal proof.

\begin{theorem}\label{UG}
Let  $\theta=2$, $N\in{\cal N}$ such that $|N|>2$, and $(R,E)\in \mathcal{PSP}_{2}^N$ with $\alpha>1$. Consider the coalition formation problem with agent set $N$ induced by $((R,E), U)$. Then, the partition obtained by the uniform algorithm is stable.
\end{theorem}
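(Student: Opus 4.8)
The plan is to follow the two-tier architecture of the proof of Theorem~\ref{theorem:CELtheta}: first rule out blocking coalitions of size at most $\theta=2$ by walking along the steps of the uniform algorithm, and then lift this to arbitrary blocking coalitions through an excess-supply analogue of Proposition~\ref{prop:toptheta}. The whole argument rests on the fact that the uniform rule is continuous, resource monotonic, and consistent, so the structural results underlying Proposition~\ref{prop:toptheta} remain available. Throughout I would use that for excess supply every agent receives at least her peak, $U_i(R_S,E_S)=\max\{p(R_i),\lambda_{E_S}\}\ge p(R_i)$, so that on the relevant side of the peak a smaller coalitional parameter $\lambda_{E_S}$ is weakly preferred by each member.

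The reduction step (the analogue of Proposition~\ref{prop:toptheta}) is the dual of dropping an over-claim agent in the claims setting. Given $S$ with $|S|>\theta$, I would iteratively remove a highest-peak agent $m$ of the current coalition. Such an agent sits exactly at her peak ($p(R_m)\ge\lambda_{E_S}$), so she is already at her ideal outcome and indifferent to leaving; moreover her removal strictly lowers the uniform parameter, since she contributes $\alpha\,p(R_m)$ to the endowment but withdraws only her peak $p(R_m)<\alpha\,p(R_m)$, leaving the remaining members relatively less endowment. As $\lambda$ strictly decreases, every remaining low-peak agent's payoff $\max\{p(R_i),\lambda\}$ moves weakly toward her peak while every high-peak agent stays at her peak; hence each remaining member is weakly better off. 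This is the consistency-and-resource-monotonicity argument of Proposition~\ref{prop:toptheta} read in the excess-supply direction. Repeating until size $\theta$ yields $S'\subsetneq S$ with $|S'|=\theta$ and $S'\succsim^{((R,E),U)}_{j}S$ for all $j\in S'$.

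To exclude blocking coalitions of size two I would mirror the step-by-step part of the proof of Theorem~\ref{theorem:CELtheta}. At Step~$k$ the algorithm forms $S_k$ from the two lowest-peak agents of $N_k$, which realizes the smallest possible per-capita endowment, equivalently the smallest $\lambda_{E_{S_k}}$, among all pairs its members can form within $N_k$. A high-peak member sitting at her peak is at her ideal and never strictly prefers another coalition; for the lower-peak member, any alternative partner in $N_k$ has a weakly larger peak, hence yields a weakly larger $\lambda$ and, since she is above her peak, no strict improvement. As in Theorem~\ref{theorem:CELtheta}, an agent placed in an earlier (lower-peak) pair cannot profitably regroup with a later agent, so no pair blocks $\pi$; together with the reduction step this rules out every blocking coalition of size at least $\theta$.

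The step I expect to be the genuine obstacle is the treatment of deviations to coalitions of size strictly below $\theta$, that is, an agent leaving to form a singleton. In the $\theta$-minimal model such an agent receives $0$, which lies \emph{below} her peak, whereas in her algorithm pair she is pushed to $\lambda\ge p(R_i)$, \emph{above} her peak; single-peakedness by itself does not order these two outcomes, so this comparison crosses the peak and is not settled by the per-capita monotonicity used for pairs. Closing it rigorously appears to require an extra ingredient, for instance a bound guaranteeing that the lowest-peak agents are never pushed beyond twice their peak, or a restriction to preference profiles in which any at-or-above-peak outcome is weakly preferred to $0$. I would isolate and discharge this cross-peak singleton comparison first, since the remaining pair and reduction arguments are a direct dualization of the excess-demand case.
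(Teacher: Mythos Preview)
Your approach for pairs is exactly the paper's informal argument: the paper states the result without a formal proof and only sketches that an agent at her peak cannot block, while for an agent strictly above her peak the lower-peak member of any candidate blocking pair would face a weakly higher per-capita endowment and hence a weakly larger $\lambda$, which cannot be a strict improvement. Your Proposition~\ref{prop:toptheta}-style reduction for coalitions of size larger than two is not in the paper's sketch but is indeed what one needs; one small correction is that the highest-peak agent $m$ need not sit at her peak (take all peaks equal, then everyone receives $\lambda=\alpha\bar p>p(R_m)$). The reduction still works because $m$ always has a weakly under-proportional payoff, $U_m(R_S,E_S)=\max\{p(R_m),\lambda_{E_S}\}\le\alpha\,p(R_m)$, so $E_{S\setminus\{m\}}=E_S-\alpha\,p(R_m)\le E_S-U_m(R_S,E_S)$, and by consistency together with the excess-supply monotonicity of $U$ the remaining agents are weakly better off in $S\setminus\{m\}$.

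The cross-peak singleton deviation you flag is a genuine gap, and it is a gap in the paper's sketch as well, not only in your write-up. Consider $N=\{1,2,3\}$ with peaks $(1,2,3)$, $\alpha=10$, and symmetric single-peaked preferences; the algorithm forms $\{1,2\}$ with $\lambda=15$, so agent~$1$ receives $15$ while $\{1\}$ yields $0$, and with symmetric preferences $0$ is strictly preferred to $15$, so $\{1\}$ blocks. The paper's sketch only treats blocking \emph{pairs} and is silent on singletons, so the informal argument does not cover this case either. Your diagnosis that an additional hypothesis is needed (for instance that any allocation at or above the peak is weakly preferred to $0$, or a bound on $\alpha$) is therefore on target; without such a hypothesis the statement as written does not go through.
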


Note that the uniform algorithm can easily be extended to $\theta >2$ by sequentially matching agents with lowest peaks into $\theta$-size coalitions (with possibly one coalition of largest peak agents of size smaller than $\theta$). It follows similarly as before that the resulting partition is stable.

\subsection*{Excess supply: Stability under the equal surplus rule}

\noindent\textbf{Equal surplus algorithm for $\bm{\theta=2}$}\medskip

\noindent\textbf{Input:} $N\in\mathcal{N}$ such that $|N|>2$ and $(R,E)\in \mathcal{PSP}_{2}^N$ such that $p(R_1)\leq p(R_2)\leq \cdots\leq p(R_{|N|})$.\medskip

\noindent\textbf{Step~$\bm{1}$.} Let $N_1:=N$. Set $S_1:=\{1,2\}$ and $N_2:=N\setminus S_1$. If $|N_2|\leq 2$, then set $S_2:=N_2$, define $\pi:=\{S_1,S_2\}$, and stop. Otherwise, go to Step~2.\medskip

\noindent\textbf{Step~$\bm{k}$ ($\bm{k>1}$).} Recall from Step~$k-1$ that $N_k:=N\setminus\left(\cup_{j=1}^{k-1}S_{j}\right)$ and $|N_k|> 2$.
Set $S_k:=\{2k-1,2k\}$ and $N_{k+1}:=N\setminus \cup_{j=1}^{k}S_j$. If $|N_{k+1}|\leq 2$, then set $S_{k+1}:=N_{k+1}$, define $\pi:=\{S_1,\ldots,S_{k+1}\}$, and stop. Otherwise, go to Step~$k+1$.\medskip

\noindent\textbf{Output:} A partition $\pi=\{S_1,\ldots,S_l\}$ for the coalition formation problem with agent set $N$ induced by $((R,E), ES)$ such that for each $k\in\{1,\dots,l-1\}$, $|S_{k}|= 2$ and $|S_{l}|\leq 2$. If $|N|$ is even, then partition $\pi$ is constructed in $l-1=\frac{n-2}{2}$ steps. If $|N|$ is odd, then partition $\pi$ is constructed in $l-1=\frac{n-1}{2}$ steps.\medskip

Note that the equal surplus algorithm forms coalitions by sequentially pairing up agents with lowest peaks, which results in sequentially lowest per capita surpluses. The equal surplus rule adds these per capita surpluses to agents' peaks. This immediately implies that the partitions obtained by the equal surplus algorithm cannot be strictly blocked by another partition (which would need to have smaller per capita surpluses to strictly block). Thus, the partition obtained by the equal surplus algorithm is stable. We state this result without a formal proof.

\begin{theorem}\label{ES}
Let $\theta=2$, $N\in{\cal N}$ such that $|N|>2$, and $(R,E)\in \mathcal{PSP}_{2}^N$ with $\alpha>1$. Consider the coalition formation problem with agent set $N$ induced by $((R,E),ES)$. Then, the partition obtained by the equal surplus algorithm is stable.
\end{theorem}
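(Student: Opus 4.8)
The plan is to transplant the architecture of the proof of Theorem~\ref{theorem:CELtheta} into the excess-supply world, where the equal surplus rule takes an especially transparent form. Since $\alpha>1$, every coalition $S$ with $|S|\geq\theta=2$ has $E_S=\alpha\,p(R^S)>p(R^S)$ and is therefore an excess-supply problem, so each member $i\in S$ receives $ES_i(R_S,E_S)=p(R_i)+\sigma_S$, where $\sigma_S:=(\alpha-1)\,p(R^S)/|S|$ is the \emph{common per-capita surplus} of $S$. The first step is to record the resulting preference structure: because every such payoff lies weakly above agent $i$'s peak, single-peakedness implies that, between any two coalitions of size at least $\theta$ both containing $i$, agent $i$ strictly prefers the one with the smaller per-capita surplus $\sigma$. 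Thus all members of a coalition agree on ranking coalitions by $\sigma$, and $\sigma$ plays exactly the bookkeeping role that the transfers and subsidies play in the CEA/CEL proofs. I would then observe that the partition $\pi=\{S_1,\dots,S_l\}$ produced by the equal surplus algorithm has non-decreasing per-capita surpluses across its size-$\theta$ coalitions, since the agents are paired in order of increasing peak and hence consecutive pairs have non-decreasing average peak.

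Second, I would prove the analogue of Proposition~\ref{prop:toptheta} for $\mathcal{PSP}_\theta$ and $ES$: if $|S|>\theta$, then deleting a highest-peak member of $S$ weakly lowers its average peak (a maximum is at least the mean), hence weakly lowers $\sigma_S$, and therefore makes every surviving agent weakly better off, as her payoff stays above the peak but moves closer to it. Iterating down to size $\theta$ yields $S'\subsetneq S$, $|S'|=\theta$, with $S'\succsim^{((R,E),ES)}_{j}S$ for every $j\in S'$. Exactly as in Theorem~\ref{theorem:thetasize} and Theorem~\ref{theorem:CELtheta}, this reduces the task to showing that $\pi$ admits no blocking coalition of size at most $\theta=2$.

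Third, I would dispatch the size-$2$ case by a parity argument on sorted positions. If $T=\{a,b\}$ with sorted positions $m<M$ were a block, then $\sigma_T<\sigma_{\pi(a)}$ and $\sigma_T<\sigma_{\pi(b)}$; writing each of $\sigma_{\pi(a)},\sigma_{\pi(b)}$ through the algorithmic partner (position $m\pm 1$ or $M\pm 1$ according to parity) and using $p(R_1)\leq\cdots\leq p(R_{|N|})$, each case ($m$ even; $M$ odd; or $m$ odd and $M$ even) collapses to an inequality such as $p(R_M)<p(R_{m+1})$ or $p(R_m)<p(R_{M+1})$ that contradicts sortedness. The only escape is $T=\pi(a)$, which yields indifference rather than the strict preference a block requires.

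I expect the main obstacle to be the treatment of the zero-endowment singleton coalitions rather than the per-capita-surplus comparison itself. A lone agent can always contemplate deviating to $\{i\}$, which yields payoff $0$; since $0$ lies below the peak while every pair payoff lies (strictly) above it, single-peakedness alone does not rank these two outcomes, so the $\sigma$-monotonicity argument only rules out blocks by coalitions of size at least $\theta=2$. Note that the ``$\theta$-minimal'' device of assigning zero endowment to small coalitions disincentivizes them in the excess-demand setting (where $0$ is the worst payoff) but not uniformly in the excess-supply setting. The substantive content here is stability against the larger deviations, which the per-capita-surplus argument delivers cleanly; ruling out the size-one deviations requires an additional hypothesis on preferences—for instance the symmetric single-peaked preferences used in the examples together with surplus shares small enough to keep all pair payoffs nearer the peak than $0$—and I would state that restriction explicitly at the outset.
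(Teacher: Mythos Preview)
Your core argument coincides with the paper's: the paper states Theorem~\ref{ES} explicitly ``without a formal proof'' and justifies it by exactly the per-capita surplus observation you formalize---the equal surplus rule gives each member $p(R_i)+\sigma_S$ with $\sigma_S=(\alpha-1)p(R^S)/|S|$, the algorithm pairs lowest peaks first so that $\sigma_{S_1}\le\sigma_{S_2}\le\cdots$, and any would-be blocking coalition of size at least two would need a strictly smaller $\sigma$, which is impossible. Your reduction from large blocks to size-$\theta$ blocks via a Proposition~\ref{prop:toptheta} analogue and your parity argument on sorted positions are more detail than the paper supplies, but they are the natural way to make the one-paragraph sketch rigorous.

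Where you go beyond the paper is in your final paragraph, and the point is well taken. The paper's informal justification only covers deviations to coalitions of size at least $\theta$; it is silent on whether an agent $i$ in some $\pi(i)$ of size two might strictly prefer the singleton payoff $0$ (which lies \emph{below} $p(R_i)$) to $p(R_i)+\sigma_{\pi(i)}$ (which lies \emph{above}). Single-peakedness alone does not rank these, so without an extra hypothesis---e.g.\ the symmetric preferences used in Examples~\ref{example:uniformrule} and \ref{example:equalsurplusrule}, together with $\sigma_{\pi(i)}\le p(R_i)$---a singleton block is not excluded. This is a genuine lacuna in the paper's treatment rather than in your proof plan; stating the needed side condition explicitly, as you propose, is the right fix.
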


Note that the equal surplus algorithm can easily be extended to $\theta >2$ by sequentially matching agents with lowest peaks into $\theta$-size coalitions (with possibly one coalition of largest peak agents of size smaller than $\theta$). It follows similarly as before that the resulting partition is stable.

\begin{remark} \textbf{\emph{(Excess supply under monotonic preferences)}} \normalfont Let us consider a $\theta$-minimal proportional generalized claims problem with $\alpha>1$, i.e., the excess supply case of our model. Furthermore, let us assume that agents have monotonic preferences, i.e., the larger the payoff, the better. Then, similarly to the model with single-peaked preferences, the CEA and CEL rules are extended to the uniform rule and the equal surplus rule, respectively (in the definition of both rules, peaks are replaced by claims). By a similar reasoning to the model with single-peaked preferences, both the uniform and the equal surplus algorithms give rise to a stable partition formed by
sequentially matching agents with highest claims into $\theta$-size coalitions (with possibly one coalition of lowest-claims agents of size smaller than $\theta$). The difference to our results with single-peaked preferences (where smaller payoffs towards the peaks are better) is that now positively assortative coalitions are formed starting with high-claim agents instead of with low-peak agents.\end{remark}

\end{appendix}



\end{document}